 \newcommand{\bs}{\bigskip}
 \newcommand{\n}{\noindent}
 \newcommand{\s}{\smallskip}
 \newcommand{\hs}[1]{\hspace*{ #1 mm}}
 \newcommand{\vs}[1]{\vspace*{ #1 mm}}
 \newcommand{\nat}{\mathbb{N}}
 \newcommand{\integer}{\mathbb{Z}}
 \newcommand{\prob}{{\mathrm{Prob}}}
 \newcommand{\co}{\mathrm{co}\mbox{-}}
 \newcommand{\CC}{{\cal C}}
 \newcommand{\FF}{{\cal F}}
 \newcommand{\GG}{{\cal G}}
 \newcommand{\reg}{\mathrm{REG}}
 \newcommand{\cfl}{\mathrm{CFL}}
 \newcommand{\ucfl}{\mathrm{UCFL}}
 \newcommand{\dcfl}{\mathrm{DCFL}}
\theoremstyle{plain}
 \newtheorem{theorem}{Theorem}[section]
 \newtheorem{lemma}[theorem]{Lemma}
 \newtheorem{proposition}[theorem]{{\bf Proposition}}
 \newtheorem{corollary}[theorem]{Corollary}
  \newtheorem{definition}[theorem]{Definition}}
 \newtheorem{claim}{Claim}
 \newenvironment{proof}{\par \noindent
            {\bf Proof. \hs{2}}}{\hfill$\Box$ \vspace*{3mm}}
 \newenvironment{yproof}{\par \noindent
            {\bf Proof. \hs{2}}}{\hfill$\Box$ \vspace*{3mm}}
 \newenvironment{proofsketchof}[1]{\vspace*{5mm} \par \noindent
            {\bf Proof Sketch of #1.\hs{2}}}{\hfill$\Box$ \vspace*{3mm}}
 \newenvironment{proofof}[1]{\vspace*{5mm} \par \noindent
         {\bf Proof of #1.\hs{2}}}{\hfill$\Box$ \vspace*{3mm}}
\newcommand{\ignore}[1]{}
\newcommand{\track}[2]{[\: \begin{subarray}{c} #1 \\%
      #2 \end{subarray} ]}
\newcommand{\cent}{{|}\!\!\mathrm{c}}
\newcommand{\dollar}{\$}
\newcommand{\bpcfl}{\mathrm{BPCFL}}
\newcommand{\pcfl}{\mathrm{PCFL}}
\newcommand{\twopfa}{2\mathrm{PFA}}
\newcommand{\tworfa}{2\mathrm{RFA}}
\newcommand{\oneppda}{1\mathrm{PPDA}}
\newcommand{\onerpda}{1\mathrm{RPDA}}
\newcommand{\rcfl}{\mathrm{RCFL}}
\newcommand{\rtnfamv}{\mathrm{rtNFAMV}}
\newcommand{\onelpa}{1\mbox{-}\mathrm{LPA}}
\newcommand{\twolpa}{2\mbox{-}\mathrm{LPA}}
\newcommand{\twolda}{2\mbox{-}\mathrm{LDA}}
\newcommand{\onelbpa}{1\mbox{-}\mathrm{LBPA}}
\newcommand{\twolbpa}{2\mbox{-}\mathrm{LBPA}}
\newcommand{\onelra}{1\mbox{-}\mathrm{LRA}}
\newcommand{\twolra}{2\mbox{-}\mathrm{LRA}}
\newcommand{\klda}[1]{#1 \mbox{-}\mathrm{LDA}}
\newcommand{\klna}[1]{#1 \mbox{-}\mathrm{LNA}}
\newcommand{\klpa}[1]{#1 \mbox{-}\mathrm{LPA}}
\newcommand{\klra}[1]{#1 \mbox{-}\mathrm{LRA}}
\newcommand{\klbpa}[1]{#1 \mbox{-}\mathrm{LBPA}}
\newcommand{\klua}[1]{#1 \mbox{-}\mathrm{LUA}}
\newcommand{\mmid}{\!\mid\!}
\begin{document}

\pagestyle{plain}
\pagenumbering{arabic}
\setcounter{page}{1}

\begin{center}
{\Large {\bf Behavioral Strengths and Weaknesses of Various Models of Limited Automata}}\footnote{This current paper completes and corrects its  preliminary version, which appeared in the Proceedings of the 45th International Conference on Current Trends in Theory and Practice of Computer Science (SOFSEM 2019), Nov\'{y} Smokovec, Slovakia, January 27--30, 2019, Lecture Notes in Computer Science, Springer, vol. 11376, pp. 519--530, 2019.} \bs\\
{\sc Tomoyuki Yamakami}\footnote{Current Affiliation: Faculty of Engineering, University of Fukui, 3-9-1 Bunkyo, Fukui 910-8507,  Japan} \bs\\
\end{center}


\begin{abstract}
We examine the behaviors of various models of $k$-limited  automata, which naturally extend Hibbard's [Inf. Control, vol. 11, pp. 196--238, 1967] scan limited automata, each of which is a single-tape linear-bounded automaton satisfying the $k$-limitedness requirement that the content of each tape cell should be modified only during the first $k$ visits of a tape head.
One central computation model is a probabilistic $k$-limited automaton (abbreviated as a $k$-lpa), which  accepts an input exactly when its accepting states are reachable from its  initial state with probability more than 1/2 within expected polynomial time. We also  study the behaviors of one-sided-error and bounded-error variants of such $k$-lpa's as well as the deterministic, nondeterministic, and unambiguous models of $k$-limited automata, which can be viewed as natural restrictions of $k$-lpa's.
We discuss fundamental properties of these machine models and obtain inclusions and separations among language families induced by them.
In due course, we study special features---the blank skipping property and the closure under reversal---which are keys to the robustness of $k$-lpa's.

\s
\n{\bf Keywords.}
{limited  automata, pushdown automata, probabilistic computation, bounded-error probability, one-sided error, blank skipping property, reversal}
\end{abstract}

\sloppy

\section{Historical Background, Motivations, and Chellenges}\label{sec:introduction}

\subsection{Limited  Automata}\label{sec:ppda-intro}

Over the past 6 decades, automata theory has made a remarkable progress to unearth hidden structures and properties of various types of finite-state-controlled machines, including fundamental computation models of finite(-state) automata and one-way pushdown automata.

In an early period of the development of automata theory, Hibbard \cite{Hib67} introduced a then-novel rewriting system of so-called \emph{scan limited automata} in hope of characterizing context-free and deterministic context-free languages by direct simulations of their underlying one-way pushdown automata. Unfortunately, Hibbard's model seems to have been paid little attention until Pighizzini and Pisoni \cite{PP14,PP15} reformulated the model from a modern-machinery perspective and reproved a characterization theorem of Hibbard in a more sophisticated manner.
A \emph{$k$-limited automaton},\footnote{Hibbard's original formulation of ``$k$-limited automaton'' is equipped with a semi-infinite tape that stretches  only to the right with no endmarker but is filled with the blank symbols outside of an input string. Our definition in this paper is different from Hibbard's but it is rather similar to Pighizzini and Pisoni's \cite{PP14,PP15}.} for each fixed index $k\geq0$, is roughly  a one-tape (or a single-tape) Turing machine\footnote{A single-tape Turing machine model was discussed in the past literature, including \cite{Hen65,TYL10}.}
whose tape head is allowed to rewrite each tape cell between two endmarkers only during the first $k$ scans or visits (except that, whenever a tape head makes a ``turn,'' we count this move as ``double'' visits).
after the $k$ visits to a tape cell, the last symbol in the tape cell becomes unrewritable and frozen forever.
Although these automata can be viewed as a special case of linear-bounded finite automata, the restriction on the number of times that they rewrite tape symbols brings in quite distinctive effects on the computational power of the underlying automata, different from other restrictions, such as upper bounds on the  numbers of nondeterministic choices or the number of tape-head turns.
Hibbard conducted an intensive study on deterministic and nondeterministic behaviors of $k$-limited automata. In his study, he discovered that \emph{nondeterministic $k$-limited automata} (abbreviated as $k$-lna's) for $k\geq2$ are exactly as powerful as 1npda's, whereas 1-lna's are equivalent in power to 2-way deterministic finite automata (or 2dfa's) \cite{WW86}.
This gives natural characterizations of 1npda's and 2dfa's in terms of access-controlled Turing machines.

Another close relationship was proven by Pighizzini and Pisoni \cite{PP15} between \emph{deterministic $k$-limited automata} (or $k$-lda's) and one-way deterministic pushdown automata (or 1dpda's).
In fact, they proved that $2$-lda's embody exactly the power of 1dpda's. This equivalence in computational complexity contrasts  Hibbard's  observation that, for each index $k\geq2$, $(k+1)$-lda's in general possess more recognition power than $k$-lda's.
These phenomena exhibit a clear structural difference between determinism and nondeterminism on the machine model of ``limited  automata'' and such a difference naturally raises an important question of whether other variants of limited  automata can match their corresponding models of one-way pushdown automata in computational power.

\subsection{Extension to Probabilistic and Unambiguous Computations}

Lately, a computation model of \emph{one-way probabilistic pushdown automata} (or 1ppda's) has been discussed extensively to demonstrate computational strengths as well as weaknesses in \cite{HS10,KGF97,MO98,Yam17}.

While nondeterministic computation is purely a theoretical notion, probabilistic computation could be implemented in real life by installing a mechanism of generating (or sampling) random bits (e.g., by  flipping fair or biased coins).
From a generic viewpoint, deterministic and nondeterministic computations can be seen merely as restricted variants of probabilistic computation by appropriately defining the criteria for ``error probability'' of computation.
A \emph{bounded-error} probabilistic machine makes error probability bounded away from $1/2$, whereas an \emph{unbounded-error} probabilistic machine allows error to take arbitrarily close to probability $1/2$.

In many cases, a probabilistic approach helps us solve a target mathematical problem  algorithmically faster, and probabilistic (or randomized) computation often exhibits its superiority over its deterministic counterpart even on simple machine models.
For example, as Freivalds \cite{Fre81b} demonstrated,
2-way probabilistic finite automata (or 2pfa's) running in expected exponential time can recognize non-regular languages with bounded-error probability \cite{Fre81b}. By contrast, when restricted to expected subexponential runtime, bounded-error 2pfa's recognize only regular languages \cite{DS92,KF90}.
As this example shows, the expected runtime bounds of probabilistic machines largely affect the computational power of the machines, and thus its probabilistic behaviors   significantly differ from deterministic behaviors. In many studies, the runtime of machines is limited to expected polynomial time.
Probabilistic variants of pushdown automata were discussed intensively by Hromkovi\v{c} and Schnitger \cite{HS10} as well as Yamakami \cite{Yam17}. They demonstrated clear differences in computational power between two pushdown models, 1npda's and 1ppda's.

The aforementioned usefulness of probabilistic algorithms motivates us to take a probabilistic approach toward Hibbard's original model of $k$-limited automata. When $k$-limited automata are allowed to err, these machines are naturally expected to exhibit significantly better performance in computation. This paper in fact aims at introducing a novel model of \emph{probabilistic $k$-limited automata} (or $k$-lpa's) and their natural variants, including one-sided-error, bounded-error, and unbounded-error models restricted to expected polynomial running time,  and to explore their fundamental properties to obtain strengths and
weaknesses of families of languages
recognized by those machine models.
Since $k$-lda's and $k$-lna's are viewed as special cases of $k$-lpa's, many properties of $k$-lda's and $k$-lna's can be discussed within a wider framework of $k$-lpa's.

\emph{Unambiguity} has been paid special attention in formal languages and automata theory. The unambiguity for finite automata infers a unique accepting computation path if any. Stearns and Hunt III \cite{SH85}, for instance, demonstrated that, for converting a nondeterministic finite automaton into an equivalent finite machine, an unambiguous finite automaton  performs no better than any deterministic finite automaton.
For unambiguous context-free languages, there are known efficient algorithms to parse given words of the languages.
Herzog \cite{Her97} further generalized this notion for pushdown automata and discussed the amount of ambiguity. With the use of polynomial-size Karp-Lipton advice, as Reinhardt and Allender \cite{RA00} demonstrated,  for logarithmic-space auxiliary pushdown automata, it is possible to  make nondeterministic computation unambiguous.
In this paper, we also discuss \emph{unambiguous $k$-limited automata} (abbreviated as $k$-lua) as an unambiguous model of $k$-lna's. We wish to ask what relationships are met between pushdown automata and $2$-limited automata of the same machine types.

Let us introduce the notation for language families induced by the aforementioned models of $k$-limited automata. The notation $\klpa{k}$ denotes the family of all languages recognized by expected-polynomial-time $k$-lpa's with unbounded-error probability. In a way similar to $\klpa{k}$, one-sided-error and bounded-error models of $k$-lpa's induce $\klra{k}$ and $\klbpa{k}$, respectively. Furthermore, $\klda{k}$, $\klna{k}$, and $\klua{k}$ are obtained from $\klpa{k}$ respectively by replacing $k$-lpa's with $k$-lda's, $k$-lna's, and $k$-lua's. Containment relations among those language families are summarized in Fig.~\ref{fig:class-hierarchy}.


\begin{figure}[t]
\centering
\includegraphics*[height=5.0cm]{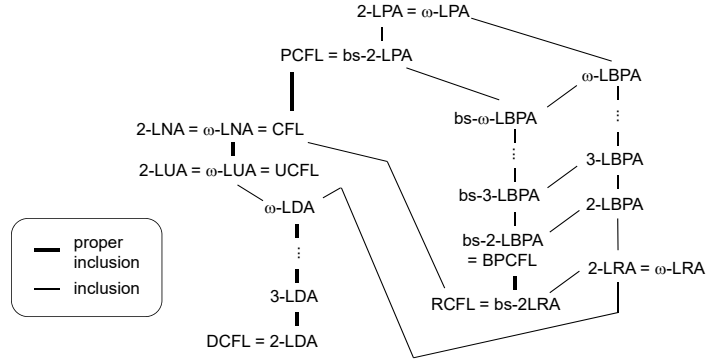}
\caption{Containment relations among language families discussed in this paper. Each upper family contains its lower family except for $\klda{\omega}\subseteq \klra{\omega}$.}\label{fig:class-hierarchy}
\end{figure}


\paragraph{Organization of This Work}\label{sec:major-contribution}

In Section \ref{sec:formal-definition}, we will give a formal definition of $k$-lpa's, following the existing models of 1ppda's explained in detail through Section \ref{sec:machine-model}.
Section \ref{sec:ideal-shape-lemma} argues how to convert any 1ppda to its pop-controlled form, called ``ideal-shape''.
We will present a basic property of ``blank skipping'' of $k$-lpa's, which is useful together with ``ideal shape'' in proving Theorem \ref{LPFA=PPDA} in Section \ref{sec:application-BSP}.
The collapses of language families induced by $k$-lpa's and by $k$-lua's as in Theorem \ref{LPFA-and-PCFL} will be proven in Section \ref{sec:property-omega-LPFA}.
We will discuss closure/non-closure properties of $k$-lda's in Section \ref{sec:closure-deterministic} and $k$-lpa's in Section \ref{sec:closure-probabilistic}.
Numerous problems left unanswered throughout this paper will be listed in Section \ref{sec:discussion} as a guiding compass to the future study of $k$-limited automata.

\section{Fundamental Notions and Notation}\label{sec:limited-finite-automata}

Let us formally introduce various computational models of limited  automata, in which we can
rewrite the content of each tape cell only during the first $k$ scans or visits of the cell. In comparison, we also describe probabilistic pushdown automata.

\subsection{Numbers, Alphabets, and Languages}

Let $\nat$ denote the set of all \emph{natural numbers}, which are  nonnegative integers, and set $\nat^{+}$ to be $\nat-\{0\}$. We denote by $[m,n]_{\integer}$ the set $\{m,m+1,m+2,\ldots,n\}$ for any two integers $m$ and $n$ with $m\leq n$.
This set $[m,n]_{\integer}$ is conveniently called an \emph{integer interval} in comparison to a real interval. In addition, we abbreviate as $[m]$ the integer interval $[1,m]_{\integer}$ for any integer $m\geq1$.

A nonempty finite set of ``symbols'' or ``letters'' is called an \emph{alphabet}. A \emph{string} $x$ over alphabet $\Sigma$ is a finite sequence of symbols taken from $\Sigma$ and its \emph{length} $|x|$ expresses the total number of symbols in $x$. The \emph{empty string} is a unique string of length $0$ and is always denoted $\lambda$.
The notation $\Sigma^*$ denotes the set of all strings over $\Sigma$ and any subset of $\Sigma^*$ is called a \emph{language} over $\Sigma$. Given a number $n\in\nat$, $\Sigma^n$ (resp., $\Sigma^{\leq n}$) expresses the set of all strings of length exactly $n$ (resp., at most $n$) over $\Sigma$. Obviously, $\Sigma^*$ coincides with $\bigcup_{n\in\nat}\Sigma^n$.

Given two alphabets $\Sigma_1$ and $\Sigma_2$, we construct a new alphabet $\{\track{\sigma}{\tau}\mid \sigma\in \Sigma_1,\tau\in\Sigma_2\}$ using the \emph{track notation} of \cite{TYL10}. A string over this alphabet is of the form  $\track{\sigma_1}{\tau_1}\track{\sigma_2}{\tau_2}\cdots \track{\sigma_n}{\tau_n}$, which is further abbreviated as $\track{x}{z}$ for $x=\sigma_1\sigma_2\cdots \sigma_n$ and $z=\tau_1\tau_2\cdots \tau_n$. To provide such a string $\track{x}{z}$ to an input tape, we split the tape into two tracks so that the upper track holds $x$ and the lower track does $z$.

Given a string $x$ of the form $\sigma_1\sigma_2\cdots \sigma_{n-1}\sigma_n$ over alphabet $\Sigma$ with $\sigma_i\in\Sigma$ for all $i\in[n]$, the \emph{reverse} of $x$ is $\sigma_n\sigma_{n-1}\cdots \sigma_2\sigma_1$ and denoted by $x^R$.
Given a language $A$, the notation $A^R$ denotes the \emph{reverse language} $\{x^R\mid x\in A\}$. For a family $\FF$ of languages, $\FF^R$ expresses the collection of $A^R$ for any language $A$ in $\FF$.

For any two languages $L_1$ and $L_2$, the \emph{concatenation} $L_1\cdot L_2$ (more succinctly, $L_1L_2$) denotes $\{xy \mid x\in L_1,y\in L_2\}$. In particular, when $L_1$ is a singleton $\{a\}$,  we write $aL_2$ instead of $\{a\}\cdot L_2$. Similarly, when $L_2=\{a\}$, we use the succinct notation $L_1a$.

Any function $h:\Sigma_1\to\Sigma_2^*$ for two alphabets $\Sigma_1$ and $\Sigma_2$ is called a \emph{homomorphism}. Such a homomorphism $h$ is called \emph{$\lambda$-free} if $h(\sigma)\neq\lambda$ holds for any $\sigma\in\Sigma_1$. We naturally expand a homomorphism $h$ to a map from $\Sigma_1^*$ to $\Sigma_2^*$ by setting $h(\sigma_1\sigma_2\cdots \sigma_n) = h(\sigma_1)h(\sigma_2)\cdots h(\sigma_n)$ for any $\sigma_1,\sigma_2,\ldots,\sigma_n\in\Sigma_1$.

\subsection{One-Way Probabilistic Pushdown Automata and Their Variants}\label{sec:machine-model}

As a fundamental computation model, we begin with \emph{one-way probabilistic pushdown automata} (or 1pda's, for short) as
a basis to introduce a new model of probabilistic $k$-limited automata in Section \ref{sec:formal-definition}.
\emph{One-way deterministic and nondeterministic pushdown automata} (abbreviated as 1dpda's and 1npda's, respectively) can be viewed as special cases of the following \emph{one-way probabilistic pushdown automata} (or \emph{1ppda's}, for short). We also obtain \emph{one-way unambiguous pushdown automata} (or 1upda's) as a restriction of 1npda's.

An input string is initially placed on an input tape, surrounded by two endmarkers $\cent$ (left endmarker) and $\dollar$ (right endmarker). For various models of pushdown automata, the use of such endmarkers is nonessential (see, e.g., \cite{Yam21}).
To clarify the use of the endmarkers, however, we explicitly include them in the description of a 1ppda.

Formally, a 1ppda $M$ is a tuple $(Q,\Sigma,\{\cent,\dollar\},\Gamma,\Theta_{\Gamma}, \delta,q_0,\bot,Q_{acc},Q_{rej})$, in which $Q$ is a finite set of (inner) states, $\Sigma$ is an input alphabet, $\Gamma$ is a stack alphabet, $\Theta_{\Gamma}$ is a finite subset of $\Gamma^*$ with $\lambda\in \Theta_{\Gamma}$, $\delta$ is a \emph{probabilistic transition function} from $(Q-Q_{halt})\times \check{\Sigma}_{\lambda} \times\Gamma \times Q\times  \Theta_{\Gamma}$ to $[0,1]$, $q_0$ ($\in Q$) is an initial state, $\bot$ ($\in\Gamma$) is a bottom marker, $Q_{acc}$ ($\subseteq Q$) is a set of accepting states, and $Q_{rej}$ ($\subseteq Q$) is a set of rejecting states,  $\check{\Sigma} =\Sigma\cup\{\cent,\dollar\}$,  $\check{\Sigma}_{\lambda} =\check{\Sigma}\cup\{\lambda\}$, and $Q_{halt} = Q_{acc}\cup Q_{rej}$.
Let $\Gamma^{(-)}=\Gamma-\{\bot\}$. For a given set $\Gamma$ of symbols, a \emph{$\Gamma$-symbol} refers to any symbol in $\Gamma$.
The \emph{push size} of a 1ppda is the maximum length $e$ of any string pushed into a stack by any single move, and thus $\Theta_{\Gamma}\subseteq \Gamma^{\leq e}$ follows.

For clarity reason, we express $\delta(q,\sigma,a,p,u)$ as $\delta(q,\sigma,a \mmid p,u)$ with the use of a special separator ``$|$''.
This value $\delta(q,\sigma,a \mmid p,u)$ expresses the probability that, when $M$ scans $\sigma$ on the input tape in inner state $q$, $M$ changes $q$ to $p$ and updates a topmost stack symbol $a$ to $u$.
In particular, when we always demand $\delta(q,\sigma,a \mmid p,u)\in\{0,1\}$ (instead of the unit real interval $[0,1]$) for all tuples $(q,\sigma,a,p,u)$, we obtain a \emph{one-way deterministic pushdown automaton} (or a 1dpda). In contrast, when $\delta(q,\sigma,a \mmid p,u)$ is required to be either the fixed constant or $0$, we obtain a \emph{one-way nondeterministic pushdown automaton} (or a 1npda).

For any $(q,\sigma,a)\in (Q-Q_{halt})\times \check{\Sigma}_{\lambda}\times \Gamma$, we set $\delta[q,\sigma,a] = \sum_{(p,u)\in Q\times \Theta_{\Gamma}} \delta(q,\sigma,a \mmid p,u)$. In the case of $\sigma=\lambda$, we specifically call its transition a \emph{$\lambda$-move} (or a \emph{$\lambda$-transition}) and the tape head must stay still.

At any point,  $M$ can probabilistically select either a $\lambda$-move or a non-$\lambda$-move.  This is formally stated as  $\delta[q,\sigma,a]+\delta[q,\lambda,a]=1$ for any given triplet  $(q,\sigma,a)\in (Q-Q_{halt})\times \check{\Sigma} \times \Gamma$.

Whenever $M$ reads a nonempty input symbol, the tape head of $M$ must move to the right. During $\lambda$-moves, nevertheless, the tape head must stay still. After reading $\dollar$, the tape head is considered to move off the \emph{input region}, which is marked as $\cent{x}\dollar$ for input $x\in\Sigma^*$. All cells on the input tape are indexed by natural numbers from left to right, where cell $0$ is the start cell containing $\cent$ and cell $|x|+1$ contains $\dollar$ for each input $x$.

Throughout this paper, we express a \emph{stack content}, which is a string stored inside the stack in a sequential order from the topmost symbol $w_n$ to the bottom symbol
$w_0=\bot$, as $w_n\cdots w_2w_1w_0$.

A \emph{(surface) configuration} of $M$ on input $x$ is a triplet $(q,i,w)$, which indicates that $M$ is in inner state $q$, its tape head scans the $i$th cell, and its stack contains $w$. The \emph{initial configuration} is $(q_0,0,\bot)$. An \emph{accepting} (resp., a \emph{rejecting}) \emph{configuration} is a configuration with an accepting (resp., a rejecting) state and a \emph{halting configuration} is either an accepting or a rejecting configuration. We say that a configuration $(p,j,uw)$ \emph{follows} $(q,i,au)$ with probability $\delta(q,\sigma,a \mmid p,u)$ if $\sigma$ is the $i$th input symbol and $j=i$ if $\sigma=\lambda$ and $j=i+1$ if $\sigma\neq\lambda$. A \emph{computation path} of length $k$ on the input $x$ is a series of $k$ configurations, which describes a history of consecutive transitions (or moves) made by $M$ on $x$, starting at the initial configuration with probability $p_1=1$  and the $i+1$st configuration follows the $i$th configuration with probability $p_i$ and ending at a halting configuration with probability $p_{k}$.
The probability of each computation path is determined by the multiplication of all chosen transition probabilities along the computation path.
We thus assign the probability $\Pi_{i\in[k]} p_i$ to such a computation path. It is important to note that, after reading $\dollar$, $M$ is allowed to make a finite series of $\lambda$-moves until it enters halting states. A computation path is called \emph{accepting} (resp., \emph{rejecting}) if the path ends with an accepting configuration (resp., a rejecting configuration).


Generally, a 1ppda may produce an extremely long computation path or even an infinite computation path.
Following an early discussion in Section \ref{sec:ppda-intro} on the expected runtime of probabilistic machines, it is desirable to restrict our attention to 1ppda's whose \emph{computation paths have a polynomial length on average}; that is, there is a polynomial $p$ for which the expected length of all terminating computation paths on input $x$ is bounded from above by $p(|x|)$.
A standard definition of 1dpda's and 1npda's does not have such a runtime bound, because we can easily convert those machines to ones that halt within $O(n)$ time (e.g., \cite{HU79}).
Throughout this paper, we implicitly assume that all 1ppda's should satisfy this expected polynomial termination requirement. This makes it possible for us to mostly concentrate on polynomial-length computation  paths.


Given an arbitrary string $x\in\Sigma^*$, the \emph{acceptance probability} of $M$ on $x$ is the sum of all probabilities of accepting computation paths of $M$ starting with $\cent{x}\dollar$ written on the input tape and we notationally express by $p_{M,acc}(x)$ the acceptance probability of $M$ on $x$. Similarly, we define $p_{M,rej}(x)$ to be the \emph{rejection probability} of $M$ on $x$. We say that $M$ \emph{accepts (resp., rejects) $x$ with probability $p$} if the value $p$ matches $p_{M,acc}(x)$ (resp., $p_{M,rej}(x)$).
If $M$ is clear from the context, we often omit script ``$M$'' entirely and write, e.g., $p_{acc}(x)$ instead of $p_{M,acc}(x)$.
We say that $M$ \emph{accepts} $x$ if $p_{M,acc}(x)>1/2$ and \emph{rejects} $x$ if $p_{M,rej}(x)\geq 1/2$.
Given a language $L$, in general, we say that $M$ \emph{recognizes} $L$ if, for any $x\in L$, $M$ accepts $x$ and, for any $x\in\Sigma^*-L$, $M$ rejects.
The notation $L(M)$ stands for the set of all strings $x$ accepted by $M$; that is,  $L(M)=\{x\in\Sigma^* \mid p_{M,acc}(x)>1/2\}$.
The \emph{error probability of $M$ on $x$ for $L$} refers to the probability that $M$'s outcome is different from $L$.
We further say that $M$ makes \emph{bounded error} if there exists a constant $\varepsilon\in[0,1/2)$ (called an \emph{error bound}) such that, for every input $x$, either $p_{M,acc}(x)\geq 1-\varepsilon$ or $p_{M,rej}(x)\geq 1-\varepsilon$. With or without this condition, $M$ is said to make \emph{unbounded error}.
Moreover, we say that $M$ makes \emph{one-sided error} if, for all strings $x$, either $p{M,acc}(x)>\frac{1}{2}$ or $p_{M,rej}(x)=1$ holds.

We require every 1ppda to run \emph{in expected polynomial time}.
Two 1ppda's $M_1$ and $M_2$ are \emph{(recognition) equivalent} if $L(M_1)=L(M_2)$. More strongly, we say that two 1ppda's are \emph{error-equivalent} if they are recognition equivalent and their error probabilities coincide with each other on all inputs.

For any error bound $\varepsilon\in[0,1/2)$, the notations  $\oneppda_{\varepsilon}$ and $\twopfa_{\varepsilon}$ refer to the families of all languages recognized by (expected-polynomial-time)  $\varepsilon$-error 1ppda's and (expected-polynomial-time) $\varepsilon$-error 2pfa's, respectively.
As a restriction of $\twopfa_{\varepsilon}$, $\tworfa_{\varepsilon}$ denotes the family of all languages recognized by 2pfa's with one-sided error probability at most $\varepsilon$. Similarly, we define $\onerpda_{\varepsilon}$ as the one-sided-error variant of $\oneppda_{\varepsilon}$.
In addition, we often use more familiar notation of $\pcfl$, $\bpcfl$, and $\rcfl$ respectively for $\oneppda_{ub}$, $\bigcup_{0\leq\varepsilon<1/2}\oneppda_{\varepsilon}$, and $\bigcup_{0\leq \varepsilon<1}\onerpda_{\varepsilon}$.
The strengths and weaknesses of $\pcfl$ were discussed earlier by Macarie and Ogihara \cite{MO98} and those of $\bpcfl$ were studied by  Hromkovi\v{c} and Schnitger \cite{HS10} and Yamakami \cite{Yam17}.

In comparison, $\reg$ denotes the family of all \emph{regular languages}, which are recognized by one-way deterministic finite automata. Similarly, $\cfl$ and $\dcfl$ denote the families of all languages recognized by 1npda's and by 1dpda's, respectively. It  follows that $\dcfl=\oneppda_{0}$.
Notice that $\dcfl\subseteq \rcfl \subseteq \bpcfl \subseteq\pcfl$.
Since the language $\{a^nb^nc^n\mid n\geq0\}$ is in $\bpcfl$ \cite{HS10},
$\bpcfl\nsubseteq \cfl$ follows, further leading to $\pcfl\neq \cfl$.
Furthermore, \emph{unambiguous computation} refers to nondeterministic computation consisting of at most one accepting computation path. Let us define  $\mathrm{UCFL}$ by restricting 1npda's used for $\cfl$ to produce only unambiguous computation (see, e.g., \cite{Yam14b}).

To describe the behaviors of a stack, we use the basic terminology used in \cite{Yam08,Yam14c,Yam16}. A \emph{stack content} formally means a series $z=z_mz_{m-1}\cdots z_1z_0$ of stack symbols in $\Gamma$, which are stored in the stack sequentially from $z_m$ at the top of the stack to $z_0$ ($=\bot$) at the bottom of the stack. We refer to a stack content obtained just after the tape head scans and moves off the $i$th tape cell as a  \emph{stack content at the $i$th position}. A \emph{stack transition} means the change of a stack content  by an application of a  single move.

For two language families $\FF$ and $\GG$, the notation $\FF\vee \GG$ (resp., $\FF\wedge \GG$) denotes the \emph{2-disjunctive closure} $\{A\cup B\mid A\in\FF,B\in\GG\}$ (resp., the \emph{2-conjunctive closure} $\{A\cap B\mid A\in\FF,B\in\GG\}$).
For any index $d\in\nat^{+}$, define $\FF(1)=\FF$ and $\FF(d+1)=\FF\wedge \FF(d)$. Notice that $\dcfl(k)\neq \dcfl(k+1)$ for any $k\in\nat^{+}$ \cite{LW73} (reproven in \cite{Yam20} by a different argument).

\subsection{Ideal Shape Lemma for Pushdown Automata}\label{sec:ideal-shape-lemma}

We start with restricting the behaviors of 1ppda's without compromising their language recognition power. Any 1ppda that makes such a restricted behavior is
called ``ideal shape'' in \cite{Yam21}.

We want to show how to convert any 1ppda to a ``push-pop-controlled'' form, in which (i) the pop operations  always take place by first reading an input symbol $\sigma$ and then making  a series (one or more) of the pop operations without reading any further input symbol and (ii) push operations add single symbols without altering any existing stack content.
In other words, a 1ppda \emph{in an ideal shape} is restricted to take only the following actions.
(1) Scanning $\sigma\in\Sigma$, preserve the topmost stack symbol (called a \emph{stationary operation}).  (2) Scanning  $\sigma\in\Sigma$,  push a new symbol $u$ ($\in\Gamma^{(-)}$) without changing any symbol stored in the stack. (3) Scanning  $\sigma\in\Sigma$, pop the topmost stack symbol. (4) Without scanning an input symbol (i.e., $\lambda$-move), pop the topmost stack symbol. (5) The stack operation (4) comes only after either (3) or (4).
These five conditions can be stated more formally.
We say that a 1ppda $N$ is \emph{in an ideal shape} if it satisfies the following conditions.  If $\delta(q,\sigma,a \mmid p,u)\neq0$, then (i) $\sigma=\lambda$ implies $u=\lambda$ and (ii) $\sigma\neq\lambda$ implies $u\in\{\lambda,ba,a\}$ for a certain $b\in\Gamma^{(-)}$. Moreover, for any $(q,\sigma,a)$ with $\sigma\neq\lambda$, (iii) if $\delta(q,\sigma,a \mmid p,ba)\neq0$ with $b\in\Gamma$, then $\delta[p,\lambda,b]=0$ and (iv)  if $\delta(q,\sigma,a \mmid p,a)\neq0$, then $\delta[p,\lambda,a]=0$.

Lemma \ref{transition-simple} states that any 1ppda can be converted into its ``equivalent'' 1ppda in an ideal shape.
The lemma was first stated in \cite{Yam21} for 1ppda's equipped with the endmarkers as well as the model of 1ppda's without  endmarkers. Note that \emph{1ppda with no endmarker} is obtained from the definition of 1ppda given in Section \ref{sec:machine-model} simply by removing $\cent$ and $\dollar$. The acceptance and rejection of such a no-endmarker 1ppda is determined by whether the 1ppda is in accepting states or non-accepting states just after reading off the entire input string.

\begin{lemma}{\rm [Ideal Shape Lemma, \cite{Yam21}]}\label{transition-simple}
Let $n\in\nat^{+}$. Any $n$-state 1ppda $M$ with stack alphabet size $m$ and push size $e$ can be converted into another error-equivalent 1ppda $N$ in an ideal shape with $O(en^2m^2(2m)^{2enm})$ states and stack alphabet size $O(enm(2m)^{2enm})$. The above statement is also true for the model with no endmarker.
\end{lemma}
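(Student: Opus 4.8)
The plan is to build $N$ by simulating $M$ move-by-move while repackaging every stack manipulation that is \emph{not} already of one of the three admissible kinds---keep the scanned top symbol, add a single symbol on top, or pop the top symbol---into enlarged stack symbols and inner states, and to do this in a way that reproduces $M$'s accepting and rejecting probabilities exactly on every input, which is what error-equivalence demands. Throughout, the guiding picture is to cut each computation of $M$ into \emph{blocks} delimited by input-consuming moves, and to arrange that $N$ performs exactly one stack action per input-reading move and uses $\lambda$-moves only to pop.

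First I would eliminate fat pushes. A general transition rewrites the scanned top symbol $a$ into a string $u$ with $|u|\le e$; such a move is really a pop of $a$ followed by the deposit of up to $e$ symbols, whereas ideal shape allows depositing only one symbol per input-reading move and forbids depositing anything on a $\lambda$-move. I would therefore introduce \emph{composite} stack symbols encoding short strings over $\Gamma$, so that the whole net deposit produced while processing one input symbol sits in a single physical cell: a push then adds exactly one (possibly composite) symbol and leaves the cells below untouched, as clause (ii) requires. Transitions are redefined to act on the logical top of a composite block, and when the logical action would pop that top, the block is shortened. The nontrivial point is that ideal shape does \emph{not} permit rewriting a top symbol into a different single symbol (clause (ii) allows only $\lambda$, $ba$, or $a$), so shortening a block must itself be realized through a pop-burst; this forces the composites to carry enough surrounding context that the burst can be driven purely by $\lambda$-pops.

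Second---and this is the heart of the construction---I would compress the $\lambda$-move segments. Between two consecutive input-consuming moves (and after reading $\dollar$) $M$ may run a probabilistically branching, variable-length sequence of $\lambda$-moves that pushes, pops, and rewrites, but in $N$ every $\lambda$-move must pop, so all of that pushing and rewriting activity has to be pre-absorbed into the single input-reading move that opens the block and into the finite control. By a configuration-repetition (pumping) argument, a $\lambda$-segment that keeps growing the stack while revisiting a given inner-state/top-symbol situation must loop; hence, modulo such loops, the transient stack window that needs to be carried has length bounded by $O(enm)$. Encoding these bounded windows as composite symbols is the source of the stated stack-alphabet size $O(enm\,(2m)^{2enm})$ and of the state count $O(en^2m^2(2m)^{2enm})$. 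Each reading move of $N$ is assigned the \emph{aggregate} probability of the corresponding $M$-segment, and each following $\lambda$-pop the appropriate conditional continuation probability, so that the product of probabilities along any $N$-path equals the summed probability of the family of $M$-paths it represents.

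The main obstacle is precisely this probability bookkeeping across $\lambda$-segments that may branch and loop: I must show that the aggregate transition probabilities are well defined and rational, that they sum to $1$ wherever required (so that $N$ is again a legitimate 1ppda), and that $N$ inherits expected-polynomial runtime. Here the standing assumption that $M$ halts in expected polynomial time is essential, since it guarantees that looping $\lambda$-branches carry the correct total measure and that the finite linear systems determining the aggregate probabilities---indexed by the finitely many inner-state/top-of-window situations---have the intended solutions, whence $p_{N,acc}(x)=p_{M,acc}(x)$ and $p_{N,rej}(x)=p_{M,rej}(x)$ for all $x$. Clauses (iii) and (iv) are then enforced by a final clean-up pass: whenever a push or a keep move could be followed immediately by a $\lambda$-pop, I fold that pop's effect back into the reading move, so $\lambda$-pops only ever continue a genuine pop-burst. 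The no-endmarker case is handled identically, as $\cent$ and $\dollar$ act merely as distinguished input symbols and only the halting convention changes, which the construction respects.
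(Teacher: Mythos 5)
Your plan follows essentially the same route as the paper's proof sketch (itself deferring to \cite{Yam21}): a staged normalization that packs the bounded $O(enm)$-length transient stack activity of $\lambda$-segments into composite stack symbols and the finite control, eliminates stationary rewrites by carrying the logical top symbol as context, folds pops into the preceding input-reading move so that $\lambda$-pops only continue pop-bursts, and relies on the expected-polynomial-runtime assumption to make the aggregated branch probabilities well defined and normalized. The ordering of the stages differs slightly from the paper's sequence $M\to M_1\to M_2\to M_3\to M_4$, but the key ideas and the source of the $(2m)^{2enm}$ blow-up are the same.
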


Since error probability can differ according to inputs, by setting it appropriately, we can obtain the ideal shape lemma for 1dpda's and 1npda's.
The proof of Lemma \ref{transition-simple} given in \cite[Section 4.2]{Yam21} is lengthy, consisting of a series of transformations of automata, and is proven by utilizing, in part, basic ideas of Hopcroft and Ullman \cite[Chapter 10]{HU79} and of Pighizzini and Pisoni \cite[Section 5]{PP15}.
For completeness of the paper, we describe a rough sketch of the proof given in \cite[Section 4.2]{Yam21}.

\begin{proofsketchof}{Lemma \ref{transition-simple}}
Let us begin the proof sketch by fixing a 1ppda $M = (Q,\Sigma,\{\cent,\dollar\},\Gamma, \Theta_{\Gamma},\delta,q_0,\bot,Q_{acc},Q_{rej})$ arbitrarily. Let $|Q|=n$, $|\Gamma|=m$, and $e$ be the push size of $M$. Starting with this machine $M$, we will perform  a series of conversions of the machine to satisfy the desired condition of the ideal shape 1ppda.  To make our description simpler, we introduce the succinct notation $\delta^*[p,\lambda,a \mmid q,\lambda,w]$ ($a\in\Gamma$ and $w\in\Gamma^*$) to denote the probability of the event that, starting with state $q$ and stack content $az$ (for an arbitrary string $z\in\Gamma^+$), $M$ makes a (possible) series of consecutive $\lambda$-moves without accessing any symbol in $z$ and eventually enters state $p$ with stack content $wz$.

(1) We first convert the original 1ppda $M$ to another error-equivalent 1ppda, say,  $M_1$ whose $\lambda$-moves are restricted only to pop operations; namely, $\delta_1(q,\lambda,a \mmid p,w)=0$ for all elements $p,q\in Q$, $a\in\Gamma$, and $w\in\Gamma^+$.
For this purpose, we need to remove any $\lambda$-move by which $M$ changes topmost stack symbol $a$ to a certain nonempty string $w$. We also need to remove the transitions of the form $\delta(q,\lambda,Z_0|p,Z_0)$, which violates the requirement of $M_1$ concerning pop operations.
Notice that, once $M$ reads $\dollar$, it makes only $\lambda$-moves only with states in $Q^{(\dollar)}$ and eventually empties the stack.

(2) We next convert $M_1$ to another error-equivalent 1ppda $M_2$ that conducts only the following types of moves:
($a$) $M$ pushes one symbol without changing the exiting stack content, ($b$) it replaces the topmost stack symbol by a (possibly different) single symbol, and
($c$) it pops the topmost stack symbol.
We also require that all $\lambda$-moves of $M_2$ are limited only to either ($b$) or ($c$).

(3) We further convert $M_2$ to $M_3$ so that $M_3$ satisfies (2) and, moreover, there is no operation that replaces any topmost symbol with a different single symbol (namely, stationary operation). This is done by remembering the topmost stack symbol without writing it down into the stack. For this purpose, we use a new symbol of the form $[q,a]$ (where $q\in Q_2$ and $a\in\Gamma_2$) to indicate that $M_2$ is in state $q$, reading $a$ in the stack.

(4) We convert $M_3$ to $M_4$ that satisfies (3) and also makes only $\lambda$-moves of pop operations that only follow a (possible) series of pop operations.
Let $Q_4=Q_3\cup\{p'\mid p\in Q_{halt}\}$ and $\Gamma_4=\Gamma_3$. Let $q_{4,0}=q_{3,0}$ and $Z_{4,0}=Z_{3,0}$.
It follows that $|Q_4|\leq 2|Q_3|\leq 48en^2m^2(2m)^{2enm}$ and $|\Gamma_4|\leq 4enm(2m)^{2enm}$.
The probabilistic transition function $\delta_4$ is constructed as follows. A basic idea of our construction is that, when $M_3$ makes a pop operation after a certain non-pop operation, we combine them as a single move.

(5) Finally, we set $M_4$ as the desired 1ppda $N$.
\end{proofsketchof}

The ideal shape lemma is useful for simplifying certain proofs associated with 1ppda's. One such example was exhibited in \cite{Yam21}.

\begin{lemma}{\rm \cite{Yam21}}\label{PCFL-close-reversal}
$\pcfl$ is closed under reversal; that is, $\pcfl^{R} = \pcfl$.
\end{lemma}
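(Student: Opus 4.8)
The plan is to start from an arbitrary $L\in\pcfl=\oneppda_{ub}$ with witnessing 1ppda $M$, and first replace $M$ by a structurally clean machine via the Ideal Shape Lemma (Lemma~\ref{transition-simple}): let $N$ be an error-equivalent ideal-shape 1ppda, so that $L(N)=L$ and $p_{N,acc}(x)=p_{M,acc}(x)$ on every input. Since we are working with expected-polynomial-time machines we may assume $N$ halts with probability $1$, and (by running the stack-emptying $\lambda$-pops already guaranteed by the ideal-shape construction before halting) that $N$ empties its stack down to $\bot$ at the end of every halting path. Then $p_{N,acc}(x)+p_{N,rej}(x)=1$, so that $x\in L\iff p_{N,acc}(x)>1/2\iff \mathrm{gap}_N(x):=p_{N,acc}(x)-p_{N,rej}(x)>0$. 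Because the endmarkers are nonessential for these models, I will work with whichever endmarker convention is convenient, noting only that reversing $\cent x\dollar$ simply swaps the roles of $\cent$ and $\dollar$.

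Next I would build a machine $N'$ that, on input $x^R$, reconstructs the forward computation of $N$ on $x$ in reverse temporal order. Ideal shape is exactly what makes the stack bookkeeping tractable: each input symbol $\sigma_i$ carries precisely one reading move, which is a single-symbol push, a stationary move, or a reading-pop, and the only $\lambda$-moves are pops that trail a reading-pop. Consequently the push events and pop events form a LIFO matching, and this matching is invariant under reversal---a push at position $i$ matched with a pop at position $j>i$ becomes a push at the mirror of $j$ and a pop at the mirror of $i$, preserving the nesting. Hence $N'$ can maintain on its own stack the mirror image of $N$'s stack, turning $N$'s pops into pushes and vice versa; a long trailing block of $\lambda$-pops in $N$ becomes a block of pushes, which $N'$ performs one symbol per move by ordinary $\lambda$-moves (legitimate because $N'$ itself need not be in ideal shape), so the push size of $N'$ stays bounded even though individual pop-blocks of $N$ may be long. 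The finite control of $N'$ threads the single interface state shared between consecutive input positions, reading the blocks of $N$'s computation from right to left.

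The probabilities are where the real care is needed, and this is the step I expect to be the main obstacle. The saving observation is that the weight of a computation path is the \emph{product} of its transition probabilities, and products are order-independent, so a reversed path carries the same weight as its forward original; if the accept/reject label of each path is preserved, then the reversed weighted computation has $\mathrm{gap}$ on $x^R$ equal to $\mathrm{gap}_N(x)$, and $x^R\in L^R\iff x\in L\iff \mathrm{gap}>0$. The difficulty is that a naive transition-by-transition reversal is \emph{not} a Markov time-reversal: inverting $\delta(q,\sigma,a\mmid p,u)$ does not yield a probability distribution over source states. I would resolve this by first normalizing $N$ so that all branching is fair-coin branching (dyadic probabilities simulated by coin flips); a fair coin is reversal-symmetric, so $N'$ can reproduce each forward binary choice with a fresh fair coin of its own, matching each path's probability $2^{-m}$ exactly. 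To avoid extra spurious branching when guessing source states and stack symbols, the cleanest route is to pre-process $N$ into a locally \emph{reversible} form, in which each move is invertible given its coin outcome; then $N'$ runs the computation backward deterministically on its guessed coins, with no additional branching and hence no rescaling of path weights. Guesses that are inconsistent with $\delta$ are routed to a balanced accept/reject gadget that halts and contributes equally to $p_{N',acc}$ and $p_{N',rej}$, so that it cancels in the gap and keeps the total probability equal to $1$. The bookkeeping obstacle is precisely to verify that (i) this reversibility/fair-coin normalization preserves $L(N)$ and expected-polynomial running time, (ii) every reverse path halts and the gadget contributions exactly cancel, and (iii) the reversed stack operations remain within the bounded-push-size 1ppda model---for which ideal shape is the enabling ingredient. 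Granting these, $N'$ is a bona fide expected-polynomial-time 1ppda with $p_{N',acc}(x^R)+p_{N',rej}(x^R)=1$ and $p_{N',acc}(x^R)>1/2\iff x^R\in L^R$, so $L^R\in\pcfl$. Applying the same construction to $L^R$ yields the reverse inclusion, giving $\pcfl^R=\pcfl$ as claimed in Lemma~\ref{PCFL-close-reversal}.
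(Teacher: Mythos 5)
The paper does not actually prove Lemma~\ref{PCFL-close-reversal}; it imports it from \cite{Yam21} with only the remark that the Ideal Shape Lemma is the enabling tool. So I can only judge your argument on its own terms. Your structural skeleton is right and matches that remark: normalize to ideal shape, observe that the push/pop events form a LIFO matching that is invariant under reversal, and mirror the stack so that pops become pushes and vice versa (unrolling a trailing block of $\lambda$-pops into single-symbol pushes to keep the push size bounded). You also correctly identify the real difficulty as the probability bookkeeping rather than the stack bookkeeping.

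The gap is in how you resolve that difficulty. Your argument hinges on pre-processing $N$ into a ``locally reversible'' form in which each move is invertible given its coin outcome, so that the backward simulation is deterministic and introduces no rescaling. That normalization is not available: backward determinism is a genuinely restrictive property (reversible pushdown automata, cf.\ Kutrib and Malcher \cite{KM12}, accept a proper subclass of $\dcfl$, and even reversible finite automata are weaker than DFAs), and no amount of coin-tagging fixes this, because two distinct source states can reach the same target state under the same input symbol and the same stack operation, and the one-way backward simulator has no channel through which to recover which one occurred --- the coin outcomes are chosen at the source and are not visible on the input or the stack. Once reversibility is dropped, your $N'$ must \emph{guess} the forward-previous state (and the identities of symbols it is re-pushing), and the number of consistent guesses varies from step to step and from path to path; so the reversed path corresponding to a forward path $\pi$ receives weight $w(\pi)$ multiplied by a \emph{path-dependent} factor, and the sign of $p_{acc}-p_{rej}$ is no longer preserved. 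The balanced accept/reject gadget you describe cancels only the inconsistent branches, not this nonuniform rescaling of the consistent ones. A workable repair is to guess uniformly from a fixed-size set at every backward step regardless of consistency, so that every step contributes the same factor, and then to exploit the fact that ideal shape forces every computation path to have length $O(|x|)$ in order to pad all paths to a common number of guessing events; but that uniformization is exactly the nontrivial content your write-up defers, and as stated the proof does not go through.
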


\subsection{Nondeterministic Finite Automata with Output Tapes}

As done in \cite{Yam14a,Yam14b,Yam14c}, we equip each 1nfa with a \emph{write-once output tape}.\footnote{An output tape is \emph{write-once} if its cells are initially blank, its tape head never moves to the left, and the tape head must move to the right whenever it writes a non-blank symbol.}
We use such a 1nfa as a nondeterministic variant of Mealy machine, that is, a machine that produces a single output symbol whenever it processes a single input symbol.
Since the machine cannot erase written output strings, we allow the machine to  \emph{invalidate} any produced string on the output tape by later entering a rejecting inner state. For brevity, any 1nfa that behaves in this specific way is called \emph{real-time}.
Let $\rtnfamv$ denote the class of all multi-valued partial functions $f$ from $\Sigma_1^*$ to $\Sigma_2^*$ whose output values are produced on write-once tapes along only accepting computation paths of real-time 1nfa's ending in an accepting configuration in which, after the real-time 1nfa's scan the right endmarker, they enter a designated unique accepting state, where $\Sigma_1$ and $\Sigma_2$ are arbitrary alphabets. The last requirement on accepting configurations ensures that the number of all distinct accepting configurations on each input $x$ equals $|f(x)|$.
We further write $\rtnfamv_t$ for the collection of all \emph{total} functions in $\rtnfamv$.

We define the ``reversal'' $f^R$ of a function $f$ simply by setting $f^R(x) = f(x)^R$ for any $x\in\Sigma_1^*$. We use the notation  $\rtnfamv_t^R$ for $\{f^R\mid f\in \rtnfamv_t\}$. The following equality holds for the functional composition ``$\circ$'' and this equality will be used in Section \ref{sec:property-omega-LPFA}.

\begin{lemma}\label{two-rtNFAMV-collapse}
$\rtnfamv_t^R \circ \rtnfamv_t = \rtnfamv_t$.
\end{lemma}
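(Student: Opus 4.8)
The plan is to reduce the stated identity to two closure properties of $\rtnfamv_t$---closure under functional composition and closure under output reversal---and then assemble them. First I would note that the identity map is in $\rtnfamv_t$ (realized by a deterministic real-time machine that copies each scanned symbol and, on reading $\dollar$, enters the unique accepting state). Using this, both inclusions of the lemma reduce to the single fact $\rtnfamv_t^{R}=\rtnfamv_t$: for $\subseteq$, an element $f^{R}\circ g$ sends $x$ to $f(g(x))^{R}=\big((f\circ g)(x)\big)^{R}$, so it is the output reversal of the composite $f\circ g$; for $\supseteq$, given $h\in\rtnfamv_t$ I would write $h=(h^{R})^{R}$ and take $f=h^{R}$, $g=\mathrm{id}$, so that $f^{R}\circ g=h$. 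Granting composition closure and reversal closure, this yields $\rtnfamv_t^{R}\circ\rtnfamv_t=\rtnfamv_t\circ\rtnfamv_t=\rtnfamv_t$.

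Composition closure is the routine part. Given $f,g\in\rtnfamv_t$ computed by real-time $1$nfa's $N_f$ and $N_g$, I would build a single real-time $1$nfa $N$ that, on scanning the $i$th input symbol, nondeterministically picks a move of $N_g$ producing an intermediate symbol and immediately feeds that symbol to a simulated move of $N_f$, writing $N_f$'s output symbol. $N$ accepts exactly when both simulated components reach their designated unique accepting states at $\dollar$, and a rejecting state in either component invalidates the whole branch. This is manifestly one output symbol per input symbol, and its set of surviving output strings is $\{f(y):y\in g(x)\}=(f\circ g)(x)$, so $f\circ g\in\rtnfamv_t$; taking $g=\mathrm{id}$ gives $\rtnfamv_t\circ\rtnfamv_t=\rtnfamv_t$.

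The hard part will be the output-reversal closure, and I expect it to carry essentially the whole weight of the lemma. Given $f$ computed by $N_f$, I need a real-time $1$nfa $N'$ whose value set on $x$ is exactly $\{f(x)^{R}\}$ (multivalued in general), with the unique-accepting-state bookkeeping so that distinct accepting configurations still match distinct output values. The tension is that $N'$ must read $x$ and write on the write-once tape both from left to right, yet emit $f(x)$ in reverse, so the symbol written first is the one $N_f$ would produce last. The plan is to guess the reversed output symbol by symbol while simulating $N_f$ forward, and to use the invalidation mechanism to discard every branch whose early guesses are not reconciled with the symbols the forward simulation produces at the mirror positions. Arranging this reconciliation inside a finite-state, one-way, real-time device---so that precisely the branches outputting $f(x)^{R}$ survive---is the genuine obstacle; I would either engineer it directly or import it from the real-time transducer constructions of the cited works, and only after it is secured does the clean two-line assembly above go through.
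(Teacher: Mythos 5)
Your reduction collapses the lemma to the single claim $\rtnfamv_t^{R}=\rtnfamv_t$, and you correctly identify that claim as the place where all the difficulty sits --- but it is not merely difficult, it is false, so the reduction cannot be completed. Take $f=\mathrm{id}$: then $f^{R}$ is the map $x\mapsto\{x^{R}\}$. A real-time 1nfa with a write-once output tape must, after reading the first half $u$ of an input $uv$ of length $2n$, already have committed $n$ output symbols on every surviving branch, and for the value set to be exactly $\{(uv)^{R}\}$ those symbols must equal $v^{R}$ --- a string the machine has not yet read. With fewer than $|\Sigma|^{n}$ states, two distinct suffixes $v\neq v'$ admit accepting branches that sit in the same state after reading $u$ while having written $v^{R}$ and $v'^{R}$ respectively; splicing them yields an accepting branch on $uv'$ whose output begins with $v^{R}$, contradicting the requirement that its value set be $\{(uv')^{R}\}$. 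So no amount of guess-and-invalidate engineering will put $f^{R}$ alone into $\rtnfamv_t$, and both directions of your two-line assembly (each of which passes through $\rtnfamv_t^{R}\subseteq\rtnfamv_t$) break down.

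The paper's proof never separates the reversal from the composition. It builds a single real-time 1nfa that, at each input position, simulates one step of one transducer forward and, simultaneously, one step of the other transducer backward: the backward simulation is the standard NFA-reversal trick of guessing that machine's accepting state at $\cent$, following its transitions in reverse as the head moves right, and verifying at $\dollar$ that its initial state has been reached, with any mismatch invalidating the branch. The intermediate symbol emitted by the forward component at a given cell is consumed on the spot by the backward component at the same cell, so the reversed intermediate string is never materialized and no unbounded memory is needed. That coupling is the entire content of the lemma: the composite $f^{R}\circ g$ lands in $\rtnfamv_t$ even though $f^{R}$ by itself does not. Your product construction for composition closure is essentially the forward half of that machine; what is missing is the realization that the reversal must be absorbed inside the joint simulation rather than discharged as a standalone closure property of the class.
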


\begin{proof}
Let $h$ denote any multi-valued total function in $\rtnfamv_t^R \circ \rtnfamv_t$. Take two functions $f,g\in\rtnfamv_t$ such that $h(x)=f(g(x)^R)$ holds for all $x$, where $g(x)^R$ denotes $\{y^R\mid y\in g(x)\}$. Note that $h(x) = \{f(y^R)\mid y\in g(x)\}$.
Take two real-time 1nfa's $M_1$ and $M_2$ with output tapes computing $f$ and $g$, respectively. For a machine $M\in\{M_1,M_2\}$, let $M = (Q_M, \Sigma, \{cent,\dollar\}, \Gamma_M,\Theta_{\Gamma_M}, \delta_M,q_{M,0},\bot, Q_{M,acc},Q_{M,rej})$.
Consider a machine that first runs $M_1$ and then runs $M_2^R$ by moving a tape head backward. This machine correctly computes $h$ but it is not a real-time 1nfa.
Here, we want to construct another real-time 1nfa $N$ that reads an input  symbol by symbol from left to right and simulates $M_1$ and $M_2^R$ simultaneously.

\s

(i) At scanning $\cent$, we guess $r_2\in Q_{acc,M_2^R}$ and store $(q_{0,M_1},r_2)$ in an internal memory. Furthermore, we guess $r_2\in Q_{M_2^R}$ and $q_2\in Q_{M_2^R}-Q_{halt,M_2^R}$ satisfying
$(r_1,\cent)\in \delta_{M_1}(q_{0,M_1},\cent)$ and $(r_2,\cent)\in\delta_{M_2^R}(q_2,\cent)$. Update the internal memory to $(r_1,q_2)$ and move to the right.

(ii) Assume that cell $\ell$ contains input symbol $\sigma\in\Sigma$ and the internal memory of $N$ contains $(q_1,r_2)$. Guess $r_1\in Q_{M_1}$ and $q_2\in Q_{M_2^R}$ satisfying $(r_1,\tau)\in\delta_{M_1}(q_1,\sigma)$ and $(r_2,\xi)\in \delta_{M_2^R}(q_2,\tau)$. Update the memory to $(r_1,q_2)$ and output $\xi$ onto $N$'s output tape.

(iii) At scanning $\dollar$, assume that $(q_1,r_2)$ is in the memory. Guess $r_1\in Q_{acc,M_1}$ satisfying  $(r_1,\dollar)\in \delta_{M_1}(q_1,\dollar)$. We then check whether  $(r_2,\dollar)\in\delta_{M_2^R}(q_{0,M_2^R},\dollar)$. If not, then $N$ enters a rejecting state. Otherwise, $N$ halts in accepting states.

\s

If there is any discrepancy in the above simulation, then our guess must be wrong, and thus $N$ immediately enters a rejecting state.

By the above construction, the number of accepting computation paths of $N$ matches that of $M$ at any input.
Therefore, we obtain $\rtnfamv_t^R \circ \rtnfamv_t \subseteq  \rtnfamv_t$.

Since the opposite inclusion $\rtnfamv_t \subseteq \rtnfamv_t^R \circ \rtnfamv_t$ is clear, the lemma is true.
\end{proof}

\section{Behaviors of Various Limited  Automata}\label{sec:observation-behavior}

We will formally introduce probabilistic models of limited automata as a foundation and explain how to obtain its variants, such as deterministic, nondeterministic, and unambiguous models. We will then  explore their fundamental properties by making structural analyses on their behaviors.

Our first goal is to provide in the field of probabilistic computation a complete characterization of finite and pushdown automata in terms of limited  automata. All probabilistic machines in this paper are assumed to run in expected polynomial time.

\subsection{Formal Definitions of Limited Automata}\label{sec:formal-definition}

In a way similar to Section \ref{sec:machine-model}, we begin with an introduction of a probabilistic model of limited automata and then define other variants by modifying this basic model.

A \emph{probabilistic $k$-limited automaton} (or a \emph{$k$-lpa}, for short) $M$ is formally defined as a tuple
$(Q,\Sigma,\{\cent,\dollar\}, \{\Gamma_i\}_{i\in[k]},\delta,q_0,Q_{acc},Q_{rej})$, which accesses
only tape area in between two endmarkers (those endmarkers can be
accessible but not changeable), where $Q$ is a finite set of (inner) states,  $Q_{acc}$ ($\subseteq Q$) is a set of accepting states, $Q_{rej}$ ($\subseteq Q$) is a set of rejecting states, $\Sigma$ is an input alphabet, $\{\Gamma_i\}_{i\in[k]}$ is a collection of mutually disjoint finite sets of tape symbols, $q_0$ is an initial state in $Q$, and $\delta$ is a probabilistic transition function from
$(Q-Q_{halt}) \times \Gamma\times Q \times \Gamma\times D$ to the real unit interval $[0,1]$ with $D=\{-1,+1\}$, $Q_{halt} = Q_{acc}\cup Q_{rej}$, and $\Gamma=\bigcup_{i\in[0,k]_{\integer}} \Gamma_i$ for $\Gamma_0=\Sigma$ and $\cent,\dollar\in \Gamma_k$.
We implicitly assume that $Q_{acc}\cap Q_{rej}=\emptyset$.
The $k$-lpa has a rewritable tape,
on which an input string is initially placed, surrounded by two endmarkers $\cent$ and $\dollar$.
In our formulation of $k$-lpa, unlike 1ppda's, the tape head always moves either to the right or to the left without stopping still. In other words, $M$ \emph{makes no $\lambda$-move}. We also remark that $M$ is not required to halt immediately after reading $\dollar$.

At any step, $M$ probabilistically chooses one of all possible transitions given by $\delta$.
For convenience, we express $\delta(q,\sigma,p,\tau,d)$ as $\delta(q,\sigma \mmid p,\tau,d)$, which indicates the probability that, when $M$ scans $\sigma$ on the tape in inner state $q$, $M$ changes its inner state to $p$, overwrites $\tau$ onto $\sigma$, and moves its tape head in direction $d$.
We set $\delta[q,\sigma]=\sum_{(p,\tau,d)\in Q\times \Gamma\times D} \delta(q,\sigma \mmid p,\tau,d)$. The function $\delta$ must satisfy  $\delta[q,\sigma]=1$ for every pair $(q,\sigma)\in Q\times\Gamma$.

We say that a tape head (sometimes its underlying machine $M$) \emph{makes a left turn} at cell $n$ if the tape head moves to the right into cell $n$ and then moves back to the left at the next step. Similarly, we define a \emph{right turn}. For convenience, the tape head is said to \emph{make a turn} if it makes either a left turn or a right turn. See Fig.~\ref{fig:head-simulation} for a tape-head movement.

The $k$-lpa $M$ must satisfy the following \emph{$k$-limitedness requirement}. During
the first $k$ scans of each tape cell, at the $i$th scan with $0\leq i<k$, if
$M$ reads the content of the
cell containing a symbol in $\Gamma_i$, then $M$ updates the cell content to another symbol in $\Gamma_{i+1}$. After the the $k$th
scan, the cell becomes unchangeable (or \emph{frozen}); that is, $M$ still
reads a symbol in the cell but $M$ no longer alters the written symbol. For this rule, there is one exception: whenever the tape head makes a turn at any tape cell, we count this move as ``double scans'' or ``double visits.''
To make the endmarkers special, we further assume that no symbol in $\bigcup_{i\in[0,k-1]_{\integer}} \Gamma_i$ replaces the  endmarkers.
The above requirement is formally stated as follows.
\begin{quote}
The $k$-limitedness
requirement: for any transition
$\delta(q,\sigma \mid p,\tau,d)\neq0$ with $p,q\in Q$, $d\in\{+1,-1\}$, $\sigma\in
\Gamma_i$, and $\tau\in\Gamma_j$ with $i,j\in[0,k]_{\integer}$, (1) if $i=k$, then
$\sigma=\tau$ and $j=k$, (2) if $i<k$ and $i$ is even, then
$j= i + 2^{(1-d)/2}$, and (3) if $i<k$ and $i$ is odd, then
$j= i + 2^{(1+d)/2}$.
\end{quote}

We assume that all tape cells are indexed by natural numbers, where the cell containing the left endmarker is indexed $0$ (such the cell is called the \emph{start cell}), the cell of the right endmarker is $n+1$ if an input $x$ is of length $n$.
Most notions and notations used for 1ppda's in Section \ref{sec:machine-model} are applied to $k$-lpda's with slight modifications if necessary. A \emph{(surface) configuration} of $M$ on input $x$ is a triplet of the form $(q,j,w)$, which indicates that $M$ is in state $q$ scanning the $j$th cell of the tape composed of $\cent w\dollar$.
Similarly to the case of 1ppda's, a \emph{computation path} of $M$ on $x$ is a series of configurations generated by applying $\delta$ repeatedly and we associate such a computation path with a probability of generating the computation path. A \emph{computation} of $M$ on $x$ relates to a \emph{computation graph} whose vertices are distinct configurations of $M$ on $x$ and (directed) edges represent single transitions of $M$ between two configurations.
Similarly to 1ppda's, we also define the notions of acceptance/rejection probability, one-sided error, bounded-error, and unbounded-error for $k$-lpa's as well as the notations, such as $p_{acc,M}(x)$ and $p_{rej,M}(x)$.
Concerning the running time of $k$-lpa's, similarly to 1ppda's in Section \ref{sec:machine-model}, we implicitly assume that all $k$-lpa's in this work \emph{run in expected polynomial time}.


When a string $w$ is written on a tape, the \emph{$w$-region} of the tape refers to a series of consecutive cells that hold each symbol of $w$, provided that $w$ can be identified uniquely on the tape from the context.
Even after some symbols of $w$ is altered, we may use the same term ``$w$-region'' as long as the referred cells in the original region are easily discernible from the context.
Moreover, a \emph{blank region} is a series of consecutive cells containing only $B$s whose ends are both adjacent to non-blank cells. A \emph{fringe} of a blank region is a non-blank cell adjacent to one of the ends of the blank region. Since two endmarkers cannot be changed, each blank region always has two fringes.
We say that $M$ \emph{enters the $w$-region in direction $d$ in inner state $q$} if, at a certain step, $M$ moves into the $d$-side of the $w$-region from the outside of $w$ by changing its inner state to $q$, where ``$d$-side'' means the left-side if $d=-1$ and the right-side if $d=+1$.   Moreover, $M$ \emph{leaves the $w$-region in direction $d$ in inner state $q$} if, at a certain step, $M$ moves out of the $d$-side of the $w$-region from the inside of $w$ by changing its inner state to $q$.


Given an index $k\geq1$ and a constant $\varepsilon\in[0,1]$, the basic notation $\klpa{k}_{\varepsilon}$ refers to the family of all languages recognized by (expected-polynomial-time) $k$-lpa's with error probability at most $\varepsilon$. In the bounded-error model, $\varepsilon$ is bounded away from $1/2$, and thus the union $\bigcup_{\varepsilon\in[0,1/2)} \klpa{k}_{\varepsilon}$ ia abbreviated as $\klbpa{k}$.
In the case of the unbounded-error model, by contrast, we write $\klpa{k}$ (occasionally, we write $\klpa{k}_{ub}$ for clarity). Similarly, $\klra{k}_{\varepsilon}$ is defined by (expected-polynomial-time) one-sided $\varepsilon$-error $k$-lpa's. Let $\klra{k} = \bigcup_{\varepsilon\in[0,1)} \klra{k}_{\varepsilon}$ and $\klra{k}_{<1/2} = \bigcup_{\varepsilon\in[0,1/2)} \klra{k}_{\varepsilon}$.

Furthermore, by requiring $k$-lna to produce only unambiguous computation on every input, we can introduce  a machine  model of \emph{unambiguous $k$-limited  automata} (or $k$-lua's, for short).
Using $k$-lda's, $k$-lna's, and $k$-lua's as underlying machines, the notations $\klda{k}$, $\klna{k}$, and $\klua{k}$ are used to express  the families of all languages recognized by $k$-lda's, $k$-lna's, and $k$-lua's, respectively.

Among all the aforementioned language families, it follows from the above definitions  that, for each index $k\geq2$,  $\klda{k}\subseteq \klua{k} \subseteq \klna{k}$, $\klda{k}\subseteq \klra{k}_{\varepsilon} \subseteq \klna{k}$, and $\klra{k}_{\varepsilon'}\subseteq \klbpa{k}_{\varepsilon'} \subseteq \klpa{k}_{\varepsilon'}$ for any constants $\varepsilon\in[0,1)$ and $\varepsilon'\in[0,1/2)$.
By amplifying the success probability of $k$-lra's, it is possible to show the further inclusion: $\klra{k} \subseteq \klbpa{k}$ for every index $k\geq1$. This inclusion is not obvious from the definition  because a $k$-lra can make error probability greater than $1/2$, which is not bounded-error probability.

\begin{lemma}\label{LRFA-to-LBPFA}
For any $k\geq1$, $\klra{k} \subseteq \klbpa{k}$.
\end{lemma}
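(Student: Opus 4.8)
The plan is to reduce $\klra{k}\subseteq\klbpa{k}$ to a standard probability-amplification argument, the only real work being to carry it out within the rigid $k$-limitedness budget. Fix $L\in\klra{k}$, so $L\in\klra{k}_{\varepsilon}$ for some constant $\varepsilon\in[0,1)$, witnessed by a one-sided-error $k$-lpa $M$ with $p_{M,acc}(x)\geq 1-\varepsilon$ for $x\in L$ and $p_{M,acc}(x)=0$ for $x\notin L$. If $\varepsilon<1/2$ there is nothing to do, since then $p_{M,acc}(x)>1/2$ for $x\in L$ while $p_{M,rej}(x)=1$ for $x\notin L$, so $M$ itself already witnesses $L\in\klbpa{k}$. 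Hence assume $\varepsilon\in[1/2,1)$. Choose a constant $t$ with $\varepsilon^{t}<1/2$ (possible because $\varepsilon<1$), and aim to build a $k$-lpa $N$ that simulates $t$ mutually independent trials of $M$ and accepts exactly when at least one trial accepts. For $x\in L$ all $t$ trials reject with probability at most $\varepsilon^{t}<1/2$, so $p_{N,acc}(x)\geq 1-\varepsilon^{t}$, bounded away from $1/2$ by the constant $1/2-\varepsilon^{t}$; for $x\notin L$ no trial ever accepts, so $p_{N,acc}(x)=0$. Thus $N$ would place $L$ in $\klbpa{k}$ (in fact with one-sided error).

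I would implement the $t$ trials not sequentially but in lock-step on a single tape split into $t$ tracks, each track carrying the working content of one trial while all trials share one tape head. The decisive observation for $k$-limitedness is that a single physical visit of $N$'s head to a cell serves simultaneously as the visit of every trial to its own track of that cell: one rewrite of the combined cell symbol advances all $t$ tracks at once. Consequently the number of times $N$ rewrites a cell equals the number of times the shared head visits it, which is exactly the per-cell visit count of a single run of $M$ and hence at most $k$. Grouping the combined symbols by common visit count $i$ into new layers consisting of $t$-fold products of the symbols of $\Gamma_i$ then shows that $N$ meets the $k$-limitedness requirement whenever $M$ does. Independence of the trials is obtained by giving each trial its own fresh random choices at every step, and the expected-polynomial-time bound is preserved because each trial halts in expected polynomial time and there are only a constant number $t$ of them.

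The main obstacle is precisely the lock-step requirement: a single head can drive $t$ copies only if they never disagree about where the head should go next, whereas a general two-way $k$-lpa turns and reverses according to its (random) local state, so independent trials would rapidly desynchronize. I would resolve this by first putting $M$ into a \emph{sweeping} form, in which the head motion is oblivious---the head performs full left-to-right and right-to-left sweeps and reverses direction only at the endmarkers, so that its position is a function of global time alone and is identical for every trial. Once the trials share an input- and randomness-independent motion, a trial that enters a halting state simply records its verdict and becomes inert while $N$ keeps sweeping on behalf of the still-active trials, finally halting and taking the disjunction of the recorded verdicts once all trials have finished. Establishing such a sweeping normal form for $k$-lpa's without violating $k$-limitedness or the expected-polynomial-time bound is the technical heart of the argument; with it in hand the amplification and the track construction above go through, yielding $L\in\klbpa{k}$ and hence $\klra{k}\subseteq\klbpa{k}$ for every $k\geq1$.
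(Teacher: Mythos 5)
Your overall strategy---amplify the one-sided success probability by running $t$ independent trials---is not the paper's, and the place where you yourself locate ``the technical heart of the argument'' is exactly where it breaks down. The sweeping normal form you invoke is left entirely unproven, and it cannot exist in the generality you need: by the $k$-limitedness requirement, every one of the first $k$ visits to a cell \emph{must} advance its symbol from $\Gamma_i$ to a higher layer, so a machine whose head performs only full end-to-end sweeps exhausts all $k$ permitted rewrites of every cell within its first $k$ sweeps, after which the entire tape is frozen and the machine degenerates into a two-way probabilistic finite automaton reading a fixed string. This destroys precisely the data-dependent turning behavior (e.g., walking left to the topmost non-blank symbol to simulate a stack pop) that gives $k$-lpa's their power, so a reduction to sweeping form cannot preserve the recognized language. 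Without it, the lock-step multi-track simulation has no way to keep the $t$ trials synchronized, and running them sequentially instead would multiply the per-cell rewrite count by $t$, violating $k$-limitedness. So the proposal as written does not go through.

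The paper avoids repetition altogether. Given a one-sided-error machine $M$ with error bound $\varepsilon\geq 1/2$, it runs $M$ exactly once and post-processes the outcome: upon $M$ entering a rejecting state, the new machine accepts with a fixed probability $\alpha=1-\frac{1-2\delta}{2\varepsilon}$ (for a suitable constant $\delta$ with $0<\delta<\frac{1-\varepsilon}{2(1+\varepsilon)}$) and rejects with probability $1-\alpha$; upon $M$ accepting, it accepts outright. A short calculation shows that the acceptance probability is at least $\frac{1}{2}+\delta$ on members of $L(M)$ and the rejection probability is at least $\frac{1}{2}+\delta$ on non-members, yielding two-sided error bounded away from $\frac{1}{2}$. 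Since this adds only a terminal coin flip and never touches the tape again, $k$-limitedness and the expected running time are trivially preserved. If you want to salvage an amplification-style argument you would have to find some other way to rerun $M$ within the rewrite budget; the paper's probability-shifting trick shows this is unnecessary.
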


\begin{proof}
Take any $k$-lra $M$ and assume without loss of generality that $\varepsilon \geq 1/2$. Choose a constant $\delta$ satisfying $0<\delta<\frac{1-\varepsilon}{2(1+\varepsilon)}$. We set $\alpha = 1-\frac{1-2\delta}{2\varepsilon}$. Given an input $x$, we first run $M$ on $x$. Whenever $M$ enters a rejecting state, we accept with probability $\alpha$ and reject with probability $1-\alpha$. On the contrary, when $M$ enters an accepting state, we accept with probability $1$. For any input $x\in L(M)$, the total acceptance probability becomes $1-\varepsilon+\varepsilon\cdot\alpha\geq \frac{1}{2}+\delta$. For the other input $x\notin L(M)$, the total rejection probability is $1-\alpha \geq \frac{1}{2}+\delta$. Hence, $L(M)$ belongs to $\klbpa{k}$.
\end{proof}

We further define $\klpa{\omega}$ to be the union $\bigcup_{k\in\nat^{+}} \klpa{k}$. Similarly, we can define $\klda{\omega}$, $\klra{\omega}$, $\klbpa{\omega}$, and $\klna{\omega}$. It then follows that $\klda{\omega} \subseteq \klra{\omega}\subseteq \klbpa{\omega} \subseteq \klpa{\omega}$.

\subsection{Blank Skipping Property for Limited Automata}\label{sec:blank-skipping}

Hibbard \cite{Hib67} proved that $\klna{\omega}=\cfl$  and  Pighizzini and Pisoni \cite{PP15} demonstrated that $\twolda$ coincides with $\dcfl$.
It is also possible to show that $\pcfl \subseteq \klpa{2}$ and $\bpcfl \subseteq \klbpa{2}$ using the ideal-shape property of $\pcfl$ and $\bpcfl$ (see Lemma \ref{2-lpda-construction}); however, the opposite inclusions are not known to hold.
Therefore, our purpose of exact characterizations of $\pcfl$ and $\bpcfl$ requires a specific property of $k$-lpa's, called \emph{blank skipping}, for which a $k$-lpa writes only a unique blank symbol, say, $B$ during the $k$th visit and it makes only the same deterministic moves while reading $B$ in such a way that it neither changes its inner state nor changes the head direction (either to the right or to the left); in other words, it behaves exactly in the same way while reading consecutive blank symbols.
When a $k$-lpa passes a cell for the $k$th time, it must make the cell \emph{blank} (i.e., the cell has $B$) and the cell
becomes \emph{frozen} afterward. This property plays an essential role in simulating various limited automata on pushdown automata.
In what follows, we define this property
for various limited automata.

\begin{definition}
Let $k\in\nat^{+}$. A $k$-limited automaton is said to be \emph{blank skipping} if (1) $\Gamma_k=\{\cent,\dollar,B\}$, where $B$ is a unique blank symbol, and (2) while reading a consecutive series of $B$-symbols, the machine must maintain the same inner states and the same head direction in a deterministic fashion. More formally, the condition (2) states that there are two disjoint subsets $Q_{+1},Q_{-1}$ of $Q$ for which $\delta(q,B \mid q,B,d)=1$ for any direction $d\in\{\pm1\}$ and any inner state $q\in Q_d$. See Fig.~\ref{fig:head-move}.
\end{definition}


To emphasize the use of the \emph{blank skipping property}, we append the prefix ``bs-'', as in $\mathrm{bs}\mbox{-}\klpa{k}_{\varepsilon}$ and $\mathrm{bs}\mbox{-}\klua{k}$.


\begin{figure}[t]
\centering
\includegraphics*[height=3.0cm]{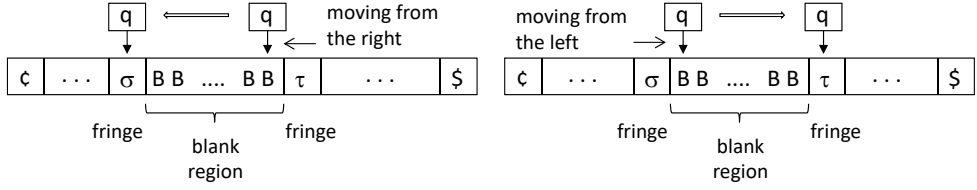}
\caption{A tape head that makes blank skipping.}\label{fig:head-move}
\end{figure}



Let us start to prove the blank skipping property of nondeterministic limited automata.

\begin{lemma}{\rm [Blank Skipping Lemma]}\label{blank-skipping}
Let $k$ be any integer with $k\geq2$. Given any $k$-lna $M$, there exists another $k$-lna $N$ such that (1) $N$ is blank-skipping, (2) $N$ is recognition-equivalent to $M$, and (3) the number of accepting computation paths of $N$ matches that of $M$ on every input.
\end{lemma}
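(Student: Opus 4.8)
The plan is to eliminate the two features of a general $k$-lna that a blank-skipping machine is forbidden to have: the dependence of a move on the \emph{particular} frozen symbol being scanned, and the possibility of turning around \emph{inside} a maximal frozen block. A blank-skipping $N$ must overwrite every cell, on its $k$th visit, by the single symbol $B$, and then traverse any maximal run of $B$'s in one straight sweep that neither changes the inner state nor reverses direction. The guiding idea is therefore to pre-compute, at the instant a cell settles into $\Gamma_k$, the entire \emph{future} interaction of $M$ with the frozen block that this cell belongs to, and to let $N$ record only a finite summary of that interaction, so that later $N$ can shoot across a block of $B$'s in a single pass after a table look-up at the block's boundary.

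The technical device I would use is the \emph{crossing profile} of a fixed frozen word $w\in\Gamma_k^{*}$. Treating $w$ as an immutable string (by the $k$-limitedness requirement, reading a $\Gamma_k$-cell leaves it unchanged), the head of $M$ restricted to $w$ behaves as a two-way finite-state device, and I would attach to $w$ a finite object $\beta(w)$ recording, for each pair (entry state, entry side), the multiset of possible (exit state, exit side) pairs together with their path multiplicities, as well as the numbers of accepting and of rejecting halts occurring strictly inside $w$ and whether an unbounded (looping) computation is possible. Because there are only finitely many such profiles and because they compose---$\beta(uv)$ is determined by $\beta(u)$ and $\beta(v)$ through a count-exact product $\beta(u)\ast\beta(v)$---each profile can be carried as a single piece of finite-control data. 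Crucially, the multiplicities are tracked so that replacing an explicit traversal of $w$ by a single profile-driven jump preserves the number of computation paths exactly, which is what condition (3) of the lemma demands.

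With profiles in hand, I would build $N$ so as to maintain the invariant that every maximal frozen block appears on the tape as a run of $B$'s whose crossing profile is available to the head at the block's fringes (stored using the still-writable symbols of $\Gamma_1,\dots,\Gamma_{k-1}$ on the adjacent non-frozen cells, never inside the block itself). Whenever $M$ would enter such a block, $N$ consults the profile, selects once and for all the exit state and side (splitting into the appropriate number of nondeterministic branches to match the multiplicities, and branching into a halting or looping gadget when $\beta$ calls for it), enters a dedicated crossing state of $Q_{+1}$ or $Q_{-1}$, and sweeps across the $B$'s by the rule $\delta(q,B \mid q,B,d)=1$. When a cell settles---creating a new one-cell block, extending a block, or \emph{merging} two blocks into one---$N$ updates the neighbouring profile(s) locally, using the fact that traversing frozen cells costs no visit budget (they remain at index $k$), so the head may move freely across a $B$-run to reach the fringe where a profile must be refreshed.

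The main obstacle is precisely this bookkeeping under the \emph{dynamics} of block formation. Two separate blocks can merge when the single cell between them settles, and the merged profile must then be made available at the two \emph{outer} fringes, which may lie far from the head; doing so within the $k$-visit rewriting budget (so that $N$ is itself a legal $k$-lna) and without spawning or losing computation paths is the delicate point. I would try to control it by exploiting the turn-counting convention built into the $k$-limitedness requirement, which constrains the order in which neighbouring cells can reach index $k$ and hence should bound how often a given fringe's profile has to be rewritten, and by keeping the profile redundantly at both fringes of each block so that, at a merge, the head---sitting at the junction with both adjacent sub-block profiles in view---can form the product $\beta(u)\ast\beta(v)$ and propagate it with a single budget-free sweep along the now-uniform $B$-run. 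Verifying that these updates stay within budget and that the count-exact composition never perturbs the global path count is where I expect the bulk of the work to lie.
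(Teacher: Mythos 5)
Your proposal follows essentially the same route as the paper's proof: your crossing profiles $\beta(w)$ are the paper's crossing matrices $T_w$ (built from $S_\delta$ and composed via $D_\delta$), likewise stored in the tape symbols at the fringes of blank regions and in the inner states used to sweep across them, and likewise combined when regions grow or merge, with the nondeterministic multiplicities tracked to preserve the accepting-path count. The merge/propagation bookkeeping you flag as the delicate point is handled in the paper exactly as you suggest, by carrying the composed matrix in the inner state during the budget-free sweep to the far fringe.
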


In the case of $k$-lda's, as shown in Proposition \ref{LPFA-blank-skipping}, we can transform limited  automata into their blank skipping form and this is, in fact, a main reason that $\twolda$ equals $\dcfl$ (due to Theorem \ref{LPFA=PPDA}(2) with setting $\varepsilon=0$ and using $\dcfl=\oneppda_{0}$).
From Lemma \ref{blank-skipping}, the proposition follows immediately because the inclusions
$\mathrm{bs}\mbox{-}\klda{k} \subseteq \klda{k}$,
$\mathrm{bs}\mbox{-}\klna{k} \subseteq \klna{k}$, and $\mathrm{bs}\mbox{-}\klua{k} \subseteq \klua{k}$ are obvious and Lemma \ref{blank-skipping} yields the opposite inclusions as well.

\begin{proposition}\label{LPFA-blank-skipping}
For each index $k\geq 2$, $\klda{k}=\mathrm{bs}\mbox{-}\klda{k}$ and $\klua{k} = \mathrm{bs}\mbox{-}\klua{k}$.
\end{proposition}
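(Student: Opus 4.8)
The plan is to derive both equalities from the Blank Skipping Lemma (Lemma~\ref{blank-skipping}). In each case the inclusion $\mathrm{bs}\mbox{-}\klda{k}\subseteq\klda{k}$ (respectively $\mathrm{bs}\mbox{-}\klua{k}\subseteq\klua{k}$) is immediate, since imposing the blank-skipping condition only shrinks the class of eligible machines; every blank-skipping $k$-lda is in particular a $k$-lda, and likewise for the unambiguous model. The entire content therefore lies in the opposite inclusions, and I would obtain both by feeding the machine that witnesses membership in $\klda{k}$ or $\klua{k}$ into the construction underlying Lemma~\ref{blank-skipping}, viewing a $k$-lda or a $k$-lua as a special $k$-lna.

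For the unambiguous case I would first recall that, by the definitions in Section~\ref{sec:formal-definition}, a $k$-lua is precisely a $k$-lna whose number of accepting computation paths is at most one on every input. Given $L\in\klua{k}$ recognized by a $k$-lua $M$, apply Lemma~\ref{blank-skipping} to $M$ regarded as a $k$-lna, obtaining a blank-skipping $k$-lna $N$ that is recognition-equivalent to $M$ and, by clause~(3), has the same number of accepting paths as $M$ on each input. Since $M$ has at most one accepting path per input, so does $N$; hence $N$ is itself unambiguous, i.e.\ a blank-skipping $k$-lua recognizing $L$. This gives $\klua{k}\subseteq\mathrm{bs}\mbox{-}\klua{k}$, and clause~(3) is exactly the feature that makes this work.

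For the deterministic case, clauses~(2)--(3) alone do \emph{not} suffice: matching the accepting-path count guarantees only that $N$ is unambiguous, since a nondeterministic machine may carry several \emph{rejecting} branches while still possessing at most one accepting branch. I would instead revisit the simulation used to prove Lemma~\ref{blank-skipping} and check that it is transition-faithful, in the sense that it introduces no new nondeterministic branching: the enforced behaviour over a maximal $B$-region is a fixed deterministic sweep to the opposite fringe, and each move selected at a fringe or at a non-blank cell is dictated by a single transition of $M$. Consequently, when $M$ is a $k$-lda, so that $\delta_M$ is single-valued, the bookkeeping that records frozen symbols and reconstructs them during blank sweeps never forces a choice, and the resulting transition function $\delta_N$ is again single-valued. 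Thus $N$ is a blank-skipping $k$-lda recognizing $L$, yielding $\klda{k}\subseteq\mathrm{bs}\mbox{-}\klda{k}$.

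The hard part will be precisely this last verification. The abstract statement of Lemma~\ref{blank-skipping} exposes only recognition-equivalence and preservation of the accepting-path count, invariants tailored to the nondeterministic and unambiguous settings; determinism is a finer property that the lemma statement does not record. Establishing it requires tracing the blank-skipping simulation step by step and confirming that every move of $N$ corresponds to a uniquely determined move of $M$, so that no auxiliary nondeterminism is smuggled in while the machine skips over and restores $B$-regions.
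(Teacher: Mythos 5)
Your proposal is correct and takes essentially the same route as the paper: the inclusions $\mathrm{bs}\mbox{-}\klda{k}\subseteq\klda{k}$ and $\mathrm{bs}\mbox{-}\klua{k}\subseteq\klua{k}$ are noted as trivial, and the opposite inclusions are read off from Lemma~\ref{blank-skipping}, with clause~(3) handling the unambiguous case exactly as you describe. The verification you flag as ``the hard part'' for the deterministic case is in fact already contained in the paper, since the proof of Lemma~\ref{blank-skipping} is explicitly carried out for $k$-lda's first (so the constructed blank-skipping machine $N$ is deterministic whenever $M$ is) and only afterwards amended for $k$-lna's and $k$-lua's.
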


In what follows, let us begin the proof of Lemma \ref{blank-skipping}.

\vs{-2}
\begin{proofof}{Lemma \ref{blank-skipping}}
The following argument uses in part a basic idea of Pighizzini and Pisoni \cite{PP15}. We first describe the proof of the lemma for $k$-lda's and then explain how to amend it for $k$-lna's (and thus for $k$-lua's).

Let $k\geq2$ be any integer and let $M = (Q,\Sigma,\{\cent,\dollar\}, \{\Gamma_i\}_{i\in[k]},\delta,q_0,Q_{acc},Q_{rej})$ be any $k$-lda.
Let $\Gamma=\bigcup_{i\in[0,k]_{\integer}}\Gamma_i$
with $\Gamma_0=\Sigma$.
Note that, as long as $a\in\Gamma-\Gamma_k$, we can uniquely determine the tape-head direction from $a$ alone. Let $D=\{\pm1\}$ and set $\ell=|Q\times D|$.

Firstly, we modify $M$ so that $M$ remembers the tape-head direction at the current step. This can be done by defining a new machine with the following items. Let $\tilde{q}_0 = (q_0,+1)$, $\tilde{Q}= Q\times D$, $\tilde{Q}_{acc}=\{(q,d)\mid q\in Q_{acc},d\in D\}$, $\tilde{Q}_{rej}=\{(q,d)\mid q\in Q_{rej},d\in D\}$, $\tilde{\delta}((q,d),a)=((p,d'),b,d')$ whenever $\delta(q,a)=(p,b,d')$. For simplicity, hereafter, we assume that $M$ has these items $\tilde{Q}$, $\tilde{Q}_{acc}$, $\tilde{Q}_{rej}$, $\tilde{q}_0$, and $\tilde{\delta}$, but we intentionally drop ``$\sim$'' (tilde) and write them as $Q$, $Q_{acc}$, $Q_{rej}$, $q_0$, and $\delta$, respectively.

Let us introduce two notations $S_{\delta}$ and $D_{\delta}$. For each string $w$, let $S_{\delta}(q,d \mmid w \mmid q',d')$ equal $1$ if $M$ enters the $w$-region in direction $d$ in inner state $q$, stays in the inside of the $w$-region, and eventually leaves the $w$-region in direction $d'$ in inner state $q'$, and let $S_{\delta}(q,d \mmid w \mmid q',d')$ be $0$ otherwise.
We also set $T_w = (t_{ij})_{i,j\in Q\times D}$ to be an $\ell\times\ell$ $\{0,1\}$-matrix such that, for any index pair $i=(q,d)$ and $j=(q',d')$ in $Q\times D$, the value $t_{ij}$ equals $S_{\delta}(i\mmid w \mmid j)$. Since the total number of distinct matrices $T_{w}$ for any $w$ is at most $2^{\ell}$, it is possible to  use $T_w$ as a part of inner states of $N$.
Moreover, let $D_{\delta}(T_u \mmid q,d \mmid T_v)$ denote the set of all pairs $(s,d')$ such that, when $d=+1$ (resp., $d=-1$), $M$ enters the $v$-region (resp., the $u$-region) in direction $-1$ (res., $+1$), stays in the $uv$-region,  and eventually leaves the $uv$-region in direction $d'$ in state $q'$. To compute the value $D_{\delta}(T_u \mmid q,d \mmid T_v)$, we need to use two matrices $T_u$ and $T_v$ but we do not need to remember $u$ and $v$.

In what follows, we wish to consider only the case of even $k$ because the other case is similar in principle. We construct the desired machine $N = (Q',\Sigma,\{\Gamma'_i\}_{i\in[k]},\delta',q'_0,Q'_{acc},Q'_{rej})$ from $M$. The desired $k$-lda $N$ works as follows.
Here, we try to meet the following requirements during the construction of $N$.
The new machine $N$ uses the same tape symbols as $M$ does, but $N$ also writes down other symbols of the form $[T_w,a,\emptyset]$, $[\emptyset,a,T_w]$, or $[T_w,a,T_u]$ to mark a fringe of a blank region. When $N$ writes $B$ over a non-blank symbol, it enters inner states of the form $[T_w,s]$ or $[T_w,s,d]$ with $s\in Q$ and $d\in D$. While $N$ stays in a blank region, however, it keeps the same inner states of the form $[T_u,s,d]$.
We first deal with the case where the tape head moves from
the left to the current cell.

\s

1. In the case of the first at most $k-2$ visits to the current cell, $N$ simulates $M$ precisely.

2. At the $(k-1)$th visit to the current cell containing symbol $a\notin \Gamma_k$, if $M$ changes the current inner state, say, $q$ to $p$ (where $p,q\in Q$) and writes  $f\in\Gamma_k$ over the symbol  $a$, then $N$ writes $B$ and changes $q$ to $[T_f,p]$ as a new inner state of $N$.

3. At the $(k-1)$th visit to $a\notin\Gamma_k$, we assume that $N$ already has inner state $[T_f,q]$, which indicates that $M$ has just written a $\Gamma_k$-symbol $f$ in the previous step over a certain non-$\Gamma_k$-symbol. Let us consider two subcases.
(a) If $M$ changes $q$ to $p$ and writes $b\notin\Gamma_k$, then $N$ enters inner state $p$ and writes $[T_f,b,\emptyset]$.
(b) In contrast, assume that $M$ writes $g\in\Gamma_k$ instead of $b$.  If $M$ moves to the right with no turn, then $N$ writes $B$,  enters inner state $[T_{fg},p]$, and moves to the right. If $M$ makes a left turn and $D_{\delta}(T_f \mmid p,-1 \mmid T_g)=(s,-1)$ (resp., $(s,+1)$), then $N$ enters $[T_{fg},s,-1]$ (resp., $[T_{fg},s,+1]$), moving to the left (resp., the right) since $(s,d)$ indicates that, once $M$ enters a region of consecutive $\Gamma_k$-symbols, it will leave this region in inner state $s$ and in direction $d$.

4. Assume that $N$ scans symbol $[\emptyset,a,T_u]$ in inner state $q$ at a fringe of a certain blank region. If $M$ moves to the right, changes $q$ to $p$, and satisfies $D_{\delta}(\emptyset \mmid q,+1 \mmid T_u)=(s,d)$, then $N$ enters inner state $[T_u,s,d]$.

5. In a blank region, $N$ maintains the same inner state of the form $[T_u,s,d]$ while moving in the same direction $d\in\{L,R\}$.

6. Consider the situation in which $N$ leaves the right end of a blank region with inner state $[T_u,s,+1]$ into its fringe that holds a symbol of the form $[T_v,b,\emptyset]$. There are three subcases to consider separately.
(a) If $M$ moves from this fringe to the right, changes the existing symbol, say, $b$ to $b'\notin\Gamma_k$, and enters inner state $p$, then $N$ updates the cell's symbol $b$ to $[T_u,b',\emptyset]$ and enters inner state $p$.
(b) In contrast, when $M$ changes $b$ to $f\in\Gamma_k$ instead, then $N$ overwrites $B$ and enters inner state $[T_{uf},p]$. In this case, the blank region grows rightwards.
(c) On the contrary, when $M$ makes a left turn, then $N$ enters inner state $[T_{uf},s',d]$ if $D_{\delta}(T_u \mmid p,-1 \mmid T_f)=(s',d)$.

7. Assume that $N$ leaves the right end of a blank region in inner state $[T_w,s,+1]$ to a non-blank cell holding a symbol of the form  $[T_u,a,T_v]$.
(a) If $M$ enters $p$ from $s$, writes $a'\notin\Gamma_k$, and moves to the right with $D_{\delta}(\emptyset \mmid p,-1 \mmid T_v)=(s',d)$, then $N$ writes $[T_w,a',T_v]$, enters inner state $[T_v,s',d]$, and moves to the right.
(b) In contrast, if $M$ writes $f\in\Gamma_k$ instead of $a'$ with $D_{\delta}(T_w \mmid q,-1 \mmid T_v)=(s',d)$, then $N$ writes $B$, enters inner state $[T_{wv},s',d]$, and moves to the right.

8. All other cases (such as $M$'s tape head moves in an opposite direction) can be similarly treated.

\s

In the case where $M$ is a $k$-lna, we wish to add an extra explanation of how to amend the aforementioned procedure to deal with any nondeterministic computation of $M$. Since $M$ makes nondeterministic choices at every step, we need to apply the above procedure to each of those choices. We also need to change the definition of $S_{\delta}$ and $D_{\delta}$ to represent ``probabilities'' of the associated events to happen. Recall that nondeterministic moves can be viewed as a special case of probabilistic moves. As one sees, if $M$ accepts an input $x$ with error probability $\varepsilon(x)$, then so does $N$, and vice versa. Therefore, every $k$-lna can be simulated by the blank-skipping $k$-lna constructed above. This completes the proof of the lemma.
\end{proofof}

\subsection{Applications of the Blank Skipping Property}\label{sec:application-BSP}

Let us explore the usefulness of the blank skipping property of various models of $1$-limited and $2$-limited automata. This rather fundamental property links such limited automata to their associated pushdown automata. More precisely, the following characterizations hold.

\begin{theorem}\label{LPFA=PPDA}
Let $\varepsilon\in[0,1/2)$ and $\varepsilon'\in[0,1)$ be any two error bounds.
\begin{enumerate}\vs{-1}
  \setlength{\topsep}{-2mm}%
  \setlength{\itemsep}{1mm}%
  \setlength{\parskip}{0cm}%

\item $\twopfa_{\varepsilon} = \onelpa_{\varepsilon}$, $\tworfa_{\varepsilon'} = \onelra_{\varepsilon'}$, and $\twopfa_{ub} = \onelpa_{ub}$.

\item $\oneppda_{\varepsilon} = \mathrm{bs}\mbox{-}\twolpa_{\varepsilon}$,  $\onerpda_{\varepsilon'} = \mathrm{bs}\mbox{-}\twolra_{\varepsilon'}$, and $\oneppda_{ub} = \mathrm{bs}\mbox{-}\twolpa_{ub}$.

\item $\dcfl = \mathrm{bs}\mbox{-}\klda{2}$, $\cfl = \mathrm{bs}\mbox{-}\klna{2}$, and $\ucfl = \mathrm{bs}\mbox{-}\klua{2}$.
\end{enumerate}
\end{theorem}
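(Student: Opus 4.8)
The plan is to prove Theorem~\ref{LPFA=PPDA} by establishing two-way inclusions in each of the three parts, leveraging the Blank Skipping Lemma (Lemma~\ref{blank-skipping}) and the Ideal Shape Lemma (Lemma~\ref{transition-simple}) to keep the simulations manageable. Since parts (1) and (2) are stated for the full probabilistic models and part (3) follows from part (2) by specializing the error bound (deterministic $=$ error $0$, giving $\dcfl=\oneppda_0$; nondeterministic and unambiguous via the error-equivalence and accepting-path-count guarantees in Lemma~\ref{blank-skipping}), I would focus the real work on parts (1) and (2). The strategy throughout is to exploit that a $2$-limited automaton's tape head, after the blank-skipping transformation, traverses each cell a bounded number of times and turns frozen-blank cells into a region it crosses deterministically; this is exactly the behavior that lets a stack simulate the tape and vice versa.

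For part (2), the inclusion $\mathrm{bs}\mbox{-}\twolpa_{\varepsilon}\subseteq\oneppda_{\varepsilon}$ is the conceptual heart. First I would observe that in a blank-skipping $2$-lpa the tape head, while moving left through already-frozen ($B$-filled) territory, behaves deterministically and changes nothing; the only genuinely new information is generated when the head first rewrites a cell (first visit) and then finalizes it as $B$ (second visit). The idea is to have a $1$ppda push, onto its stack, a record of the tape contents to the left of the head as the $2$-lpa lays them down on its first left-to-right sweep, and then use pop operations to replay the head's leftward excursions: a leftward move in the blank region corresponds to popping, and the deterministic blank-skipping discipline means the pushdown automaton can reconstruct the inner-state transition upon re-emergence at a fringe. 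The matrix objects $T_w$ and the transition data $D_\delta$ from the proof of Lemma~\ref{blank-skipping} can be stored in the finite control since only finitely many such matrices exist, so the stack need only hold the sequence of fringe symbols. Conversely, for $\oneppda_{\varepsilon}\subseteq\mathrm{bs}\mbox{-}\twolpa_{\varepsilon}$ I would first put the given $1$ppda into ideal shape via Lemma~\ref{transition-simple}, so that pushes add single symbols and pops occur in controlled bursts; then the $2$-lpa simulates the stack directly on its tape, writing each pushed symbol in the next free cell on its first pass and skipping back over frozen blanks to locate the current stack top, using the blank-skipping property to cross the already-popped (blanked) region deterministically. Because both directions preserve acceptance and rejection probabilities exactly, the error bound $\varepsilon$ carries over, yielding the one-sided-error and unbounded-error versions simultaneously.

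For part (1), the same template applies with the pushdown automaton replaced by a $2$pfa and the stack alphabet emptied: a $1$-limited automaton may rewrite each cell only on the very first visit, so after one rewrite the entire tape is frozen, and a blank-skipping $1$-lpa behaves on its frozen tape exactly like a $2$pfa making two-way deterministic sweeps over the (possibly relabeled) input. The inclusion $\onelpa_{\varepsilon}\subseteq\twopfa_{\varepsilon}$ follows by noting the frozen relabeling is computed in one deterministic pass and absorbed into the $2$pfa's finite control; the reverse inclusion $\twopfa_{\varepsilon}\subseteq\onelpa_{\varepsilon}$ follows because a $2$pfa never needs to write, so a $1$-lpa can simply refrain from altering cells (or make a trivial first-visit relabeling) and reproduce the $2$pfa's two-way motion. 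Again the probabilities match exactly, handling the bounded, one-sided, and unbounded cases uniformly.

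The main obstacle I anticipate is the $\mathrm{bs}\mbox{-}\twolpa_{\varepsilon}\subseteq\oneppda_{\varepsilon}$ direction: faithfully encoding the $2$-lpa's leftward re-entries into a genuinely one-way stack discipline while keeping the acceptance probability identical, not merely bounded-error-equivalent. The delicate point is that the $2$-lpa may re-enter a frozen region, traverse it deterministically, and emerge with a state determined by the stored matrix $T_w$; the $1$ppda must reproduce this by a pop sequence whose net effect equals the precomputed $D_\delta$ transition, and one must verify that the probabilities assigned along these simulated pop bursts multiply to exactly the $2$-lpa's path probability, with no spurious extra randomness introduced. Confirming this exact probabilistic correspondence—rather than a looser inclusion up to constant error—is where the careful bookkeeping lies, and the ideal-shape normalization of Lemma~\ref{transition-simple} is what makes it tractable by guaranteeing that pops are cleanly triggered and push operations leave existing stack content untouched.
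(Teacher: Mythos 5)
Your proposal follows essentially the same route as the paper: item (2) is obtained by pairing the Ideal Shape Lemma (so that pushes and pops of the 1ppda map cell-by-cell onto first-visit rewrites and second-visit blankings of a blank-skipping $2$-lpa) with the converse push/pop simulation of a blank-skipping $2$-lpa on a 1ppda, item (1) is the symbol-relabeling argument between $1$-lpa's and 2pfa's, and item (3) reads the same constructions off in the deterministic, nondeterministic, and unambiguous regimes. The only cosmetic difference is that you import the $T_w$/$D_\delta$ bookkeeping from the proof of the Blank Skipping Lemma into the $\mathrm{bs}\mbox{-}\twolpa_{\varepsilon}\subseteq\oneppda_{\varepsilon}$ direction, which is unnecessary once the machine is already blank-skipping: it crosses a blank region deterministically without changing state, so the popped non-blank symbol alone determines the next transition, exactly as in the paper's Lemma~\ref{convert-lpa-to-ppda}.
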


Although the equalities $\dcfl=\klda{2}$ and $\cfl=\klna{2}$ are already known \cite{Hib67,PP15}, we intensionally include them  in Theorem \ref{LPFA=PPDA}(3) in order to clarify the usefulness of the blank skipping lemma for them.

As an immediate consequence, $\rcfl$, $\pcfl$, and $\bpcfl$ are also characterized in terms of blank-skipping limited automata similarly to $\dcfl$, $\cfl$, and $\ucfl$.

\begin{corollary}\label{RCFL-BPCFL-PCFL}
$\rcfl=\mathrm{bs}\mbox{-}\klra{2}$, $\pcfl=\mathrm{bs}\mbox{-}\klpa{2}$,  and $\bpcfl = \mathrm{bs}\mbox{-}\klbpa{2}$.
\end{corollary}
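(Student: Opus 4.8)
The plan is to derive this corollary directly from Theorem~\ref{LPFA=PPDA}(2) by unfolding the union-over-error-bounds that is concealed in the notation on both sides of each asserted equality. Recall that, by definition, $\pcfl = \oneppda_{ub}$, $\bpcfl = \bigcup_{\varepsilon\in[0,1/2)}\oneppda_{\varepsilon}$, and $\rcfl = \bigcup_{\varepsilon\in[0,1)}\onerpda_{\varepsilon}$; symmetrically, on the limited-automaton side, $\mathrm{bs}\mbox{-}\klpa{2}$ abbreviates the unbounded-error class $\mathrm{bs}\mbox{-}\twolpa_{ub}$, while $\mathrm{bs}\mbox{-}\klbpa{2} = \bigcup_{\varepsilon\in[0,1/2)}\mathrm{bs}\mbox{-}\twolpa_{\varepsilon}$ and $\mathrm{bs}\mbox{-}\klra{2} = \bigcup_{\varepsilon\in[0,1)}\mathrm{bs}\mbox{-}\twolra_{\varepsilon}$. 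So each of the three asserted equalities should reduce to a family of pointwise equalities already supplied by the theorem.

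For the unbounded-error case, $\pcfl = \mathrm{bs}\mbox{-}\klpa{2}$ is nothing more than the third equality $\oneppda_{ub} = \mathrm{bs}\mbox{-}\twolpa_{ub}$ of Theorem~\ref{LPFA=PPDA}(2), read through the definitions above, so no further work is needed. For the bounded-error case, I would fix an arbitrary $\varepsilon\in[0,1/2)$, invoke the first equality $\oneppda_{\varepsilon} = \mathrm{bs}\mbox{-}\twolpa_{\varepsilon}$ of part~(2), and then take the union over all such $\varepsilon$; since set-equality is preserved under arbitrary unions, this yields $\bpcfl = \bigcup_{\varepsilon\in[0,1/2)}\oneppda_{\varepsilon} = \bigcup_{\varepsilon\in[0,1/2)}\mathrm{bs}\mbox{-}\twolpa_{\varepsilon} = \mathrm{bs}\mbox{-}\klbpa{2}$. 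The one-sided-error case is identical in structure: fix $\varepsilon'\in[0,1)$, apply the middle equality $\onerpda_{\varepsilon'} = \mathrm{bs}\mbox{-}\twolra_{\varepsilon'}$, and take the union over $\varepsilon'$ to obtain $\rcfl = \mathrm{bs}\mbox{-}\klra{2}$.

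There is essentially no hard step here; the entire content is carried by Theorem~\ref{LPFA=PPDA}(2). The only point that warrants a moment's care is bookkeeping: I would confirm that the pointwise equalities in part~(2) genuinely hold for \emph{every} error bound in the stated ranges (rather than for a single representative value), so that the two unions on each side range over exactly the same index set and the families coincide term by term. Once this is verified, the corollary follows immediately.
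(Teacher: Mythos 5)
Your proposal is correct and matches the paper's intent exactly: the paper states this corollary as an ``immediate consequence'' of Theorem~\ref{LPFA=PPDA}(2) with no further argument, and the content is precisely the unfolding of the unions over error bounds that you carry out. Your one point of care---that Theorem~\ref{LPFA=PPDA}(2) is quantified over \emph{all} $\varepsilon\in[0,1/2)$ and $\varepsilon'\in[0,1)$---is indeed satisfied by the theorem's statement, so the term-by-term union argument goes through.
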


Let us prove Theorem \ref{LPFA=PPDA}. Item (2) of the theorem, in particular, follows from the fact shown in Section  \ref{sec:ideal-shape-lemma} that 1ppda's can be converted into their ``ideal shapes.''
The proof of the theorem relies on two supporting lemmas: Lemmas \ref{convert-lpa-to-ppda} and \ref{2-lpda-construction}.
Lemma \ref{convert-lpa-to-ppda} helps us convert blank-skipping 2-lpa's into error-equivalent 1ppda's and Lemma \ref{2-lpda-construction} helps us convert ideal-shape 1ppda's into error-equivalent blank-skipping 2-lpda's.

\begin{lemma}\label{convert-lpa-to-ppda}
Let $n\geq2$. Every $n$-state blank-skipping 2-lpa working on an alphabet can be converted into an error-equivalent 1ppda with at most $2n$ inner states.
\end{lemma}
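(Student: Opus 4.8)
The plan is to read off the computation of the $2$-lpa from a single left-to-right pass of a $1$ppda, using the stack to store the once-visited cells. First I would record the rigid motion that the $2$-limitedness requirement forces. Since $k=2$, a cell carrying a $\Gamma_1$-symbol can only be rewritten by a leftward move (a rightward move would push the index to the nonexistent $\Gamma_3$), so, because $\delta[q,\sigma]=1$, the head is \emph{forced to move left and blank the cell} whenever it reads a $\Gamma_1$-symbol. By the blank-skipping property the head also moves deterministically in one fixed direction across any run of $B$'s, and a $\Gamma_0$-cell is only ever encountered while moving right. Combining these, once the head turns left it sees only $\Gamma_1$- and $B$-symbols and hence cannot turn again until it hits $\cent$: every leftward sweep runs all the way back to the left endmarker. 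This yields the invariant that, each time the head is poised to read a fresh input symbol, the tape reads $\cent\,B^{*}\,\Sigma^{*}\,\dollar$ (a blanked prefix followed by still-unvisited input) and the frontier between the two parts moves monotonically to the right.

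Next I would build a $1$ppda $N$ that tracks the $2$-lpa's state together with a one-bit record of the head direction, and that keeps on its stack exactly the $\Gamma_1$-symbols lying between $\cent$ and the frontier, topmost symbol corresponding to the cell nearest the frontier. While the simulated head moves right, $N$ is in ``$+1$-mode'': on reading the next input symbol $\sigma$ it applies the first-visit transition $\delta(q,\sigma \mmid p,\tau,+1)$, pushing $\tau\in\Gamma_1$ and advancing; if the $2$-lpa instead turns left, $N$ consumes $\sigma$ (which becomes $B$) and switches to ``$-1$-mode'' without pushing. In $-1$-mode $N$ replays the leftward sweep by $\lambda$-moves, popping the top symbol $a\in\Gamma_1$ and applying $\delta(q,a \mmid p,B,-1)$; because the head revisits the once-written cells in exactly the reverse of the order in which they were written, the LIFO discipline presents them correctly. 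When only $\bot$ remains, $N$ has reached $\cent$, and a single $\lambda$-move applies the turn transition $\delta(q,\cent \mmid p,\cent,+1)$ and returns to $+1$-mode; the blank stretch the real head then slides across costs $N$ nothing, since its input head already sits at the frontier and the carried state is unchanged, so it simply reads the next symbol. Reading $\dollar$ and the subsequent $\lambda$-moves are treated identically.

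For the quantitative claims, the inner states of $N$ are the pairs $(q,d)\in Q\times\{+1,-1\}$, so there are at most $2n$ of them; the direction bit is precisely what tells $N$ whether its next action is an input-reading push/turn or a $\lambda$-pop, so the requirement $\delta[q,\sigma,a]+\delta[q,\lambda,a]=1$ is met with no conflict. Each simulated move is given the same probability as the corresponding $2$-lpa move, and the only transitions not simulated explicitly are the deterministic (probability-$1$) blank skips; hence every computation path of $M$ and its image under $N$ carry identical probabilities, so $p_{acc}$ and $p_{rej}$ agree and $N$ is error-equivalent to $M$ (in particular the one-sided-, bounded-, and unbounded-error conditions are all preserved). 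Since each cell is pushed at most once and popped at most once, $N$'s running time is bounded by that of $M$ up to a constant, so $N$ inherits the expected-polynomial-time bound.

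The hard part will be justifying the structural rigidity rather than the coding: proving that a leftward sweep never turns before $\cent$ and that no ``island'' of fresh input is ever created to the right of a blank region, so that the invariant $\cent\,B^{*}\,\Sigma^{*}\,\dollar$ genuinely holds throughout. I would then have to verify that the stack reproduces the leftward sweep \emph{with matching probabilities}, handling the boundary behaviour at $\dollar$, the case of a turn occurring exactly at the frontier (which must still count as consuming one fresh symbol, so that progress is made and no infinite regress occurs), and the possibility of halting in mid-sweep. Keeping the bookkeeping of the true frontier against $N$'s input-head position exactly aligned, so that emerging from each leftward sweep leaves $N$ poised to read the correct next symbol, is where the argument is most delicate.
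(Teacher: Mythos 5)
Your proposal is correct and takes essentially the same route as the paper's proof: the paper likewise uses inner states $Q\times\{+,-\}$ (hence $2n$ states), pushes the $\Gamma_1$-symbol written on each first rightward visit, replays the forced leftward sweep by $\lambda$-move pops, and lets blank skipping come for free, with probabilities transferred move-for-move. The only quibble is that your stated invariant $\cent\,B^{*}\,\Sigma^{*}\,\dollar$ omits the segment of once-visited $\Gamma_1$-cells between the blanked prefix and the frontier, but your own description of the stack contents shows you have the correct picture, so this is a harmless slip.
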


\begin{proof}
Given a blank-skipping 2-lpa $M$ over input alphabet $\Sigma$ with $\{\Gamma^{(1)},\Gamma^{(2)}\}$, we wish to simulate it on an appropriate 1ppda, say, $N$ with the same error probability. Starting with input $x$, when $M$ reads a new input symbol $\sigma\in\Sigma$ and changes it to $\tau$ in a single step, we read $\sigma$ and push $\tau$ into a stack. We remember a tape-head direction of $M$ internally (using inner states). In contrast, when $M$ moves its tape head leftwards by skipping the blank symbols $B$ to the first non-blank symbol, say, $\tau\in \Gamma^{(1)}$ and changes it to $B$, we simply pop a topmost stack symbol.

More formally, let $M=(Q,\Sigma,\{\cent,\dollar\}, \{\Gamma^{(1)}, \Gamma^{(2)}\}, \delta,q_0,Q_{acc},Q_{rej})$ denote any $n$-state blank-skipping 2-lpa with error probability $\varepsilon(x)$ for every input $x$ on a fixed alphabet $\Sigma$.
We define the desired 1ppda $N=(Q',\Sigma,\{\cent,\dollar\},\Gamma', \Theta_{\Gamma'},\delta',q'_0,Q'_{acc},Q'_{rej})$ in the following way. Firstly, we set $Q'=\{q^{(r)} \mid q\in Q, r\in\{+,-\}\}$, $Q'_{acc}=\{q^{(r)}\mid q\in Q_{acc},r\in\{+,-\}\}$, and $Q'_{rej}=\{q^{(r)}\mid q\in Q_{rej},r\in\{+,-\}\}$, where ``$+$'' and ``$-$'' respectively express the tape-head directions ``$+1$'' and ``$-1$'' of $\delta$.
We further set the initial state $q'_0$ to be $q^{(+)}_0$.
The desired transition function $\delta'$ is described in the following table.

Let $a\in\Gamma'$, $\sigma^{(1)}\in\Gamma^{(1)}$, and $p^{(*)}\in\{p^{(+)},p^{(-)}\}$ for any $p\in Q$. We also write $\sigma$ in $\Sigma$ as $\sigma^{(0)}$ for clarity.

\s
\begin{tabular}{ll}
$\delta'(q^{(+)},\sigma^{(0)},a \mmid p^{(+)},\sigma^{(1)}a) = \delta(q,\sigma^{(0)} \mmid p,\sigma^{(1)},+1)$ &   \\
$\delta'(q^{(+)},\sigma^{(0)},a \mmid p^{(-)},\sigma^{(1)}a) = \delta(q,\sigma^{(0)} \mmid p,B,-1)$  &  \\
$\delta'(q^{(*)},\lambda,\sigma^{(1)} \mmid p^{(+)},\lambda) = \delta(q,\sigma^{(1)} \mmid p,B,+1)$ & $\delta'(q^{(-)},\lambda,\sigma^{(1)}\mmid p^{(-)},\lambda) = \delta(q,\sigma^{(1)}\mmid p,B,-1)$ \\
$\delta'(q^{(+)},\lambda,a \mmid p^{(+)},a) = \delta(q,B \mmid p,B,+1)$ & $\delta'(q^{(-)},\lambda,a \mmid p^{(-)},a) = \delta(q,B \mmid p,B,-1)$ \\
$\delta'(q^{(+)}_0,\cent,\bot \mmid p^{(+)},\bot) = \delta(q_0,\cent \mmid p,\cent,+1)$ &
$\delta'(q^{(+)},\dollar,a \mmid p^{(-)},a) = \delta(q,\dollar \mmid p,\dollar,-1)$ \\
\end{tabular}
\s

\n By the definition of $\delta'$, it is obvious that the acceptance/rejection probability of $N$ matches that of $M$.
Moreover, the state complexity of $N$ is exactly $|Q'|=2|Q|=2n$.
\end{proof}

The ideal-shape form of 1ppda's is a key to Lemma \ref{2-lpda-construction}.

\begin{lemma}\label{2-lpda-construction}
Let $n,l\in\nat^{+}$. Let $L$ be a language over alphabet $\Sigma$ of size $l$ recognized by an $n$-state 1ppda $M$ in an ideal shape with error probability $\varepsilon(x)$ for any input $x$. There is a blank-skipping $2$-lpa $N$ that has $O(nl)$ states and recognizes $L$
with the same error probability.
The average runtime of $N$ is $O(|x|^2)$ multiplied by the average runtime of $M$ for any input $x$.
\end{lemma}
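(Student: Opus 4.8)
The plan is to reverse the construction of Lemma~\ref{convert-lpa-to-ppda}: there a blank-skipping $2$-lpa was simulated by a $1$ppda by reading-and-pushing on rightward moves and popping on leftward blank-skips, so here I would simulate an ideal-shape $1$ppda $M$ by a $2$-lpa $N$ that uses the tape region to the left of the head as an explicit representation of $M$'s stack. The guiding dictionary is: the non-frozen (i.e. not-yet-blanked) portion of the tape to the left of $N$'s head, read from the head leftwards, encodes the current stack content from top to bottom. On the first ($i=0\to1$) pass, $N$ reads an input symbol $\sigma$ and rewrites the cell with a $\Gamma_1$-symbol that records which stack symbol $M$ would have on top after processing $\sigma$; this realizes the push and stationary operations of the ideal shape (conditions (1),(2) of the ideal-shape definition). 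The crucial point is that because $M$ is in ideal shape, a pop is always triggered by first reading an input symbol and then performing a maximal run of $\lambda$-pops (conditions (iii),(iv) forbid a $\lambda$-move right after a push or stationary move, so pops come in clean input-triggered bursts). I would implement such a burst by having $N$ make a left turn and sweep leftward, using the blank-skipping mechanism to skip over already-popped (blanked) cells and land on the first surviving $\Gamma_1$-cell, which it blanks (this is the $k$th visit, since a turn counts as a double visit) to effect one pop; repeated leftward blank-skips then pop further symbols until the correct number is reached, after which $N$ turns right again and resumes reading input.

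The key steps, in order, are: (1) fix an ideal-shape $M=(Q,\Sigma,\{\cent,\dollar\},\Gamma,\Theta_\Gamma,\delta,q_0,\bot,Q_{acc},Q_{rej})$ with $|Q|=n$ and $|\Sigma|=l$, and set up $N$'s state set to carry $M$'s inner state together with a bounded amount of bookkeeping (head direction, a flag distinguishing ``reading fresh input'' from ``executing a pop burst,'' and the current topmost stack symbol, which must be remembered since the frozen cell already holds it); this is where the $O(nl)$ state bound comes from. (2) Define $N$'s alphabets: $\Gamma_0=\Sigma$, a working alphabet $\Gamma_1$ rich enough to encode each pair (scanned-input-symbol, resulting-topmost-stack-symbol) so that one leftward pass can reconstruct the stack, and $\Gamma_2=\{\cent,\dollar,B\}$ as required by blank skipping. (3) Specify $\delta'$ for the three move-types of the ideal shape: stationary and push are handled in a single left-to-right step that rewrites the current $\Gamma_0$-cell into the appropriate $\Gamma_1$-symbol; an input-triggered pop burst is handled by the turn-and-sweep routine described above. (4) Verify the blank-skipping conditions explicitly: that only $B$ is written on the second (frozen) visit and that on $B$-cells $N$ moves deterministically in a fixed direction without changing state (this is exactly the leftward skip over popped cells). (5) Check that acceptance/rejection probabilities of $N$ equal those of $M$ by matching computation paths step-for-step, so the error probability $\varepsilon(x)$ is preserved, and finally bound the runtime.

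For the runtime claim I would argue that each pop burst of $M$, which $M$ performs in $\lambda$-time (no head movement on the input), is realized in $N$ by a leftward sweep whose length is at most the current stack height, hence $O(|x|)$; since $M$ makes at most $O(\runtime_M(x))$ such bursts and each costs an $O(|x|)$ sweep, the total blows up by a factor of $O(|x|)$, giving the stated $O(|x|^2)\cdot$(average runtime of $M$) bound after accounting for the rightward reading passes.

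I expect the main obstacle to be the faithful handling of the pop bursts under the blank-skipping and $k$-limitedness constraints simultaneously. The delicate issue is that a single leftward sweep must (a) skip previously-blanked cells deterministically, (b) identify and blank exactly the right non-blank cell for each pop, and (c) correctly recover the new topmost stack symbol (which now lies one cell further left and is already frozen) so that $N$ can resume simulating $M$ after the turn. Because a turn counts as a double visit, I must be careful that the cell being popped is genuinely on its second visit so that blanking it is legal under the $k$-limitedness requirement, and that the freshly-exposed topmost cell—having itself already been written once—is not illegally rewritten but merely read. Getting the state bookkeeping to carry the topmost symbol across the turn, while keeping the state count at $O(nl)$ rather than blowing up, is the technically finicky part; everything else is routine transcription of the ideal-shape transition rules into tape operations, exactly mirroring Lemma~\ref{convert-lpa-to-ppda} in reverse.
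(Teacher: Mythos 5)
Your proposal follows essentially the same route as the paper's proof: the already-scanned tape region serves as an explicit copy of the stack, pushes are realized by overwriting the input cell with a stack symbol while the current top is carried in the finite control (yielding the $O(nl)$ state bound), and each input-triggered pop burst is realized by a leftward blank-skipping sweep that blanks the first surviving non-blank cell(s) before turning back, with the same probability-matching and $O(|x|^2)$-overhead analysis. The delicate points you flag --- legality of blanking under the double-visit rule at turns, and loading the newly exposed top symbol into the finite control before resuming --- are exactly the ones the paper's explicit transition table is built to handle, so no change of approach is needed.
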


\begin{proof}
Let $n$ and $l$ be any two constants in $\nat^{+}$ and let $\Sigma$ denote any alphabet of size $l$.
We take a language $L\subseteq\Sigma^*$ and an ideal-shape 1ppda $M = (Q,\Sigma,\{\cent,\dollar\},\Gamma, \delta,q_0, \bot,Q_{acc},Q_{rej})$ that recognizes $L$ with  error probability $\varepsilon(x)$ for any input $x$.

Without loss of generality, we further assume that, along each computation path, $M$ enters a halting state only on or after reading the endmarker $\dollar$ followed by a (possible) series of $\lambda$-moves. We want to construct a $2$-lpa $N = (Q',\Sigma,\{\Gamma_i\}_{i\in [2]}, \delta',q'_0, Q'_{acc},Q'_{rej})$ to
simulate $M$ with the same error probability.
A basic idea of the simulation of $M$ proceeds in the following way. Assuming that $M$ reads a new input symbol $\sigma$, if $M$ pushes a stack symbol $a$ into its stack, then we change $\sigma$ into $a$. If $M$ pops a topmost stack symbol, then we move a tape head leftwards to read the first non-blank symbol (which corresponds to the topmost stack symbol) and then replace it with $B$ (blank symbol).
In the case where $M$ makes a $\lambda$-move, since $M$'s move must be a pop operation because of the ideal-shape form, we take the same action as stated before.
In order to make this simulation blank-skipping, we further need to remember a state-stack pair of $M$'s last step and utilize it to skip a blank region.

To implement the above basic idea, we introduce the special notation of $[qa]$ to indicate that $M$ is in inner state $q$, reading stack symbol $a$ as well as a certain nonempty input symbol.
We define $Q'$ to be $(Q\times \Sigma) \cup Q$.
A new initial state $q'_0$ is set to be $[q_0\bot]$ and we define  $\Gamma_0=\Sigma$ and
$\Gamma_{2}=\{\cent,\dollar,B\}$.
For simplicity, we use the same notation $B$ to express ``blank'' in $\Gamma_1$ as well as in $\Gamma_2$. This irregular usage does not cause any problem because we can easily transform it into a standard $2$-lpa by adding extra symbols and extra transitions.

In contrast, while $M$ makes a series of $\lambda$-moves, $N$ stays in inner states taken from $Q$. When $M$ pushes $b$ and enters state $p$ with  probability $\gamma$, $N$ changes its inner state $[qa]$ to $[pb]$ and writes $a$ over $\sigma$ with the same probability. When $M$ pops and enters state $p$ from $q$, $N$ updates $\sigma$ to $B$, changes its inner state $[qa]$ to $p$. Recall that a pop operation at a non $\lambda$-move triggers a series of pop operations at $\lambda$-moves.
Hence, we can move the tape head to the left and to delete the corresponding symbols. Moreover, $N$ always skips the blank symbol $B$ with probability $1$. It is clear that $|Q'|= |Q||\Sigma|+|\{\bar{q} \mid q\in Q\}|= nl+n\leq 2nl$.

The formal definition of $\delta'$ is given in the following table. Let $\sigma\neq\lambda$ and $\sigma\in\Sigma$.

\s
\begin{tabular}{ll}
$\delta'([qa],\sigma \mmid [pa],B,+1) = \delta(q,\sigma,a \mmid p,a)$
& $\delta'([qa],\sigma \mmid [pb],a,+1) = \delta(q,\sigma,a \mmid p,ba)$ \\
$\delta'([qa],\sigma \mmid p,B,-1) = \delta(q,\sigma,a \mmid p,\lambda)$ ($a\neq \bot$)
& $\delta'(q,a \mmid p,B,-1) = \delta(q,\lambda,a \mmid p,\lambda)$ ($a\neq \bot$) \\

$\delta'(q,a \mmid [qa],B,+1) =1$ whenever $\delta[q,\lambda,a]=0$
& \\

$\delta'([qa],B \mmid [qa],B,+1) = 1$ & $\delta'(q,B \mmid q,B,-1) = 1$  \\

$\delta'([qa],\dollar \mmid q,\dollar,-1)= \delta(q,\dollar,a \mmid q,\lambda)$ ($a\neq \bot$)
&    \\
$\delta'(q'_0, \cent \mmid [p\bot],\cent,+1) = \delta(q_0,\cent,\bot \mmid p,\bot)$
& $\delta'(q'_0, \cent \mmid [pa],\cent,+1) = \delta(q_0,\cent,\bot \mmid p,a\bot)$.
\end{tabular}
\s

\n Note that $\bar{q}$ and $\bar{p}$ are new inner states not in $Q$, which respectively correspond to $q$ and $p$ and that $q_h$ is a member of $Q_{halt}$.
The above transitions indicate that the new 2-lpa satisfies all the requirements stated in the proposition.

Concerning the runtime of $N$, we note that the tape head always advances toward $\dollar$, except that, to erase a symbol in $\Gamma_1$, it moves backwards, changes the first encountered non-blank symbol to $B$, and returns. After reading $\dollar$, the tape head simply moves to the left to $\cent$. Thus, the average number of steps is upper bounded by $O(|x|^2)$ multiplied by the average runtime of $M$ on $x$.
\end{proof}

We then return to Theorem \ref{LPFA=PPDA} and provide its proof, based on Lemmas \ref{convert-lpa-to-ppda} and \ref{2-lpda-construction}.

\vs{-2}
\begin{proofof}{Theorem \ref{LPFA=PPDA}}
(1) It is rather easy to simulate a 2pfa on a 1-lpa that behaves like the 2pfa but changes each input symbol $\sigma$ to its corresponding symbol $\sigma'$.
On the contrary, we can simulate a 1-lpa $M$ by the following  2pfa $N$. A key idea of this simulation is that it is possible to transform $M$ into the specific form where $M$ always rewrites a new corresponding symbol over each distinct input symbol. Since $N$ can remember the correspondence between input symbols and their corresponding new symbols, whenever $N$ reads any overwritten symbol $\tau$, $N$ can translate it back to its  original input symbol $\sigma$ and then simulate $M$'s single step of processing $\sigma$.

(2) These equalities directly come from Lemmas \ref{convert-lpa-to-ppda} and \ref{2-lpda-construction}. Here, we only show that $\oneppda_{\varepsilon} = \mathrm{bs}\mbox{-}\klpa{2}_{\varepsilon}$ since the other equalities can be proven similarly.

Let $L$ be any language in $\oneppda_{\varepsilon}$ and take a 1ppda $M$ that recognizes $L$ with error probability at most $\varepsilon\in[0,1/2)$. By the ideal shape lemma (Lemma \ref{transition-simple}), $M$ can be assumed to be in an ideal shape. We then apply Lemma \ref{2-lpda-construction} and obtain a blank-skipping $2$-lpa $N$ that simulates $M$ with the same error probability on all inputs. Therefore, $L$ belongs to $\klpa{2}_{\varepsilon}$.
Conversely, for any language $L$ in $\mathrm{bs}\mbox{-}\klpa{2}$, take a blank-skipping $2$-lpa $N$ satisfying $L=L(N)$. Lemma \ref{convert-lpa-to-ppda} implies the existence of an ideal-shape 1ppda $M$ that is error-equivalent to $N$. We thus conclude that $L$ is in $\oneppda_{\varepsilon}$.

(3) This can be proven in a similar way as (2).
\end{proofof}

\subsection{Power of Limited Randomness}\label{sec:example-language}

We will see how randomness endows enormous power to underlying limited automata. In automata theory, the computational power of randomness was first observed by Freivalds \cite{Fre81b}, who designed a two-way probabilistic finite automaton that recognizes the non-regular language $\{a^nb^n\mid n\in\nat\}$ with bounded-error probability.

The language family $\klra{k}$ turns out to be relatively large. This comes from the fact that $\klpa{k}$ contains languages not recognized by any $k$-lda for each fixed index $k\geq2$.

\begin{theorem}\label{omega-lda-vs-2lrfa}
For any index $k\geq2$, $\klda{k} \subsetneqq \klra{k} \cap \klda{(k+1)}$.
\end{theorem}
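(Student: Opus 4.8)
The plan is to establish the two easy inclusions making up $\klda{k}\subseteq\klra{k}\cap\klda{(k+1)}$, and then to invest all the effort in producing a single language that lies in $\klra{k}\cap\klda{(k+1)}$ but outside $\klda{k}$. The inclusion $\klda{k}\subseteq\klra{k}$ is immediate, since a $k$-lda is literally a zero-error (hence one-sided $0$-error) $k$-lpa, so $\klda{k}=\klra{k}_{0}\subseteq\klra{k}$. The inclusion $\klda{k}\subseteq\klda{(k+1)}$ follows from the standard level-padding argument for limited automata, in which a $(k+1)$-lda faithfully runs a given $k$-lda while spending one harmless extra rewrite level; it is part of Hibbard's hierarchy. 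Consequently the whole content of the theorem is the \emph{strictness} of the containment.

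For strictness the key engine is that one-sided-error randomness absorbs a bounded disjunction for free. I would first prove the auxiliary containment $\klda{k}\vee\klda{k}\subseteq\klra{k}$ (and more generally the $m$-fold disjunctive closure for any fixed constant $m$). Given $k$-lda's $M_{1},M_{2}$ for $L_{1},L_{2}$, one builds a $k$-lra $N$ that, on its first move at the start cell, flips one fair coin to select $i\in\{1,2\}$ and thereafter simulates $M_{i}$ verbatim, accepting exactly when $M_{i}$ accepts. If $x\in L_{1}\cup L_{2}$, the correct branch is chosen with probability $\tfrac12$ and leads to acceptance, so $p_{N,acc}(x)\geq\tfrac12$; if $x\notin L_{1}\cup L_{2}$, both branches reject, so $p_{N,rej}(x)=1$. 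Hence $N$ makes one-sided error with the \emph{uniform} bound $\varepsilon=\tfrac12<1$ and runs in expected polynomial time because each $M_{i}$ does, witnessing $L_{1}\cup L_{2}\in\klra{k}$. The point worth stressing is that the acceptance probability is a \emph{constant} $\tfrac12$, not merely positive, which is exactly what membership in $\klra{k}$ demands.

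It then remains to exhibit languages $L_{1},L_{2}\in\klda{k}$ whose union satisfies $L_{1}\cup L_{2}\in\klda{(k+1)}\setminus\klda{k}$; such a union is automatically in $\klra{k}$ by the previous paragraph, so it witnesses $\klda{k}\subsetneq\klra{k}\cap\klda{(k+1)}$. For the base case $k=2$ this is anchored by classical facts: $\klda{2}=\dcfl$ is closed under complement but not under union, so one may take $L_{1},L_{2}\in\dcfl$ (for instance $\{a^{i}b^{j}c^{\ell}: i=j\}$ and $\{a^{i}b^{j}c^{\ell}: j=\ell\}$) with $L_{1}\cup L_{2}\notin\dcfl=\klda{2}$, and then verify by an explicit three-rewrite (three-sweep) construction that $L_{1}\cup L_{2}\in\klda{3}$. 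For general $k$ I would climb the Hibbard hierarchy: Hibbard's strict separation $\klda{k}\subsetneq\klda{(k+1)}$ supplies a witness $H\in\klda{(k+1)}\setminus\klda{k}$, and I would present (or re-derive) that witness as a constant-fold union of $\klda{k}$-languages, which returns us to the disjunctive-closure argument above.

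The main obstacle is precisely this last step: simultaneously controlling both directions for the chosen witness, that is, proving the \emph{lower bound} $L_{1}\cup L_{2}\notin\klda{k}$ together with the matching \emph{upper bound} $L_{1}\cup L_{2}\in\klda{(k+1)}$ using only one additional rewrite level. The lower bound is the delicate half---for $k=2$ it is the non-closure of $\dcfl$ under union, and for larger $k$ it rests on Hibbard's level-hierarchy lower bounds---while the upper bound requires a careful sweep-by-sweep simulation showing that exactly one extra pass suffices to evaluate the disjunction deterministically. Checking that these two bounds can be met by the \emph{same} constant-fold union is where the real work lies; by comparison, the probabilistic half, namely the coin-flip simulation yielding $\klra{k}$ membership with constant acceptance probability, is routine.
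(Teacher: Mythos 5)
Your high-level strategy---absorb a two-fold disjunction into one coin flip (so any language that is a union of two $\klda{k}$-languages lies in $\klra{k}$), then exhibit such a union inside $\klda{(k+1)}\setminus\klda{k}$---is essentially the idea behind the paper's proof, and your disjunctive-closure lemma itself is sound (the paper proves the analogous closure of $\klra{2}$ under finite union in Lemma \ref{finite-union-LRFA}, and its own witnesses are used in exactly this coin-flip fashion). The gap is that you never actually produce a valid witness, and your one concrete proposal is provably wrong. For $k=2$ you take $L_1\cup L_2=\{a^ib^jc^\ell : i=j \text{ or } j=\ell\}$; this language is the classical \emph{inherently ambiguous} context-free language, and since every $k$-lda is in particular an unambiguous $k$-lna, the paper's results give $\klda{k'}\subseteq\klua{k'}\subseteq\klua{\omega}=\klua{2}=\ucfl$ (Proposition \ref{LPFA-blank-skipping}, Theorem \ref{LPFA=PPDA}(3), Theorem \ref{LPFA-and-PCFL}). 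Hence your union lies in \emph{no} $\klda{k'}$ whatsoever, so the upper bound $L_1\cup L_2\in\klda{3}$ that your argument requires cannot hold; no ``explicit three-sweep construction'' exists. For $k\geq3$ you defer entirely to ``re-deriving Hibbard's witness as a constant-fold union of $\klda{k}$-languages,'' which is precisely the nontrivial construction the proof needs and which you leave as a black box.

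What makes a union witness work is that the disjunction must be \emph{resolvable by information fetchable with one extra pass}, typically a disambiguating symbol at the far end of the input. This is exactly how the paper arranges things: for $k=2$ it uses $L_2=\{a^nb^nc,\,a^nb^{2n}d\mid n\geq0\}$, a disjoint union of two $\dcfl$-languages tagged by the final letter $c$ or $d$, which is in $\klda{3}$ (the extra pass fetches the tag) but outside $\dcfl=\klda{2}$ \cite{Yu89}; the one-sided-error machine guesses the tag, deterministically checks the corresponding branch, and verifies the guess at the end. For $k\geq3$ the paper \emph{modifies} Hibbard's hierarchy witness into $L_k$, whose membership conditions are keyed to a suffix $w_1\in\{a,b\}$, so that $L_k=L_k^{(a)}\cup L_k^{(b)}$ with each $L_k^{(x)}\in\klda{k}$ (the $(k+1)$st rewrite level of the deterministic machine is spent solely on fetching $w_1$, and a $k$-lra replaces that pass by a guess verified at the end), while $L_k\in\klda{(k+1)}\setminus\klda{k}$ follows from Hibbard's argument. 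Supplying witnesses with this tag structure---and checking both the lower bound and the $(k+1)$-level upper bound for them---is the substantive content missing from your proposal.
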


Unfortunately, Theorem \ref{omega-lda-vs-2lrfa} is not strong enough to yield the separation of $\klda{\omega}\neq \klra{\omega}$. We also do not know whether or not $\klda{3}\nsubseteq \twolra$ as well as  $\klra{\omega}\nsubseteq \twolbpa$.


Let us state the proof of Theorem \ref{omega-lda-vs-2lrfa}. The proof  utilizes example languages, which can separate $\klda{k}$ from $\klda{(k+1)}$  for each index $k\geq2$.
Earlier, Hibbard \cite{Hib67} devised such an example language for each index $k\geq2$. In the case of $k=2$, a much simpler example language was given in \cite{PP15}.
The example language $L_{k}$ given below is, in fact, a slight modification of the one stated in \cite{Hib67}.
An argument similar to \cite[Section 4]{Hib67} verifies that,
for each index $k\geq2$, the language $L_{k}$ is included in $\klda{(k+1)}$ but outside of $\klda{k}$.

\s

(1) When $k=2$,  $L_2$ denotes the language $\{a^nb^nc,a^nb^{2n}d\mid n\geq0\}$.

(2) Let $k\geq3$ be any integer. We succinctly write $w_1$ for either $a$ or $b$ and, for any index $i\in[2,k-1]_{\integer}$, we write $w_i$ for $a^{n_i}b^{m_i}c^{p_i}$ for arbitrary numbers $n_i,m_i,p_i\in\nat$.
The desired language $L_k$ is composed of all strings $w$ of the form $w_2\#w_4\#\cdots \# w_{k-2} \# w_{k}\# w_{k-1} \# \cdots \#w_5\#w_3\# w_1$ if $k$ is even, and $w_2\#w_4\#\cdots \# w_{k-1} \# w_{k}\# w_{k-2} \# \cdots \#w_5\#w_3\# w_1$ if $k$ is odd, together with the following three conditions.

\begin{enumerate}\vs{-2}
  \setlength{\topsep}{-2mm}%
  \setlength{\itemsep}{1mm}%
  \setlength{\parskip}{0cm}%

\item[(i)] If $w_1=a$, then $n_2\leq p_2$ holds; otherwise, $m_2\leq p_2$ holds.

\item[(ii)] Let $j$ denote any index in $[3,k-1]_{\integer}$. If either $n_{j-1}=p_{j-1}$ or $n_{j-1}<m_{j-1}$, then $n_j\leq m_j$ holds. If either $n_{j-1}<p_{j-1}$ or $n_{j-1}=m_{j-1}$, then $n_j\leq p_j$ holds.

\item[(iii)] If $n_{k-1}=p_{k-1}$, then $n_{k}=m_{k}$ holds. If $n_{k-1}<p_{k-1}$, then $n_{k}<p_{k}$ holds. If $n_{k-1}=m_{k-1}$, then $n_{k}=p_{k}$. If $n_{k-1}<m_{k-1}$, then $n_{k}<m_{k}$ holds.
\end{enumerate}


\vs{-3}
\begin{proofof}{Theorem \ref{omega-lda-vs-2lrfa}}
We first show the case of $k=2$, that is, $L_2\in \klra{2}\cap \klda{3}$ but $L_2\notin \klda{2}$. Clearly, $L_2$ is in $\klda{3}$ and it is also known that $L_2\notin\dcfl$ \cite{Yu89}.
Since $\dcfl=\klda{2}$ \cite{PP15},  $L_2\notin \klda{2}$ follows.
It thus suffices to show that $L_2\in \onerpda_{1/2}$ because Theorem \ref{LPFA=PPDA} implies $L_2\in\klra{2}$.
Consider the following 1ppda. On input $x$, choose $c$ and $d$ with equiprobability $1/2$ and let $z$ denote the chosen symbol. When $z=c$, check deterministically whether $x=a^nb^nc$; in contrast, when $z=d$,  check whether $x=a^nb^{2n}d$. If this is the case, then accept $x$; otherwise, reject it. The probability that $x$ is accepted is exactly $1/2$ if $x\in L_2$. On the contrary, if $x\notin L_2$, then $x$ is rejected with probability $1$. Hence, $L_2$ belongs to $\onerpda_{1/2}$, as requested.

Let $k\geq3$ and consider $L_{k}$. Since $L_{k}\in \klda{(k+1)} - \klda{k}$ \cite{Hib67}, it suffices to prove that  $L_{k}\in \klra{k}$. We focus on the case that $k$ is odd. Take a $(k+1)$-lda $M_k$ that recognizes $L_k$. We further assume that $M_k$ first checks whether $w_1=a$ or $w_1=b$ by moving its tape head to the $w_1$-region and that, once $M_k$ leaves the $w_1$-region, the tape head never returns to this region.

Let us consider a new machine $N_k$ defined as follows. On input $w$ of the form $w_2\# w_4 \# \cdots \# w_{k-1}\# w_{k} \# w_{k-2}\# \cdots w_3\# w_1$, choose an element $z\in\{a,b\}$ with equal probability and then simulate $M_k$ from the moment when its tape head moves back from $w_1$ toward the $w_2$-region to start checking that $w_2,w_3,\ldots,w_k$ satisfy Conditions (ii)--(iii). If $M_k$ eventually rejects $w$, then $N_k$ also rejects it. Otherwise, $N_k$ moves its tape head to the $w_1$-region and check if $w_1=z$. If so, then $N_k$ accepts $w$; otherwise, $N_k$ rejects it. Since $M_k$ is deterministic,  if $w\in L_k$, then $N_k$ accepts $w$ with probability $1/2$; otherwise, $N_k$ rejects it with probability $1$. It is also clear that $N_k$ is a $k$-limited automaton. Therefore, $N_k$ is a $k$-lra and, since $L_k=L(N_k)$, $L_{k}$ belongs to $\klra{k}$.

The other case that $k$ is even is similarly treated.
\end{proofof}


If we restrict the values of transition probabilities, then the associated  bounded-error 2pfa's are, in general, significantly less powerful. As such an example,
Wang \cite{Wan92} earlier showed that $\dcfl$ contains all languages recognized with bounded-error probability by 2pfa's having \emph{rational transition probabilities}.
For notational convenience, we denote by $\klbpa{k}^{(rat)}$ the subclass of $\klbpa{k}$ defined by $k$-lpa's using only rational transition probabilities. Theorem \ref{LPFA=PPDA}(1) thus implies the following.

\begin{corollary}
$\onelbpa^{(rat)}\subseteq \dcfl \subsetneqq \twolra$.
\end{corollary}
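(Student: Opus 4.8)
The plan is to derive this Corollary directly from the two results it cites: Wang's theorem that $\dcfl$ contains every language recognized with bounded error by a 2pfa using rational transition probabilities, and Theorem \ref{LPFA=PPDA}(1), which gives the equivalence $\twopfa_{\varepsilon} = \onelpa_{\varepsilon}$ (and hence $\twopfa_{ub}=\onelpa_{ub}$) between 2pfa's and $1$-lpa's. The first inclusion $\onelbpa^{(rat)}\subseteq\dcfl$ should follow by chaining these together, and the strict inclusion $\dcfl\subsetneqq\twolra$ should follow by upgrading $\dcfl$ to $\klra{2}$ via Theorem \ref{LPFA=PPDA} together with a witnessing language separating the two classes.

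For the first inclusion, I would start with an arbitrary language $L\in\onelbpa^{(rat)}$, so $L$ is recognized by a bounded-error $1$-lpa $M$ whose transition probabilities are all rational. The key observation is that Theorem \ref{LPFA=PPDA}(1) is proved by an explicit, \emph{error-preserving} simulation: the conversion from a $1$-lpa to an error-equivalent 2pfa (the ``on the contrary'' direction in the proof of Theorem \ref{LPFA=PPDA}(1)) only relabels overwritten tape symbols back to their originals and mimics each move, so it does not introduce any new transition probabilities beyond those already present in $M$. Thus the resulting 2pfa $N$ recognizes $L$ with the same bounded error and uses only rational transition probabilities. Wang's theorem \cite{Wan92} then places $L$ in $\dcfl$. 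I should double-check that the simulation in Theorem \ref{LPFA=PPDA}(1) genuinely preserves rationality of probabilities and does not, for instance, amplify success probability using irrational constants; this is the one point that needs care, but since the simulation is a direct step-by-step translation, rational probabilities are preserved.

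For the strict inclusion $\dcfl\subsetneqq\twolra$, I would argue the containment $\dcfl\subseteq\twolra$ first. Since $\dcfl=\klda{2}$ \cite{PP15} and $\klda{2}\subseteq\klra{2}_{\varepsilon}\subseteq\klra{2}$ by the definitional inclusions stated in Section \ref{sec:formal-definition} (deterministic computation is a degenerate one-sided-error computation with error $0$), and since Theorem \ref{LPFA=PPDA}(1) gives $\klra{2}\supseteq\tworfa$-type classes while $\twolra=\klra{2}$ up to the endmarker conventions, the containment follows. The strictness is exactly the content already established inside the proof of Theorem \ref{omega-lda-vs-2lrfa}: the language $L_2=\{a^nb^nc,a^nb^{2n}d\mid n\geq0\}$ satisfies $L_2\notin\dcfl$ \cite{Yu89} yet $L_2\in\onerpda_{1/2}\subseteq\klra{2}=\twolra$. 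Hence $L_2$ witnesses $\dcfl\subsetneqq\twolra$.

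The main obstacle I anticipate is purely bookkeeping rather than conceptual: making sure the two distinct ``$\twolra$''-style notations line up. The paper defines $\twolra$ via 2pfa's with one-sided error and separately defines $\klra{2}=\onelra$-type classes via $2$-limited automata, and Theorem \ref{LPFA=PPDA}(1) is what identifies them ($\tworfa_{\varepsilon'}=\onelra_{\varepsilon'}$). So the only real work is to invoke the right equivalence from Theorem \ref{LPFA=PPDA}(1) to translate the membership $L_2\in\onerpda_{1/2}$ into membership in the one-sided-error $1$-limited (and hence $2$-limited) class, and to confirm that the rational-probability restriction survives the $1$-lpa-to-2pfa simulation. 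Once those identifications are made explicit, the Corollary is an immediate consequence, so I would keep the write-up to a few sentences that cite Wang's theorem, Theorem \ref{LPFA=PPDA}(1), and the $L_2$ separation already used in the proof of Theorem \ref{omega-lda-vs-2lrfa}.
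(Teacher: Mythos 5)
Your proposal is correct and follows essentially the same route the paper intends: the first inclusion is Wang's theorem combined with the error-preserving (hence rationality-preserving) simulation behind Theorem \ref{LPFA=PPDA}(1), and the strictness of $\dcfl=\klda{2}\subsetneqq\twolra$ is exactly the $k=2$ case of Theorem \ref{omega-lda-vs-2lrfa}, witnessed by $L_2$. The only nit is notational: $\twolra$ and $\klra{2}$ are literally the same class (both denote the one-sided-error $2$-limited automaton family), and the step placing $L_2\in\onerpda_{1/2}$ into it uses part (2) of Theorem \ref{LPFA=PPDA} rather than part (1), but neither point affects the argument.
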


\subsection{Collapses of $\omega$-LPFA, $\omega$-LRA,  and $\omega$-LUA}\label{sec:property-omega-LPFA}

Certain models of limited automata can reduce the number of permitted tape-symbol modifications. For instance, any $k$-lna with $k\geq2$ has its recognition-equivalent $2$-lna, leading to the collapse of   $\klna{\omega}$ down to $\klna{2}$ \cite{Hib67}.
Here, we intend to seek similar collapses for other models of limited automata. In particular, we wish to demonstrate
the collapses of $\klpa{\omega}$ to $\twolpa$, $\klra{\omega}$ to $\twolra$, and $\klua{\omega}$ to $\klua{2}$ (where $\klua{2}$ equals $\ucfl$ by Proposition \ref{LPFA-blank-skipping} and Theorem \ref{LPFA=PPDA}). This result  hints  the robustness of the language families $\twolpa$, $\klra{2}$, and $\klua{2}$.

\begin{theorem}\label{LPFA-and-PCFL}
$\klpa{\omega} = \twolpa$, $\klra{\omega} = \klra{2}$, and $\klua{\omega}=\klua{2}$.
\end{theorem}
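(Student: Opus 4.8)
The plan is to prove the three collapses $\klpa{\omega} = \twolpa$, $\klra{\omega} = \klra{2}$, and $\klua{\omega} = \klua{2}$ by a common route that passes through the pushdown characterizations already established in Theorem \ref{LPFA=PPDA} and Corollary \ref{RCFL-BPCFL-PCFL}. Since the inclusions $\twolpa \subseteq \klpa{\omega}$, $\klra{2} \subseteq \klra{\omega}$, and $\klua{2} \subseteq \klua{\omega}$ are immediate from the definitions, the entire content lies in the reverse inclusions. For the unambiguous case I would exploit the fact that $\klua{2} = \ucfl$ (via Proposition \ref{LPFA-blank-skipping} together with Theorem \ref{LPFA=PPDA}(3)), so it suffices to show $\klua{\omega} \subseteq \ucfl$; the probabilistic and one-sided-error cases reduce, via Theorem \ref{LPFA=PPDA} and Corollary \ref{RCFL-BPCFL-PCFL}, to showing $\klpa{\omega} \subseteq \pcfl$ and $\klra{\omega} \subseteq \rcfl$.

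The core technical step I would carry out is a simulation of an arbitrary $k$-lpa by a 1ppda, for every fixed $k \geq 2$, preserving error probabilities (and, in the unambiguous/one-sided cases, the relevant structural property of computation paths). First I would apply the Blank Skipping Lemma (Lemma \ref{blank-skipping}) and its probabilistic analogue to assume the given $k$-lpa is blank skipping; the proof sketch of Lemma \ref{blank-skipping} already describes how to compress frozen $\Gamma_k$-regions into single blank runs while preserving the number of accepting paths, which is exactly what is needed to keep unambiguity and one-sided error intact. The idea is then to encode the tape contents at successive ``levels'' of rewriting into a pushdown stack: because each cell is modified at most $k$ times, the left-to-right history of head sweeps can be organized so that the machine's behaviour on a frozen blank region is deterministic and can be replayed via a bounded amount of finite-state information, and the non-frozen portion advances monotonically, matching a stack discipline. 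The matrix/transition-profile technique $S_\delta$ and $D_\delta$ from the proof of Lemma \ref{blank-skipping}, which records how the head enters and leaves a region, is precisely the device that lets a finite control summarize the effect of a whole region in one symbol, and I would reuse it to fold the $k$-fold sweep structure down to the two-sweep structure that $\mathrm{bs}\mbox{-}\twolpa$ (equivalently $\pcfl$) can realize.

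The key steps, in order, would be: (i) reduce each of the three equalities to a reverse inclusion into the appropriate pushdown class using Theorem \ref{LPFA=PPDA} and Corollary \ref{RCFL-BPCFL-PCFL}; (ii) normalize a given $k$-lpa to blank-skipping form via Lemma \ref{blank-skipping}, tracking that error probability and path multiplicity are preserved; (iii) construct from the blank-skipping $k$-lpa an equivalent blank-skipping $2$-lpa by collapsing the $k$ rewriting levels into two, using region-summary symbols built from the transition profiles; and (iv) verify that this construction respects unbounded error, one-sided error, and unambiguity respectively, so that the same simulation simultaneously yields all three collapses. Finally, chaining $\klpa{\omega} \subseteq \pcfl = \mathrm{bs}\mbox{-}\twolpa \subseteq \twolpa$ (and similarly for the other two) closes each inclusion.

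The hard part will be step (iii): showing that the $k$-level rewriting discipline genuinely collapses to two levels while keeping the machine \emph{blank skipping} and, crucially, preserving the exact acceptance and rejection probabilities on every input. Unlike the nondeterministic collapse $\klna{\omega} = \klna{2}$ of Hibbard, a probabilistic simulation cannot tolerate even small probability distortions, so the region-summarization must be carried out deterministically and losslessly on the frozen portions of the tape; the blank-skipping normalization is what guarantees this, since frozen cells carry a unique blank symbol and deterministic head motion, but I anticipate the bookkeeping needed to prove that every accepting path of the original machine corresponds to exactly one accepting path of the $2$-lpa (needed for unambiguity) will be the most delicate and error-prone part of the argument.
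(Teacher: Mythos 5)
Your plan has a genuine gap, and it sits exactly where you place the weight of the argument. Step (i) reduces the theorem to the inclusions $\klpa{\omega}\subseteq\pcfl$ and $\klra{\omega}\subseteq\rcfl$. Since $\pcfl\subseteq\klpa{2}$ is already known, the first of these would yield $\klpa{2}=\pcfl$, and the second would similarly yield $\klra{2}=\rcfl$; both equalities are strictly stronger than the theorem and the first is explicitly listed as an open problem in Section \ref{sec:discussion} (item 4). The reason your route cannot close this gap is step (ii): the Blank Skipping Lemma (Lemma \ref{blank-skipping}) and Proposition \ref{LPFA-blank-skipping} are established only for the deterministic, nondeterministic, and unambiguous models, and the ``probabilistic analogue'' you invoke does not exist in the paper. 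The obstruction is concrete: the region-summary matrices $T_w$ are $\{0,1\}$-valued transition profiles, of which there are at most $2^{\ell}$, so they fit into the finite control; for a $k$-lpa the profile of a frozen region is a matrix of real-valued probabilities, of which there are infinitely many, and it cannot be folded into finitely many inner states without distorting acceptance probabilities---which, as you yourself note, a probabilistic simulation cannot tolerate. Step (iii), the collapse from $k$ rewriting levels to $2$, is the entire content of the theorem and is only gestured at; note also that Lemmas \ref{convert-lpa-to-ppda} and \ref{2-lpda-construction} translate only between pushdown automata and $2$-lpa's, so they give you no leverage on a blank-skipping $k$-lpa with $k>2$ even in the unambiguous case where the normalization is available.

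The paper's proof avoids pushdown characterizations and blank skipping entirely. It factors the first sweep of a $(k+1)$-lpa into a real-time nondeterministic transduction acting on the input, proving $\klpa{(k+1)}=\klpa{k}^{R}\circ\rtnfamv_t$ (Lemma \ref{reduction-from-k+1-to-k}); two applications of this, combined with $\rtnfamv_t^{R}\circ\rtnfamv_t=\rtnfamv_t$ (Lemma \ref{two-rtNFAMV-collapse}), the composition claim (Claim \ref{claim:LPA-vs-rtNFAMV}), and the invariance $\klpa{k}\circ\rtnfamv_t=\klpa{k}$ (Lemma \ref{PCFL-MV-composition}), give $\klpa{(k+2)}\subseteq\klpa{k}$ and hence the collapse by descending two levels at a time. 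Probabilities stay exact because erroneous guesses of the transducer are cancelled by entering accepting and rejecting states with equal probability, rather than being summarized inside the finite control. To salvage your approach you would first have to resolve the open problem of whether $\klpa{2}=\mathrm{bs}\mbox{-}\klpa{2}$.
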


Since $\klda{\omega}$ is contained in both $\klra{\omega}$ and $\klua{\omega}$, and $\klbpa{\omega}$ is also contained in $\klpa{\omega}$, Theorem \ref{LPFA-and-PCFL} further implies the following containments.

\begin{corollary}
$\klda{\omega}\subseteq \klra{2} \cap \klua{2}$ and $\klbpa{\omega}\subseteq \klpa{2}$.
\end{corollary}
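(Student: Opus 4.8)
The plan is to sandwich each $\omega$-family between its $2$-level counterpart and the corresponding pushdown class, using the pushdown characterizations supplied by Theorem \ref{LPFA=PPDA} and Corollary \ref{RCFL-BPCFL-PCFL}. Concretely, I would first establish the three ``hard'' inclusions
$$\klpa{k}\subseteq \pcfl,\qquad \klra{k}\subseteq\rcfl,\qquad \klua{k}\subseteq\ucfl \qquad(k\geq 2),$$
which are the probabilistic, one-sided, and unambiguous analogues of Hibbard's collapse $\klna{k}=\cfl$ \cite{Hib67}. Granting these, the theorem follows from a short chain of inclusions: in the probabilistic case,
$$\twolpa \subseteq \klpa{\omega}=\bigcup_{k\geq 2}\klpa{k}\subseteq \pcfl = \mathrm{bs}\mbox{-}\klpa{2}\subseteq \twolpa,$$
where the first inclusion is trivial, the third is the hard inclusion above, the equality $\pcfl=\mathrm{bs}\mbox{-}\klpa{2}$ is Corollary \ref{RCFL-BPCFL-PCFL}, and the last inclusion holds because a blank-skipping $2$-lpa is in particular a $2$-lpa (recall $\klpa{2}=\twolpa$); hence all terms coincide with $\twolpa$. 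The one-sided and unambiguous cases run verbatim with $(\rcfl,\mathrm{bs}\mbox{-}\klra{2})$ and $(\ucfl,\mathrm{bs}\mbox{-}\klua{2})$ in place of $(\pcfl,\mathrm{bs}\mbox{-}\klpa{2})$, using the identities $\rcfl=\mathrm{bs}\mbox{-}\klra{2}$ and $\ucfl=\mathrm{bs}\mbox{-}\klua{2}$ from Corollary \ref{RCFL-BPCFL-PCFL} and Theorem \ref{LPFA=PPDA}(3).

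For the hard inclusion I would simulate an arbitrary $k$-lpa $M$ by a 1ppda $P$ that reads its input once, from left to right, and records on its stack the interaction of $M$ with the already-consumed prefix. Since $M$ sweeps back and forth and rewrites each cell during its first $k$ visits, the stack symbol pushed for the $i$th input cell should encode not a single tape symbol but the relevant \emph{profile} of $M$ on that cell: the finite transition table describing, for every state in which $M$ could re-enter the prefix across the current boundary, the state in which, and the probability with which, $M$ would leave it again. When $M$ turns and re-enters the prefix, $P$ pops and consults these profiles; when $M$ advances into fresh input, $P$ pushes an updated profile. The $k$-limitedness guarantees that after the first $k$ visits each cell is frozen, so the residual behaviour on the frozen portion is that of a two-way finite automaton and is captured by a bounded transition table; thus only finitely many distinct profiles arise and they may legitimately serve as stack symbols.

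The engine that keeps these profiles finite-state describable under the alternation of rightward and leftward sweeps is Lemma \ref{two-rtNFAMV-collapse}: the composition of a forward pass with a returning (reversed) pass is again realized by a single forward transducer, so that the profile stored on the stack can always be updated in finite state, and, in the unambiguous case, the composition stays single-valued so that $P$ inherits exactly one accepting path per accepted input. I would carry out the construction so that the simulation reproduces $M$'s acceptance and rejection probabilities \emph{exactly}, which yields the $\pcfl$ statement for unbounded error and, under the one-sided convention, the $\rcfl$ statement, while the $\ucfl$ statement follows from the single-valuedness just noted.

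The main obstacle is precisely this bookkeeping. Because the head may cross frozen cells arbitrarily many times, a naive crossing-sequence argument produces unbounded data, so the profiles must encode the \emph{net} two-way behaviour (exit state together with its exact exit probability) rather than raw crossing sequences; one must then verify that this information is genuinely finite, that it composes correctly under Lemma \ref{two-rtNFAMV-collapse}, and that the resulting 1ppda still halts in expected polynomial time. Preserving the exact acceptance probability (for unbounded error) and the uniqueness of accepting paths (for unambiguity) through this stack-based composition is the delicate part; once it is in place, the reverse inclusions and the passage to the $\omega$-unions are routine.
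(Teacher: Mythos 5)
There is a genuine gap, and it lies in the three ``hard'' inclusions on which your whole chain rests. The statement you are proving is, in the paper, a one-line consequence of Theorem \ref{LPFA-and-PCFL}: since $\klda{\omega}\subseteq \klra{\omega}\cap\klua{\omega}$ and $\klbpa{\omega}\subseteq\klpa{\omega}$ trivially, the collapses $\klra{\omega}=\klra{2}$, $\klua{\omega}=\klua{2}$, and $\klpa{\omega}=\klpa{2}$ finish the job. You instead try to re-derive those collapses by routing through the pushdown classes, via $\klpa{k}\subseteq\pcfl$ and $\klra{k}\subseteq\rcfl$. But these inclusions are not available: the paper only establishes $\pcfl=\mathrm{bs}\mbox{-}\klpa{2}$ and $\rcfl=\mathrm{bs}\mbox{-}\klra{2}$ (Corollary \ref{RCFL-BPCFL-PCFL}), explicitly states in Section \ref{sec:blank-skipping} that the inclusions $\klpa{2}\subseteq\pcfl$ and $\klbpa{2}\subseteq\bpcfl$ ``are not known to hold,'' and lists the question of whether $\klpa{2}=\mathrm{bs}\mbox{-}\klpa{2}$ as an open problem. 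The blank-skipping normalization (Lemma \ref{blank-skipping}, Proposition \ref{LPFA-blank-skipping}) is proven only for the deterministic, nondeterministic, and unambiguous models, not for the probabilistic ones. So your equality $\pcfl=\mathrm{bs}\mbox{-}\klpa{2}$ gives $\pcfl\subseteq\klpa{2}$ but not the third link $\klpa{\omega}\subseteq\pcfl$ of your sandwich, and the same failure occurs for $\klra{}$ --- which is exactly the family appearing in the corollary.

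The reason your proposed direct simulation cannot close this gap is that the ``profile'' of a frozen region of a probabilistic machine is a stochastic matrix of exit probabilities, not a $\{0,1\}$-matrix as in Hibbard's nondeterministic argument; as the frozen region grows these matrices take infinitely many distinct values, so they cannot serve as stack symbols of a finite 1ppda, and it is unclear how a single probabilistic move could even realize the required exit distribution exactly. (Lemma \ref{two-rtNFAMV-collapse} does not help here either: it composes real-time nondeterministic transducers and says nothing about keeping probabilistic crossing data finite.) The paper sidesteps all of this by never passing through pushdown automata: Theorem \ref{LPFA-and-PCFL} is obtained from the level-by-level identity $\klpa{(k+1)}=\klpa{k}^{R}\circ\rtnfamv_t$ (Lemma \ref{reduction-from-k+1-to-k}) together with Lemmas \ref{two-rtNFAMV-collapse} and \ref{PCFL-MV-composition}, which collapse $\klpa{(k+2)}$ into $\klpa{k}$ directly. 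If you invoke that theorem as a black box, the corollary follows immediately from the trivial containments; your pushdown detour, as written, does not go through.
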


Hereafter, we intend to prove Theorem \ref{LPFA-and-PCFL}.
We remark that an argument given in the proof of the theorem further proves Hibbard's collapse result of  $\klna{\omega}=\klna{2}$.
The proof requires three fundamental relationships between $\klpa{k}$ and $\klpa{(k+1)}$,  between $\klra{k}$ and $\klra{(k+1)}$, and between $\klua{(k+1)}$ and $\klua{k}$ regarding ``reversal'' for any index  $k\geq2$.

Let us consider any multi-valued total function  $f:\Sigma_1^*\to\Sigma_2^*$, for two alphabets $\Sigma_1$ and $\Sigma_2$, witnessed by a certain nondeterministic Turing machine (not necessarily limited to a 1nfa) $M_f$ with a write-once output tape, provided that $M_f$ moves to $\dollar$ and enters a fixed accepting state, say, $q_{acc}$ unless $M_f$ enters any rejecting state to invalidate the existing  strings produced on the output tape.
For any probabilistic Turing machine $M$ over $\Sigma_2$, we use the notation $L_{f,M}$ for the special language  $\{x\in\Sigma_1^*\mid \sum_{y\in f(x)} \prob_{M}[\text{ $M$ accepts $y$ }] > \frac{1}{2} |f(x)|\}$.
Let us recall the function class $\rtnfamv_t$ from Section \ref{sec:machine-model}.
By abusing the notation, we write $\klpa{k}\circ \rtnfamv_t$ to denote the set of all such $L_{f,M}$'s for any $f\in\rtnfamv_t$ and any $k$-lpa $M$. When $M$ is a $k$-lra, we choose an arbitrary constant $\eta\in(0,1/2)$ and define $L_{f,M,\eta}$ as  $\{x\in\Sigma_1^*\mid \sum_{y\in f(x)} \prob_{M}[\text{ $M$ accepts $y$ }] > \eta |f(x)| \}$.
In the case where $M$ is a $k$-lna, by contrast, we set $L_{f,M}$ to be $\{x\in\Sigma_2^*\mid \exists y\in f(x)[\text{ $M$ accepts $y$ }]\}$ and write $\klua{k}\circ \rtnfamv_t$ for the collection of all such $L_{f,M}$'s.

We argue that $\klpa{k}$ as well as $\klra{k}$ and $\klua{k}$ is ``invariant'' with an application of $\rtnfamv_t$-functions in the following sense.

\begin{lemma}\label{PCFL-MV-composition}
For any index $k\geq2$, $\klpa{k}\circ \rtnfamv_t = \klpa{k}$.
The same equality holds for $\klra{k}$ and $\klua{k}$ as well.
\end{lemma}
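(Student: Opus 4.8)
The plan is to prove the two inclusions $\klpa{k}\subseteq \klpa{k}\circ\rtnfamv_t$ and $\klpa{k}\circ\rtnfamv_t\subseteq\klpa{k}$ separately and then indicate the adjustments for $\klra{k}$ and $\klua{k}$. The inclusion $\klpa{k}\subseteq\klpa{k}\circ\rtnfamv_t$ is immediate: taking $f$ to be the identity map (which lies in $\rtnfamv_t$, being computed by a real-time $1$nfa with a single accepting path that copies its input) gives $f(x)=\{x\}$ and hence $L_{f,M}=\{x\mid \prob_M[\text{$M$ accepts $x$}]>\tfrac12\}=L(M)$ for every $k$-lpa $M$, so every language of $\klpa{k}$ arises as some $L_{f,M}$.

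For the reverse inclusion, fix $f\in\rtnfamv_t$ witnessed by a real-time $1$nfa $M_f$ and a $k$-lpa $M$, and build a single $k$-lpa $N$ whose acceptance probability on $x$ encodes the normalized average $\frac{1}{|f(x)|}\sum_{y\in f(x)}\prob_M[\text{$M$ accepts $y$}]$ in a $\tfrac12$-threshold-preserving way. The central difficulty is that $M_f$ reads $x$ once, left to right in real time, emitting $y$ symbol by symbol, whereas $M$ is a two-way machine that sweeps its tape many times; since $N$ has only $k$ rewrites per cell, it cannot afford a separate preliminary pass to write $y$ before starting $M$. The device that resolves this is the observation that, because the head moves one cell at a time and cannot cross the left endmarker, the first visits to tape cells occur in strictly increasing order of position, and the first visit to cell $i$ is reached by a rightward step out of cell $i-1$. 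I would therefore \emph{fold} the $i$-th step of $M_f$ into $M$'s first visit to cell $i$: when $N$ first reaches cell $i$ it reads the original input symbol, makes $M_f$'s move to obtain the output symbol $y_i$ and the new $M_f$-state, and simultaneously applies $M$'s first-visit transition to $y_i$, recording in that cell both $M$'s $\Gamma_1$-symbol and the current $M_f$-state. The $M_f$-state component is preserved through all later rewrites of the cell, so whenever $N$ steps rightward out of cell $i-1$ it recovers $q^f_{i-1}$ and continues the $M_f$-simulation. In this way each cell of $N$ is visited exactly as often as the corresponding cell of $M$, so $N$ inherits the $k$-limitedness requirement from $M$, while the probabilistic choices made at the (ordered) first visits exactly sample a computation path of $M_f$.

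To turn this sampled-path behavior into the correct threshold I would, after a standard normalization ensuring that distinct accepting paths of $M_f$ yield distinct outputs, pad $M_f$ to a fixed out-degree at every step (routing the surplus transitions to a dead state) and let $N$ take each branch with equal probability, so that all $T$ full paths are equiprobable and the $m=|f(x)|$ accepting paths correspond bijectively to the elements of $f(x)$. On a path ending in $M_f$'s accepting state, $N$ runs $M$ on the produced $y$ and copies $M$'s verdict; on a dead path, $N$ tosses a fair coin. A short calculation then yields $\prob_N[\text{$N$ accepts $x$}]=\tfrac12+\frac{2S-m}{2T}$ with $S=\sum_{y\in f(x)}\prob_M[\text{$M$ accepts $y$}]$, so $N$ accepts with probability exceeding $\tfrac12$ exactly when $S/m>\tfrac12$, i.e. exactly when $x\in L_{f,M}$. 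One detail to discharge is that $M$ may halt before its head has first-visited every cell, leaving the accept-versus-dead decision unfinished; I would let $N$, once $M$ halts, sweep right to $\dollar$ to complete $M_f$, which is harmless for $k$-limitedness because already-frozen cells may be re-read without being rewritten. Expected polynomial runtime of $N$ follows from that of $M$ together with the linear length of the $M_f$-pass.

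Finally, the one-sided and unambiguous versions follow the same skeleton with the probabilistic layer replaced appropriately. For $\klra{k}$ I would have $N$ accept only when $M$ accepts (never on a dead path), so no false acceptance is introduced and one-sidedness is preserved, and then apply standard amplification to keep the error below the required constant. For $\klua{k}$ the random choices become nondeterministic guesses; since the folded construction preserves the number of accepting computation paths exactly---the same phenomenon used in Lemma \ref{two-rtNFAMV-collapse}---unambiguity of $M_f$ (one accepting path per output) together with unambiguity of $M$ is inherited by $N$. I expect the genuinely delicate points to be two: verifying that the folding never over-spends the $k$ permitted rewrites while $M$'s two-way head interleaves with the forward $M_f$-pass (the completion sweep and the freezing convention are exactly what make this work), and, in the unambiguous case, checking that the accepting-path count of $N$ matches that of the intended composition so that no spurious ambiguity is created.
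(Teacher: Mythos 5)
Your construction for the main claim about $\klpa{k}$ is essentially the paper's: the identity function gives $\klpa{k}\subseteq\klpa{k}\circ\rtnfamv_t$, and for the converse the paper likewise builds a single $k$-lpa $N$ that interleaves the real-time 1nfa $M_f$ with $M$, copies $M$'s verdict on $M_f$-accepting branches, and flips a fair coin on invalidated branches, arriving at exactly your acceptance probability $\frac{1}{2}+\frac{2S-m}{2T}$ (the paper writes it as $\alpha_M(x)\sum_{y\in f(x)}p_{acc,M}(y)/|f(x)|+\frac{1}{2}(1-\alpha_M(x))$ with $\alpha_M(x)=m/T$). Your folding of $M_f$'s $i$-th step into $M$'s first visit to cell $i$, with the $M_f$-state component carried inside the tape symbol, is a welcome elaboration of the paper's one-line justification that ``$N$ can process each output symbol of $M_f$ without using $M_f$'s output tape,'' and your explicit padding of $M_f$ to uniform out-degree is genuinely needed for the path count to translate into the stated probability; the paper leaves this implicit.

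The one step that would fail as written is your treatment of $\klra{k}$: you accept only on branches where $M$ accepts and then ``apply standard amplification.'' Amplification by independent repetition is not available to a $k$-limited automaton --- after one run the tape cells have been rewritten and eventually frozen, so $N$ cannot rerun itself on the original input --- and even if it were, the quantity to be amplified is $S/T$ with $S>\eta m$, where $T$ is the total number of padded $M_f$-branches and $m=|f(x)|$; since $m/T$ can be exponentially small in $|x|$, no polynomial number of repetitions would lift it above a fixed constant $1-\varepsilon$. The paper disposes of the one-sided and unambiguous cases in a single sentence, and its device is not amplification but the adjustable threshold built into the definition of $L_{f,M,\eta}$. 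A related caution applies to $\klua{k}$: your folded construction does preserve the accepting-path count, but that count equals $\sum_{y\in f(x)}(\text{number of accepting paths of $M$ on $y$})$, which is at most $1$ only when at most one $y\in f(x)$ is accepted by $M$; unambiguity of $M_f$ and of $M$ separately does not guarantee this. Neither issue affects your correct proof of the unbounded-error equality $\klpa{k}\circ\rtnfamv_t=\klpa{k}$.
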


\begin{proof}
Let $k\geq2$. We first concentrate on the case of $\klpa{k}$. It is obvious that $\klpa{k}\subseteq \klpa{k}\circ \rtnfamv_t$ by considering the \emph{identity function} in $\rtnfamv_t$. Thus, we need to prove only the opposite inclusion. Take a multi-valued total function $f$ in $\rtnfamv_t$ and a $k$-lpa $M$, and consider the associated language $L_{f,M}$. There exists a real-time 1nfa $M_f$ with a write-once output tape that computes $f$.
Let us consider the following probabilistic machine $N$. Starting with input $x$, $N$ runs $M_f$ and, along each computation path of $M_f$, whenever $M_f$ produces one nonempty output symbol $\sigma$,$N$  runs  $M$ to process $\sigma$. If $M_f$ enters an accepting state in the end, then $N$ does the same after processing the last output symbol of $f$; otherwise, $N$ enters both accepting and rejecting states with equiprobability after finishing the processing of $M_f$'s output symbols.
Note that $N$ is a $k$-lpa because $M_f$ does not need any memory space and $N$ can process each output symbol of $M_f$ without using $M_f$'s  output tape.
It thus follows that $p_{acc,N}(x) = \alpha_{M}(x) \sum_{y\in f(x)}\frac{p_{acc,M}(y)}{|f(x)|}  + \frac{1}{2}(1-\alpha_M(x))$, where $\alpha_{M}(x)$ is the ratio of the number of accepting paths out of the total computation paths of $M$ on $x$. If $x\in L_{f,M}$, then we obtain $p_{acc,N}(x) >\frac{1}{2}$; otherwise, $p_{acc,N}(x)\leq \frac{1}{2}$ follows. We thus conclude that $N$ recognizes $L_{f,M}$ with unbounded-error probability. Therefore, $L_{f,M}$ belongs to $\klpa{k}$.

By slightly modifying the above argument, we can show the lemma for $\klra{k}$ and $\klua{k}$.
\end{proof}

Consider any $k$-lpa $M$ used in the aforementioned definition of $\klpa{k}\circ \rtnfamv_t$. If we feed such an $M$ with the reverses $x^R$ of inputs $x$, then we obtain $\klpa{k}^R\circ \rtnfamv_t$. We show the following key relationships between $\klpa{(k+1)}$ and $\klpa{k}^R$, between $\klra{(k+1)}$ and $\klra{k}^R$, and between $\klua{(k+1)}$ and $\klua{k}^R$.

\begin{lemma}\label{reduction-from-k+1-to-k}
For any $k\geq2$, $\klpa{k}^R\circ \rtnfamv_t = \klpa{(k+1)}$. The same equality also holds between $\klra{k}^R$ and $\klra{(k+1)}$ and  between $\klua{k}^R$ and $\klua{(k+1)}$.
\end{lemma}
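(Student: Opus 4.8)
The plan is to prove the two inclusions of $\klpa{k}^R \circ \rtnfamv_t = \klpa{(k+1)}$ separately, and then to indicate the minor changes that adapt the argument to $\klra{}$ and $\klua{}$. The guiding idea is the ``sweep'' discipline forced by the $k$-limitedness requirement: the level index of a cell records how many times it has been rewritten, with even levels corresponding to rightward reads and odd levels to leftward reads, so that the first rewriting of every cell (level $0\to1$) happens exactly once and strictly from left to right along the advancing frontier. The transducer $f$ is meant to realize this outermost sweep, while the $k$-lpa $M$ handles the remaining $k$ sweeps; the reversal appears because, once the first left-to-right sweep has placed the head at $\dollar$, the subsequent computation naturally reads the intermediate word from right to left.

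For the inclusion $\klpa{k}^R \circ \rtnfamv_t \subseteq \klpa{(k+1)}$, I would take $f\in\rtnfamv_t$ computed by a real-time 1nfa $M_f$ together with a $k$-lpa $M$, and build a $(k+1)$-lpa $N$ for $L_{f,M}$. On input $x$, $N$ devotes its first visit to each cell to simulating one (equiprobably chosen) computation path of $M_f$, writing the emitted letter as a level-$1$ symbol; since $M_f$ is real-time this is a single clean left-to-right sweep using exactly one visit per non-endmarker cell and preserving length. On reaching $\dollar$ the head turns---the turn occurs at the endmarker, so no interior cell is double-counted---and $N$ then simulates $M$ on the produced word read from right to left, i.e. on $y^R$, using $k$ further visits per cell so that $N$'s levels $2,\dots,k+1$ realize $M$'s levels $1,\dots,k$ (relabelling $M$'s endmarkers by the swap $\cent\leftrightarrow\dollar$). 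When the chosen $M_f$-path is rejecting, $N$ enters an accepting and a rejecting state with probability $\tfrac12$ each; otherwise it adopts $M$'s verdict. A computation of $p_{acc,N}(x)$ identical in form to the one in the proof of Lemma \ref{PCFL-MV-composition} then shows $p_{acc,N}(x)>\tfrac12$ iff $\sum_{y\in f(x)}\prob_M[\,M\text{ accepts }y^R\,]>\tfrac12|f(x)|$, and $N$ runs in expected polynomial time.

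For the converse $\klpa{(k+1)}\subseteq\klpa{k}^R\circ\rtnfamv_t$ I would reverse this decomposition. Given a $(k+1)$-lpa $P$, suppose first that $P$ is already in a \emph{sweep form}, meaning that the first visit to every cell is performed during one uninterrupted left-to-right sweep and all turns occur at the endmarkers. Then the level-$1$ symbol written on cell $i$ is a finite-state function of the prefix scanned so far; after replacing each transition probability by a suitable number of fair coin flips so that the branching is equiprobable, collecting these symbols defines a total multi-valued $f\in\rtnfamv_t$, and the computation of $P$ from its first arrival at $\dollar$ onward is exactly a $k$-limited computation reading the intermediate word from right to left, hence a $k$-lpa $M$ acting on $y^R$. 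Matching the uniform average defining $L_{f,M}$ to $p_{acc,P}$ is again carried out as in Lemma \ref{PCFL-MV-composition}, and the $\rtnfamv_t$-interface is exactly the one that lemma already controls.

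The hard part, and the crux of the whole argument, is the \textbf{sweep-form normalization} presupposed in the converse: a general $(k+1)$-lpa may, during the level-$0\to1$ pass, make interior turns and wander leftward over the portion already written, so that the symbol emitted at the frontier is \emph{not} a real-time function of the input. To remove this I plan to import the transition-table device $S_\delta$, $D_\delta$, $T_w$ from the proof of Lemma \ref{blank-skipping}: any leftward excursion into the processed prefix is summarized by a finite $\{0,1\}$-transition table carried in the finite control and updated as the frontier advances, so that the net state change of an excursion is computed without the head physically moving left, turning the first pass into a genuine left-to-right sweep. The delicate verification is that this reorganization preserves the exact acceptance probability, the expected polynomial running time, and---for the one-sided and unambiguous variants---the one-sided-error structure and the number of accepting paths. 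Finally, the $\klra{}$ and $\klua{}$ cases follow by the same two constructions with the probability bookkeeping replaced by the one-sided-error threshold $\eta$ and by the existential (unique-accepting-path) condition, respectively.
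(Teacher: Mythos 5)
Your first inclusion ($\klpa{k}^R\circ\rtnfamv_t\subseteq\klpa{(k+1)}$) is essentially the paper's argument: one left-to-right pass simulating $M_f$ and writing its output as level-one symbols, a turn at $\dollar$, and then a $k$-limited simulation of $M$ on the reversed intermediate word, with the probability bookkeeping borrowed from Lemma \ref{PCFL-MV-composition}. That part is fine.

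The converse is where your proposal breaks down, and you have correctly identified the crux but proposed a repair that cannot work. A $(k+1)$-lpa cannot be normalized into your ``sweep form'' by carrying transition tables in the finite control. The tables $S_\delta$, $D_\delta$, $T_w$ of Lemma \ref{blank-skipping} summarize excursions only into \emph{frozen} regions, where the machine reads blanks deterministically and writes nothing. A leftward excursion during the level-$0\to1$ pass is entirely different: it re-enters cells that are still rewritable, increments their visit counts, and overwrites their contents, and those overwritten symbols are read again on later passes. Consequently (i) the ``net effect'' of an excursion is not a state-to-state map but a transformation of an unbounded portion of the tape, which no finite table can record; (ii) even the return state depends on the current contents of the whole processed prefix, which evolve with each excursion, so the table cannot be maintained ``as the frontier advances''; and (iii) skipping the excursion physically leaves the visit budgets and tape contents of the traversed cells wrong for the remaining $k$ passes. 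Indeed, if sweep form were attainable without loss of generality, interior turns would be eliminable and, e.g., a $2$-lda could not capture all of $\dcfl$ --- the power of limited automata comes precisely from these interior turns.

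The paper's resolution is not to normalize but to \emph{guess and verify}. The real-time transducer records, for each cell, only the first visit; at a cell where the first pass makes a left turn it nondeterministically guesses the triple $(t,r,a)$ describing the state in which the machine will return to that cell on its second visit and the transition then taken, writes this guess into the emitted cell-state symbol, and continues the rightward sweep in the guessed state $r$. The deferred excursion is then actually performed by the $k$-lpa on the reversed intermediate word, which checks each recorded guess against the state it really arrives in and, on any discrepancy, enters accepting and rejecting states with probability $\tfrac12$ each so that wrong guesses cancel out of the acceptance probability (for the unambiguous case, wrong guesses are simply rejected, preserving the count of accepting paths). This guessing step is exactly the idea your deterministic summarization is missing, and without it the decomposition of a $(k+1)$-lpa into a real-time transduction followed by a $k$-limited computation does not go through.
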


\begin{proof}
Fix $k\geq2$. Firstly, we intend to verify the lemma for $\klpa{k}$ and then explain how to modify the proof for $\klra{k}$ as well as $\klua{k}$.
For readability, we split the proof into two different parts.

\s

(1) We first plan to derive the inclusion of $\klpa{(k+1)} \subseteq \klpa{k}^R\circ \rtnfamv_t$. Let $L$ be any language in $\klpa{(k+1)}$ witnessed by a $(k+1)$-lpa $M = (Q_M,\Sigma,\{\cent,\dollar\}, \{\Gamma^{(i)}_M\}_{i\in[k]}, \delta_M, q_0, Q_{acc}, Q_{rej})$. Let $\Gamma^{(0)}_M = \Sigma$.
For convenience, assume that $M$ takes $q_0$ only at the first step and that $M$ halts after its tape head scans $\dollar$ (but this is not necessarily the first time).

To simulate the behavior of $M$, we introduce a notion of \emph{cell state}. A cell state is of the form $(d,q,\sigma \mmid \tau, p, e \mmid h)$, which indicates that a tape head of $M$ enters the current tape cell, say, $\ell$ from direction $d\in\{L,R,N\}$ (where $L$ means the left, $R$ means the right, and $N$ refers to the start cell) in inner state $q$, reads tape symbol $\sigma$, changes $q$ to $p$ and $\sigma$ to $\tau$, and leaves cell $\ell$ in direction $e\in\{L,R\}$. As for  $h\in (Q^2\times \Gamma)\cup \{\lambda\}$, if
$(d,e)=(L,L)$, then $M$ makes a left turn, and thus $h$ indicates the information on $M$'s situation at the time when $M$ visits cell $\ell$ for the 3rd time. If $d\neq R$ and $e=R$, then $h$ must be $\lambda$.
Along each computation path of $M$, the first ``traverse'' of $M$'s tape head  from left to right uniquely determines a series of such cell states.


\begin{figure}[t]
\centering
\includegraphics*[height=4.0cm]{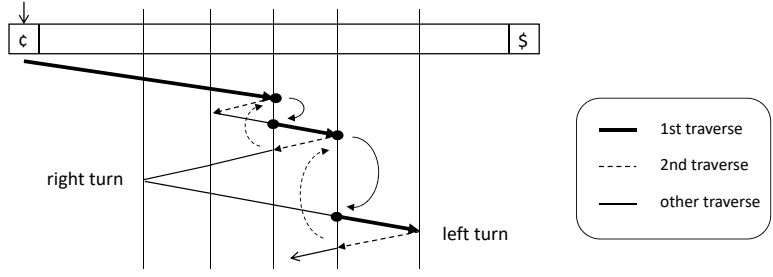}
\caption{The simulation of the first and the second traverses of a tape head of $M$.}\label{fig:head-simulation}
\end{figure}


As the first step, we want to define a multi-valued total function $g$ computed by a real-time 1nfa $M_g$ that simulates the first traverse of $M$'s tape head.
This function $g$ nondeterministically describes the ``actions'' of $M$ while it reads input symbols for the first time.
On an input, $M_g$ partly simulates $M$ symbol by symbol but $M_g$'s tape head moves only to the right. Consider tape cell, say, $\ell$ and let $\sigma$ be the input symbol in $\check{\Sigma}$ ($=\Sigma\cup\{\cent,\dollar\}$) written in cell $\ell$. For the ease of description of the construction of $M_g$, we arbitrarily fix a computation path of $M$ and explain how to operate $M_g$ along this particular computation path (as if $M$ works deterministically).
There are several cases to treat separately.
At any moment, if $M$ enters any halting state, then $M_g$ does the same to invalidate the current output strings since this situation suggest that the simulation of $M$ by $M_g$ along this particular computation path is wrong.

\s

(i) Consider the case $M$ is in initial state $q_0$ reading $\cent$. If $M$ changes $q_0$ to $p$ and leaves cell $\ell$ to the right, then $M_g$ writes down $(N,q_0,\cent\mid \cent,p,R \mmid \lambda)$ to its output tape. We store $p$ in $M_g$'s internal memory and move the tape head of $M$ to the right in inner state $p$.

(ii) Consider the case where $M$'s tape head enters cell $\ell$ for the first time from the left and reads $\sigma\in\check{\Sigma}$ in inner state $q\in Q-\{q_0\}$ but $M$ does not make any left turn. Let the memory of $M_g$ contain $q$.
If $M$ changes $q$ to $p$, changes $\sigma$ to $\tau$, and leaves  cell $\ell$ to the right, then $M_g$ writes down $(L,q,\sigma \mid \tau, p,R \mmid \lambda)$ on its output tape. We remember $p$ and move the tape head of $M$ to the right in inner state $p$.

(iii) Assume that $M$ enters cell $\ell$ for the first time from the left and makes a left turn. Assume that $q$ is in $M_g$'s memory.
In this case, we ``skip'' many subsequent moves of $M$ and ``jump'' into $M$'s step at which $M$ returns to cell $\ell$ for the second  time. See Fig.~\ref{fig:head-simulation}. We guess a triplet $(t,r,a)$ satisfying $(r,a,R)\in\delta_M(t,\tau)$. If there is no such triplet, then $M_g$ enters a rejecting state to invalidate the existing output strings. If $M$ changes $q$ to $p$, then  $M_g$ writes down $(L,q,\sigma\mmid \tau,p,L\mmid t,r,a)$ and remembers $r$. We then move the tape head of $M$ to the right in inner state $r$.

(iv) Consider the case where cell $\ell$ contains $\dollar$ and $M$ enters cell $\ell$ for the first time from the left. Assume that $q$ is in the memory of $M_g$. If $M$ changes $q$ to $p$ and moves backward, then $M_g$ produces $(L,q,\dollar \mmid \dollar, p,L \mmid \lambda)$ and halts  by entering a designated accepting state $q_{acc}$.
If $M$ processes the endmarker $\dollar$ and terminates, then $M_g$ produces $(L,q,\dollar \mmid \dollar,p,R\mmid \lambda)$ and halts in inner state $q_{acc}$.

\s

Clearly, $M_g$ is a real-time 1nfa, and therefore $g$ belongs to $\rtnfamv_t$. Given any index $\ell\in[0,n+1]_{\integer}$, we call cell $\ell$ a \emph{turning point} if $M$ makes a left turn at cell $\ell$ during the first ``traverse''.

As the second step, we wish to define a $k$-lpa $N$ that takes an input provided by $g$ and simulates $M$.
For readability, we describe the behavior of $N$ starting at the left end of the tape.
As before, we fix an arbitrary computation path of $M$ and assume that $M_g$ operates along this particular computation path associated with $g$'s outcome  $(N,q_0,\cent\mid \cent,p,R\mmid \lambda) w (L,q,\dollar\mid \dollar,p,d \mmid h)$, where $d\in\{L,R\}$ and $w$ is a (possibly empty) series of cell states.
Assume that $w$ is a length-$n$ sequence $w_1w_2\cdots w_n$ of cell states.  For convenience, we set $w_0=(N,q_0,\cent\mid \cent,p,R\mmid \lambda)$ and $w_{n+1} = (L,q,\dollar\mid \dollar,p,d \mmid h)$. The input to $N$ is thus $(w_0 w w_{n+1})^R$, which is abbreviated as $\widetilde{w}^R$.

Given $\tilde{w}^R$, $N$ starts to simulate $M$'s second ``traverse'' from left to right until the next turning point.
While $M$ simulates $M$'s second traverse, if $N$ encounters the next turning point at cell $\ell$, instead of continuing the current simulation of $M$, $N$ ``jumps'' to the time of the left turn and resumes the simulation of $M$'s moves from this point.
Even if $M$ makes a right turn at cell $\ell$ at its 2nd visit,
$N$ continues the simulation of $M$'s moves.
A more detailed simulation process is described below. Let $\ell$ denote any cell index in $[n]$.

\s

(i$'$) The machine $N$ begins in its initial state $q'_0$,
scanning $w_{n+1}=(L,q,\dollar\mid \dollar,p,L \mmid \lambda)$ written in the rightmost cell of its input tape.
If $M$ makes a transition of the form $(s,\dollar,L)\in\delta_{M}(p,\dollar)$, then $N$ writes down  $\track{w_{n+1}}{\dollar}$ into this cell,
remembers $s$ in $N$'s memory, and moves to the left.

(ii$'$) Assume that $N$'s tape head is scanning $w_{\ell} = (L,q,\sigma\mid \tau,p,R\mmid \lambda)$ with $s$ in its memory.
If $M$ follows a transition of the form $(r,a,d)\in\delta_M(s,\tau)$   along the current computation path with $d\in\{L,R\}$, then $N$ writes down $\track{w_{\ell}}{a}$ into this cell, remembers $r$, and moves in direction $d$.

(iii$'$) Assume that $\sigma\neq \dollar$ and $N$'s tape head is scanning $w_{\ell}=(L,q,\sigma \mmid \tau,p,L \mmid t,r,a)$ with $s$ in the memory. This implies that $M$ makes a left turn at cell $\ell$ during the first traverse and that $M$ returns to cell $\ell$ in inner state $t$ during the second traverse, and in inner state $s$ during the 3rd traverse.
If $M$ makes a transition ``$(u,b,L)\in\delta_{M}(s,a)$'', then  $N$  writes $\track{w_{\ell}}{(u,s,a)}$ over $w_{\ell}$, remembers $p$, and moves to the left.

(iv$'$) Assume that $N$ scans $\track{w_{\ell}}{a}$ with $s$ in the memory for symbol $a\in\Gamma^{(i)}_{M}$ with $i\in[2,k]_{\integer}$. Note that $N$ has already visited this cell at most $i-1$ times if $i<k$.
If $M$ follows a transition ``$(r,b,d)\in \delta_M(s,a)$'' with direction $d\in\{L,R\}$, then $N$ writes down $\track{w_{\ell}}{b}$ into this cell, remembers $r$, and moves in direction $d$.

(v$'$) Assume that  $N$ enters cell $\ell$ from the left and scans $\track{w_{\ell}}{(u,s,b)}$ with $z$ in the memory.
Since this is a turning point, we assume that $w_{\ell}$ has the form $(L,q,\sigma \mmid \tau,p,L\mmid t,r,a)$.
Since $z\neq t$ leads to a discrepancy, we need to consider only the case of  $z=t$. The machine $N$ then writes down $\track{w_{\ell}}{b}$, remembers $u$, and moves to the left.

(vi$'$) Assume that $N$ scans $w_0=(N,q_0,\cent \mmid \cent,p,R\mmid \lambda)$ with $s$ in the memory. If $M$ makes a transition of the form $(r,\cent,R)\in\delta_M(s,\cent)$, then $N$ writes down  $\track{w_0}{\cent}$, remembers $r$, and moves to the right.

\s

Since $g$'s outcomes contain ``guessed'' results, whenever the above simulation process discovers any discrepancy on these outcomes, $N$ immediately enters both accepting and rejecting states with equal probability. This cancels out all erroneous computations of $M_g$. When $N$ correctly terminates, it successfully simulates $M$'s computation with the help of $g$. By the description of $N$, it follows that $N$ is indeed a $k$-lpa.

\s

(2) Secondly, we wish to show the opposite inclusion, namely, $\klpa{k}^R\circ \rtnfamv_t \subseteq \klpa{(k+1)}$. Let $M$ denote a $k$-lpa and let $f$ be a function in $\rtnfamv_t$ witnessed by a certain 1nfa, say, $M_f$.
We write $M^R$ to indicate that $M$ takes the reverse $y^R$ of any input $y$. We define $L_{f,M^R}=\{x\in\Sigma^*\mid \sum_{y\in f(x)}\prob_{M^R}[\text{ $M^R$ accepts $y$ }] > \frac{1}{2}|f(x)|\}$. Our goal is to verify that $L_{f,M^R}\in\klpa{(k+1)}$.
Consider the new machine $N$ that works as follows.
On input $x$ given to a tape, run $M_f$ on $x$  by replacing $x$ with $M_f$'s output strings $y$ produced along computation paths. Once $\cent{y}\dollar$ is written on the tape, run $M^R$ on $(\cent y\dollar)^R$ by moving a tape head leftwards starting at $\dollar$.  Since $M$ is $k$-limited, $N$ must be $(k+1)$-limited. We thus conclude that $L_{f,M^R}\in\klpa{(k+1)}$.
\end{proof}


Finally, the proof of Theorem \ref{LPFA-and-PCFL} is given with the help of Lemmas \ref{PCFL-MV-composition} and \ref{reduction-from-k+1-to-k}.

\begin{proofof}{Theorem \ref{LPFA-and-PCFL}}
By taking ``reversal'' of Lemma \ref{reduction-from-k+1-to-k},
we obtain $\klpa{(k+1)}^R = \klpa{k} \circ \rtnfamv_t^R$. From Lemma  \ref{reduction-from-k+1-to-k} again, it also follows that $\klpa{(k+2)} = \klpa{(k+1)}^R\circ \rtnfamv_t$. Combining these two equalities, we conclude that (*) $\klpa{(k+2)} \subseteq (\klpa{k}\circ \rtnfamv_t^R) \circ \rtnfamv_t$.  We then claim the following inclusion, which is not ``trivial'' due to our nonstandard use of ``$\circ$''.

\begin{claim}\label{claim:LPA-vs-rtNFAMV}
$(\klpa{k}\circ \rtnfamv_t^R) \circ \rtnfamv_t \subseteq \klpa{k}\circ ( \rtnfamv_t^R \circ \rtnfamv_t)$.
\end{claim}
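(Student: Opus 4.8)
The plan is to unfold both layers of composition and then fuse the two real-time transducer stages into a single one while leaving the innermost $k$-lpa untouched. Fix a language $L$ in $(\klpa{k}\circ\rtnfamv_t^R)\circ\rtnfamv_t$. By the outer composition, $L$ is witnessed by a function $g\in\rtnfamv_t$, computed by a real-time 1nfa $M_g$, together with an inner ``device'' $D\in\klpa{k}\circ\rtnfamv_t^R$; and $D$ is in turn witnessed by a function $\psi\in\rtnfamv_t$, computed by a real-time 1nfa $M_\psi$, and a $k$-lpa $M$ that is run on the reverses of $M_\psi$'s outputs. Operationally, on input $x$ the whole process is a three-stage pipeline: run $M_g$ to guess $w\in g(x)$, run $M_\psi$ on $w$ to guess $z\in\psi(w)$, and run $M$ on $z^R$, with a threshold-counting acceptance rule applied over all generated leaves.

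First I would fuse the two transducer stages. Since $M_g$ and $M_\psi$ are real-time (each emits exactly one output symbol per input symbol, so every $w\in g(x)$ has length $|x|$ and every $z\in\psi(w)$ has length $|w|=|x|$), they pipeline perfectly and can be simulated symbol by symbol at once, giving a single real-time 1nfa $M_\theta$ that computes $\theta:=\psi\circ g\in\rtnfamv_t$; this is precisely the simultaneous-simulation device built in the proof of Lemma~\ref{two-rtNFAMV-collapse}. The strings actually fed to $M$ are then $\{z^R : z\in\theta(x)\}=\theta(x)^R$, so putting $\Phi:=\theta^R$ we obtain the reversal wrapped around an $\rtnfamv_t$-computation, i.e. exactly the shape of a member of $\rtnfamv_t^R\circ\rtnfamv_t$, and the whole computation is ``$M$ run on $\Phi(x)$''. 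Crucially the machine $M$ and its index $k$ are unchanged: the reversal is absorbed into the function class $\rtnfamv_t^R$ rather than being paid for by re-running $M$ on reverses inside a more powerful $(k+1)$-limited automaton. This would give $L=L_{\Phi,M}\in\klpa{k}\circ(\rtnfamv_t^R\circ\rtnfamv_t)$, the desired inclusion.

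The hard part will be to show that the \emph{two-level} threshold of the left-hand side collapses to the \emph{single} threshold defining $\klpa{k}\circ(\rtnfamv_t^R\circ\rtnfamv_t)$. A priori the inner device normalizes its accepting count on each $w$ by the number of its own computation leaves, and the outer stage normalizes the resulting probabilities again over $g(x)$; writing these out, the left-hand acceptance condition turns out to be a \emph{weighted} sum of the per-$w$ surpluses $\sum_{z\in\psi(w)}p_{acc,M}(z^R)-\frac{1}{2}|\psi(w)|$, whereas the target condition is the corresponding \emph{unweighted} sum over $\Phi(x)$. These agree only when the weights, namely the leaf counts below each $w$, are equal, and this is where I would use the real-time property decisively: since all $w\in g(x)$ share the length $|x|$, and $M_\psi$ can be padded so that its number of nondeterministic branches depends only on the input length, the number of leaves hanging below each $w$ is one and the same constant. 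The per-$w$ normalizations then cancel, the nested thresholds telescope into a single global threshold over the multiset of leaves $\Phi(x)$, and the criterion becomes exactly $\sum_{u\in\Phi(x)}p_{acc,M}(u)>\frac{1}{2}|\Phi(x)|$.

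Finally I would note that the cases of $\klra{k}$ and $\klua{k}$ go through by the same fusion with only the acceptance rule changed: for $\klua{k}$ the criterion is existential and the counting degenerates, so one only needs that the fused 1nfa preserves the number of accepting paths, exactly as guaranteed in Lemma~\ref{two-rtNFAMV-collapse}; for $\klra{k}$ the identical uniform-leaf-count argument carries the fixed one-sided threshold $\eta$ through unchanged. Combined afterwards with $\rtnfamv_t^R\circ\rtnfamv_t=\rtnfamv_t$ (Lemma~\ref{two-rtNFAMV-collapse}) and $\klpa{k}\circ\rtnfamv_t=\klpa{k}$ (Lemma~\ref{PCFL-MV-composition}), this Claim is what lets the right-hand side collapse back to $\klpa{k}$.
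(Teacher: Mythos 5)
Your overall strategy is the same as the paper's: flatten the three-stage pipeline into a single $\rtnfamv_t^R\circ\rtnfamv_t$ stage feeding the \emph{same-index} $k$-lpa, and observe that the two-level threshold collapses to the single-level one precisely when the number of leaves hanging below each intermediate string $w\in g(x)$ is the same for all $w$. You have correctly isolated the weighted-versus-unweighted mismatch as the crux, and correctly noted that the real-time property (all $w\in g(x)$ have length $|x|$) is what makes a length-dependent uniform leaf count sufficient. The gap is in the sentence ``$M_\psi$ can be padded so that its number of nondeterministic branches depends only on the input length.'' This is asserted, not constructed, and the naive reading fails: every accepting branch of a real-time 1nfa with a write-once output tape emits an output string, so any padding branch injects an extra element into $\Phi(x)$ that is then fed to the downstream automaton and contributes its own acceptance probability to the sum $\sum_{u\in\Phi(x)}p_{acc}(u)$ while also enlarging $|\Phi(x)|$. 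Unless each padded leaf contributes \emph{exactly} $1/2$, the global threshold is shifted and the telescoping identity you invoke does not hold; and you cannot in general achieve uniformity by mere duplication of genuine branches, since the branch counts $|\psi(w)|$ need not divide a common target.

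The paper's proof supplies exactly the missing mechanism: the padded transducer ($f_+^R$ in the paper) nondeterministically guesses \emph{every} string $z$ of the appropriate length---so the leaf count is a function of the length alone---and tags the last symbol with a bit $a\in\{acc,rej\}$ recording whether $z$ is a genuine output of $f^R$ on that branch; the strings passed on have the form $z_1\cdots z_{n-1}\track{z_n}{a}$, paired with $y$ on a second track. The downstream machine is then \emph{not} your unchanged $M$ but a new $k$-lpa $K$ that, on a $rej$-tagged input, accepts and rejects with probability $1/2$ each (making padded leaves threshold-neutral) and otherwise simulates $M$ on the lower track. So your side remark that ``the machine $M$ and its index $k$ are unchanged'' is half right: the index $k$ is preserved (which is the point of the Claim), but the automaton must be rebuilt around the tag. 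Without specifying the tag-and-neutralize device, the proof is incomplete at its only genuinely delicate step; with it, your argument coincides with the paper's.
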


\begin{proof}
Take a function $g\in\rtnfamv_t$ and a probabilistic Turing machine $N$ such that the language $L_{g,N}$ belongs to $(\klpa{k}\circ \rtnfamv_t^R) \circ \rtnfamv_t$. This machine $N$  is further assumed to characterize a language $L_{f^R,M}$ as $L(N) = L_{f^R,M}$
by a certain function  $f\in\rtnfamv_t$ and a certain $k$-lpa $M$.
Associated with this $f$, we take a 1nfa $M_{f^R}$ with a write-once output tape that computes $f^R$. Since $N$ recognizes $L_{f^R,M}$, we focus our attention on the machine $N$ that firstly simulates $M_{f^R}$ to obtain its outcomes and then runs $M$ on the outcomes.

For readability, we write $L'$ for $L_{g,N}$. Next, we intend to verify that this language $L'$ is in fact in $\klpa{k}\circ ( \rtnfamv_t^R \circ \rtnfamv_t)$.
For this purpose, we need to define a function $h\in\rtnfamv_t^R \circ \rtnfamv_t$ and another $k$-lpa $K$ that force $L_{h,K}$ to match  $L'$.

Let us consider the following 1nfa $M'$. Given an input $y^R$ of length $n$, while $M'$ reads $y^R$, $M'$ guesses a length-$n$ string $z=z_1z_2\cdots z_{n-1}z_n$ symbol by symbol and produces $z_1z_2\cdots z_{n-1}$ onto its output tape. Simultaneously, $M'$ runs $M_{f^R}$ on $y$ to obtain its outcomes, say, $z'$. If $z=z'$ and $M_{f^R}$ enters an accepting state, then we set $a=acc$; otherwise, we set $a=rej$. Finally, $M'$ writes down $\track{z_n}{a}$ on the output tape and enters an accepting state.
Note that the outcomes of $M'$ are the strings of the form  $z_1z_2\cdots z_{n_1}\track{z_n}{a}$.
Now, we define $f_{+}^R(y)$ to be the function computed by $M'$.
The desired function $h$ is then set as $h(x) = \{\track{y}{z}\mid y\in g(x), z\in f_{+}^R(y)\}$ for any $x$.
Next, we define the desired $k$-lpa $K$. Given an input $u$ of the form $\track{y}{z}$ with $z=z_1z_2\cdots z_{n_1}\track{z_n}{a}$, if $a=rej$, then we enter both an accepting state and a rejecting state with equiprobability. Otherwise, we simulates $M$ on $z_1z_2\cdots z_n$ using only the lower track of the tape (and the upper track is intact).

Since $M'$ guesses all length-$n$ strings $z$ and never invalidates their outcomes $z'$, we obtain  $| f_{+}^R(y)|=| f_{+}^R(y')|$ for any $y,y'\in g(x)$.
It also follows that, for any $y\in g(x)$, since $|f^R(y)| = |f^R_{+}(y)|$,  $\prob_N[\text{ $N$ accepts $y$ }]$ equals $\sum_{z\in f^R(y)} \prob_M[\text{ $K$ accepts $\track{y}{z}$ }]\cdot \frac{1}{|f_{+}^R(y)|}$.
Since $|h(x)|= |g(x)|| f_{+}^R(y)|$ for any $y\in g(x)$, we conclude that $\sum_{y\in g(x)} \prob_N[\text{ $N$ accepts $y$ }]\cdot \frac{1}{|g(x)|}$ coincides with $\sum_{y\in g(x)}\sum_{z\in f^R(y)} \prob_K[ \text{ $K$ accepts $\track{y}{z}$ }]\cdot \frac{1}{|g(x)|| f_{+}^R(y)|}$, which is equal to $\sum_{\track{y}{z}\in h(x)} \prob_K[\text{ $K$ accepts $\track{y}{z}$ }]\cdot \frac{1}{|h(x)|}$. This implies that $L'=L_{h,K}$, as requested.
\end{proof}

By applying Claim \ref{claim:LPA-vs-rtNFAMV} to (*), we obtain $\klpa{(k+2)} \subseteq \klpa{k}\circ (\rtnfamv_t^R \circ \rtnfamv_t)$.
Lemma \ref{two-rtNFAMV-collapse} then leads to $\klpa{k}\circ (\rtnfamv_t^R \circ \rtnfamv_t) \subseteq  \klpa{k}\circ \rtnfamv_t$.
We conclude that $\klpa{(k+2)} \subseteq \klpa{k}\circ \rtnfamv_t$.  Lemma \ref{PCFL-MV-composition} then implies that $\klpa{(k+2)} \subseteq \klpa{k}$. Since $k$ is arbitrary with $k\geq2$, we conclude that $\klpa{\omega} = \klpa{2}$. A similar argument also works to prove that $\klra{\omega} = \klra{2}$ and $\klua{\omega}=\klua{2}$.
\end{proofof}

\section{Closure Properties of Limited Automata}\label{sec:closure_properties}

The next goal of this paper is to study fundamental closure properties of  language families induced by various types of limited automata  formally defined in Section \ref{sec:formal-definition}.
Given a $k$-ary operation $op$ over languages, a language family $\CC$ is said to be \emph{closed under $op$} if, for any $k$ languages $L_1,L_2,\ldots,L_k$ in $\CC$, the language $op(L_1,L_2,\ldots,L_k)$ always belongs to $\CC$. We will explore basic closure properties of $\klra{k}$, $\klbpa{k}$, and $\klda{k}$. Those properties help us show separations of the language families in the end.

\subsection{Case of Deterministic Limited Automata}\label{sec:closure-deterministic}

It is well-known that $\dcfl$ is not closed under the following operations: finite union, finite intersection, reversal, concatenation, $\lambda$-free homomorphism, and Kleene star (see, e.g., \cite{HU79,WW86}). Since $\twolda=\dcfl$ \cite{PP15}, the same non-closure properties hold for $\twolda$.We wish to further expand $\klda{2}$ to a more general $\klda{k}$ despite the fact that $\klda{k}\neq \dcfl$ for all $k\geq3$ \cite{Hib67}.

\begin{proposition}\label{closure-kLDFA}
For any index $k\geq2$, $\klda{k}$ is closed under none of the following operations: finite union, finite intersection, reversal, concatenation, $\lambda$-free homomorphism, and Kleene star.
\end{proposition}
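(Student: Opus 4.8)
The plan is to prove each non-closure assertion by the following uniform strategy: exhibit concrete witness languages that lie in $\klda{k}$, and then show that the result of applying the operation in question escapes $\klda{k}$. The crucial leverage is the separation machinery already developed for these families. For the base case $k=2$, I would simply invoke $\klda{2}=\dcfl$ (from Theorem~\ref{LPFA=PPDA}(3) together with Proposition~\ref{LPFA-blank-skipping}) and cite the classical non-closure results for $\dcfl$ under union, intersection, reversal, concatenation, $\lambda$-free homomorphism, and Kleene star. So the real content lies in lifting these to arbitrary $k\geq3$, where $\klda{k}\neq\dcfl$.

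For the lifting, the natural tool is the family of example languages $L_k$ introduced just before the proof of Theorem~\ref{omega-lda-vs-2lrfa}, which satisfy $L_k\in\klda{(k+1)}-\klda{k}$. First I would handle \textbf{union} and \textbf{intersection} together. The idea is to write a language known to be outside $\klda{k}$ (for instance a separating language from the $\klda{k}\subsetneq\klda{(k+1)}$ hierarchy, suitably marked by disjoint prefixes or endmarkers) as a union (respectively intersection) of two languages each individually in $\klda{k}$. A deterministic $k$-limited automaton can decide each component on its own, but combining them via $\cup$ or $\cap$ forces the machine to track information it cannot retain under the $k$-limitedness budget; the disjointness of the marking alphabets makes the two components independently recognizable while their combination is not. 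For \textbf{reversal}, I would exhibit an $L\in\klda{k}$ whose reverse $L^R$ fails to be in $\klda{k}$, exploiting the inherent left-to-right asymmetry of deterministic limited automata—this is exactly the phenomenon quantified by Lemma~\ref{reduction-from-k+1-to-k}, which trades one extra limitedness level for a reversal. Concretely, $L_k^R$ should witness non-closure because recognizing $L_k$ from the reversed direction would require $k+1$ scans.

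For \textbf{concatenation}, \textbf{$\lambda$-free homomorphism}, and \textbf{Kleene star}, the strategy is again to take a simple language in $\klda{k}$ (often a regular or linear language, or a marked copy of $L_2$) and show that the operation produces a language requiring nondeterministic guessing of a boundary—where one block ends and the next begins—that a deterministic $k$-limited automaton cannot perform within its scan budget. For homomorphism, I would choose $h$ collapsing a marker symbol so that block boundaries become ambiguous; for Kleene star and concatenation, the unbounded or doubled repetition of a counting pattern (as in $a^nb^n$-type languages) defeats determinism. In each case the marking technique keeps the witness languages safely inside $\klda{k}$ while the combined language provably exceeds it.

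The main obstacle, and where I would spend the most care, is verifying the \emph{negative} membership claims—that the constructed combined languages genuinely fall outside $\klda{k}$ for \emph{every} $k\geq3$, not merely for small $k$. This requires a lower-bound argument tailored to $k$-limited determinism, presumably a generalization of Hibbard's crossing-sequence or information-theoretic counting argument (the same style that establishes $L_k\notin\klda{k}$). The positive membership claims are routine automaton constructions, but the lower bounds demand a careful accounting of how many times each cell can be rewritten and what information a deterministic head can therefore carry across block boundaries. I would structure the proof so that all six non-closure results reduce to a single such lower-bound lemma applied to appropriately marked variants of the $L_k$ family, thereby avoiding six independent hard arguments.
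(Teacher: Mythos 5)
Your overall architecture matches the paper's: handle $k=2$ via $\klda{2}=\dcfl$ and the classical $\dcfl$ non-closure results, then lift to $k\geq3$ using the languages $L_k\in\klda{(k+1)}-\klda{k}$. For finite union your plan coincides with the paper's (it splits $L_k$ by the value of the final block $w_1\in\{a,b\}$ into $L_k^{(a)},L_k^{(b)}\in\klda{k}$ with $L_k^{(a)}\cup L_k^{(b)}=L_k\notin\klda{k}$; no extra marking is needed), intersection then follows from union plus closure under complementation, and your reversal argument ($L_k^R\in\klda{k}$ while $(L_k^R)^R=L_k\notin\klda{k}$) is exactly the paper's.

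The genuine gap is in your treatment of concatenation, $\lambda$-free homomorphism, and Kleene star. You propose to construct the witness languages and then \emph{directly prove} that the resulting concatenation/image/star falls outside $\klda{k}$, acknowledging that this demands "a generalization of Hibbard's crossing-sequence or information-theoretic counting argument" for every $k$. That is precisely the hard work the paper never does, and as described your plan does not discharge it. The paper instead argues all three cases indirectly: assume closure under the operation, apply the operation to languages built from $L_k^{(a)}$ and $L_k^{(b)}$, and then intersect the result with a suitable regular language to recover $\#L_k$, contradicting $L_k\notin\klda{k}$. The enabling tool you are missing is the \emph{positive} closure of $\klda{k}$ under intersection with regular languages (Proposition~\ref{intersection_with_regular} in the paper); with it, the only lower bound ever invoked is the already-known $L_k\notin\klda{k}$ from Hibbard, and no new crossing-sequence argument is required. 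For example, for concatenation one takes $L_k^{(ab)}=\#L_k^{(a)}\cup L_k^{(b)}\in\klda{k}$, notes $\#^*L_k^{(ab)}$ would be in $\klda{k}$ under the closure assumption, and intersects with a regular "format" language $L_0$ to obtain $\#L_k\in\klda{k}$, a contradiction; the homomorphism and Kleene-star cases are analogous. Without identifying this intersection-with-regular device, your reduction "to a single lower-bound lemma" leaves that lemma unproven, so the proof as proposed is incomplete for these three operations.
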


\begin{proof}
Since $\twolda =\dcfl$ \cite{PP15} and the well-known non-closure properties of $\dcfl$, the proposition is true for the case of $k=2$. It thus suffices to discuss the case of $k\geq3$. Recall from Section \ref{sec:example-language} the special language $L_k$ and strings $w_1\in\{a,b\}$ and $w_i=a^{n_i}b^{m_i}c^{p_i}$ for arbitrary numbers $n_i,m_i,p_i\geq0$.

[finite union]
Given the language $L_k$, for each symbol $x\in\{a,b\}$, we set  $L_k^{(x)}$ to be $\{w\# x \in L_{k} \mid w=w_2\# w_4\#\cdots \# w_{k-1}\# \cdots \#w_5\# w_3\}$.
Since $x$ is fixed, it is not difficult to show that $L_k^{(a)},L_k^{(b)}\in\klda{k}$.
Since $L_{k}$ equals $L_k^{(a)}\cup L_k^{(b)}$, it follows that $L_{k}^{(a)}\cup L_k^{(b)} \notin \klda{k}$. We thus conclude that $\klda{k}$ is not closed under union.

[finite intersection]
This comes from the case of ``finite union'' together with the fact that $\klda{k}$ is closed under complementation.

[reversal]
Let us consider the reversal $L_k^{R}$ of the language $L_k$. If
$w$ denotes $w_2\#w_4\# \cdots \#w_{k-1}\#\cdots \#w_5\#w_3 \# w_1$, then $w^R$ is $w_1^R \# w_3^R \# \cdots \# w_{k-1}^R \# \cdots \# w_4^R\# w_2^R$.
Let us consider the following machine $N$ on such an input $w^R$.
Firstly, $N$ checks if $w_1=a$ or $w_1=b$. If $w_1=a$, then $N$ moves its tape head to the right into the $w_2^R$-region to check the correctness of $w_2^R$. Continue checking $w_2^R, w_3^R,\cdots, w_{k-1}^R$ sequentially to determine whether $w^R\in L_k^R$. The case of $w_1=b$ can be similarly handled.
Since we can implement $N$ as a $k$-lda, $L_k^R$ belongs to $\klda{k}$. Because of $L_k\notin\klda{k}$, $\klda{k}$ is not closed under reversal.

[concatenation]
The following argument is in essence similar to \cite{KM12}.
Assume toward a contradiction that $\klda{k}$ is closed under concatenation.
Recall $L_k^{(a)}$ and $L_k^{(b)}$ for fixed symbols $a$ and $b$, defined above.
Here, we further set $L'_k = \# L_k$ ($=\#(L_k^{(a)}\cup L_k^{(b)})$). From $L_k\notin \klda{k}$, it follows that $L'_k \notin \klda{k}$.
We further define $L_k^{(ab)} =\#L_k^{(a)}\cup L_k^{(b)}$ and $L_0$ to be the set of strings $\#w_2\#w_4\#\cdots \#w_{k-1}\#\cdots \#w_3\#w_1$, where $w_1\in\{a,b\}$ and $w_i=a^{n_i}b^{m_i}c^{p_i}$ for certain integers $n_i,m_i,p_i\geq0$.
The first symbol of any string of $L_k^{(ab)}$ helps us distinguish between $\# L_k^{(a)}$ and $L_k^{(a)}$, and thus $L_k^{(ab)}$ falls to $\klda{k}$. Furthermore, the language $\#^* L_k^{(a)}$ is in $\klda{k}$. It is also obvious that $L_0\in\reg$.

Note that $L'_k$ equals $\#^*L_k^{(ab)}\cap L_0$. Since $\klda{k}$ is closed under intersection with regular languages by Proposition \ref{intersection_with_regular}, we obtain $L'_k\in\klda{k}$.
This is a clear contradiction since $L'_k$ is not in $\klda{k}$.

[$\lambda$-free homomorphism]
Take the languages $L_k^{(a)}$ and $L_k^{(b)}$ defined above. Let $L''_k=\natural L_k^{(a)}\cup \# L_k^{(b)}$. Since the prefixes $\natural$ and $\#$ respectively correspond to the suffixes $a$ and $b$, it is possible to prove that $L''_k\in \klda{k}$. Consider the  $\lambda$-free homomorphism $h$ defined as $h(\natural)=\#$ and $h(\sigma)=\sigma$ for any $\sigma\in\{a,b,\#\}$. The image $h(L''_k)$ matches the language $\#(L_k^{(a)}\cup L_k^{(b)})$, which coincides with $L'_k$ as discussed above. Since $L'_k$ is not in $\klda{k}$,  $\klda{k}$ cannot be closed under $\lambda$-free homomorphism.

[Kleene star]
Assume that $\klda{k}$ is closed under Kleene star. Recall $L_k^{(ab)}$, $L'_k$, and $L_0$ defined above and let $L_{\#}=\{\#\}\cup L_k^{(ab)}$.
Since $L_k^{(ab)}\in\klda{k}$, $L_{\#}$ is also in $\klda{k}$. Consider the Kleene closure $L_{\#}^*$ of $L_{\#}$, which is in $\klda{k}$ by our assumption. Since $L_0\in\reg$, the language $L=(L_{\#})^*\cap L_0$ belongs to $\klda{k}$ because $\klda{k}$ is closed under intersection with regular languages (in Proposition  \ref{intersection_with_regular}). Since $L$ is equal to $\#L_k^{(a)}\cup \#L_k^{(b)}$, which is precisely $L'_k$,  $L'_k$ must be in $\klda{k}$, a contradiction against
$L'_k\notin\klda{k}$. Therefore, $\klda{k}$ is not closed under Kleene star.
\end{proof}

In contrast to Proposition \ref{closure-kLDFA}, there are a few closure properties that $\klda{k}$ can enjoy. Given a language $L$, its Kleene closure $L^*$ is called \emph{marked} if the strings in $L$ end in special symbols that appear only at the end (see, e.g., \cite{HS10}). Such special symbols are called \emph{markers}. For instance, $(L\#)^*$ is a marked Kleene closure for any language $L$ with $\#$ not used in $L$.
We say that a language family $\CC$ is \emph{closed under marked Kleene star} if the marked Kleene closure of any language in $\CC$ belongs to $\CC$. The closure under \emph{marked concatenation} is similarly defined.

\begin{proposition}\label{intersection_with_regular}
Given $k\geq2$, $\klda{k}$ is closed under the following operations: (i) finite intersection and finite union with regular languages and (ii) marked Kleene star and marked concatenation.
\end{proposition}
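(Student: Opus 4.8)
The plan is to establish the two items by direct machine constructions, each simulating the given $k$-lda while staying within the same per-cell visit budget, so that no auxiliary increase of $k$ is needed.

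For item (i), fix a $k$-lda $M$ recognizing $L$ and a one-way DFA $A$ recognizing a regular language $R$. The single structural fact I would exploit is that the \emph{first} visits of $M$'s tape head to the input cells occur in strict left-to-right order. Since the head starts on the left endmarker and moves exactly one cell per step, the set of already-visited cells is always a contiguous interval $[0,r]$; hence a cell is first-visited precisely when the head steps rightward out of cell $r$ into cell $r+1$, so first visits happen in the order $1,2,3,\dots$ and are always rightward entries. Moreover, by the $k$-limitedness requirement a cell still displays its original symbol (a member of $\Gamma_0=\Sigma$) exactly on its first visit, and carries a symbol of some $\Gamma_j$ with $j\geq1$ on every later visit, since the $\Gamma_i$ are mutually disjoint. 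I would therefore build a product machine $N$ that runs $M$ verbatim while, whenever it reads a symbol of $\Gamma_0$, it also advances a simulated copy of $A$'s state kept in the finite control. Because these reads occur in left-to-right order, $N$ correctly computes the state $A$ reaches after reading the whole input. If $M$ scans the entire input, $N$ knows whether the input lies in $R$ by the time it reaches $\dollar$; if $M$ halts earlier, $N$ first completes $A$'s computation by a single rightward sweep over the still-unread (hence first-visited, hence original) cells. The overall verdict is the conjunction of $M$'s verdict with membership in $R$ for intersection, and their disjunction for union. Since each affected cell carries only $M$'s writes plus at most one extra write during the finishing sweep, $N$ is a $k$-lda, giving $L\cap R,\,L\cup R\in\klda{k}$.

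For item (ii), the marker is what makes the construction fit inside a fixed $k$. Consider first marked concatenation $L_1\#L_2$, with $L_i$ recognized by a $k$-lda $M_i$ and $\#$ occurring only at the block boundary. I would first normalize each $M_i$ so that it enters a halting state only while scanning its right endmarker (a standard modification: upon deciding to halt, sweep right to $\dollar$, carrying the verdict in the state). On input $u\#v$ the machine $N$ simulates $M_1$ on $u$, using $\cent$ as $M_1$'s left endmarker and the boundary symbol $\#$ as its right endmarker; because a limited automaton never moves off its endmarkers, this simulation stays inside the $u$-region together with its two delimiters, and the cells of $u$ receive exactly $M_1$'s visits. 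If $M_1$ rejects, $N$ rejects; if $M_1$ accepts, its head already rests on $\#$, and $N$ launches $M_2$ on $v$ with $\#$ as $M_2$'s left endmarker and the true $\dollar$ as its right endmarker. The boundary markers and endmarkers are treated as frozen symbols of $\Gamma_k$, so they may be re-read arbitrarily often without violating $k$-limitedness, while each block cell is rewritten at most $k$ times; hence $L_1\#L_2\in\klda{k}$. Marked Kleene star is the same construction iterated: on input $u_1\#u_2\#\cdots u_m\#$, a single normalized copy of $M$ is run on each block in turn, the $\#$ on each side of a block serving as the corresponding endmarker, with $N$ restarting $M$ on the next block upon acceptance of the current one, accepting overall once it passes the final marker and sees $\dollar$ (and accepting the empty word outright). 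Each block is processed in one self-contained left-to-right pass, so no cell exceeds $k$ visits, and determinism of $M$, $M_1$, $M_2$ is inherited throughout.

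The step I expect to be the main obstacle is the bookkeeping of the visit count around head reversals, in particular the ``double visit'' charged whenever the head makes a turn. The delicate points are the finishing sweep in (i) and the normalization ``halt only on the right endmarker'' in (ii): I must verify that the extra rightward motion these introduce adds at most the permitted number of writes to each affected cell, noting that cells traversed in such sweeps are either already frozen members of $\Gamma_k$ (re-readable freely) or previously unvisited cells receiving their first write, and that any turn created at the point where a sweep begins does not push a non-frozen cell past its $k$th modification. For $k\geq2$ this is arrangeable, but it is precisely the part requiring care rather than a routine imitation of the classical product and marked-concatenation constructions.
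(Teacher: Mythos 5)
Your proposal is correct and follows essentially the same route as the paper: for (i) a product construction that advances the DFA exactly on first visits, which are detectable because a cell shows a $\Gamma_0$-symbol only until its first scan, and for (ii) the unique parse of a marked string into blocks processed one at a time by the underlying $k$-lda. The only differences are cosmetic --- the paper obtains union with regular languages from $\klda{k}=\co\klda{k}$ rather than by a direct disjunction, and it leaves implicit the early-halt finishing sweep and the promotion of the marker cell through the levels $\Gamma_0,\Gamma_1,\dots,\Gamma_k$ before it can serve as a frozen surrogate endmarker, details you correctly identify and resolve.
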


\begin{proof}
(1) Consider $L=L_1\cap L_2$ for any two languages $L_1\in\klda{k}$ and $L_2\in\reg$. Take a $k$-lda $M_1$ recognizing $L_1$ and a 1dfa $M_2$ for $L_2$.  We then  define a $k$-lpa $N$ for $L$ as follows. On input $x$, start the simulation of $M_1$ on $x$. Whenever $M_1$ visits a new input symbol $\sigma$, $N$ simulates $M_2$ on $\sigma$ without moving its tape head. In this simulation, $N$ keeps both inner states of $M_1$ and $M_2$ internally as its own inner states.
Since $M_1$ is always forced to modify input symbols at the first access, the timing of simulation of $M_2$ is clear to $N$. Thus, $L$ belongs to $\klda{k}$. The closure under finite union with regular languages follows instantly from the equality $\klda{k}=\co\klda{k}$.

(2) Consider a marked Kleene closure $L^*$ for a language $L\in\klda{k}$. We remark that every string in $L^*$ can be parsed uniquely into a series of consecutive strings separated by designated markers.  Using this fact, we can run the following simple algorithm: on each input,  run an underlying $k$-lda by feeding parsed strings one by one until either the $k$-lda enters a rejecting state in the midst of an input or the input is completely processed.
\end{proof}

\subsection{Case of Probabilistic Limited Automata}\label{sec:closure-probabilistic}

We have discussed numerous closure properties of the deterministic model of limited automata in Section \ref{sec:closure-deterministic}.
As a followup to the previous section, let us take a close look at various probabilistic models of limited automata and study their closure properties.
We begin with a simple consequence of Theorem \ref{LPFA-and-PCFL} concerning ``reversal''.

\begin{lemma}
The language families $\klpa{2}$ and $\klra{2}$ are closed under reversal.
\end{lemma}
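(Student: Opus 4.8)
The plan is to reduce the closure statement to a single one-sided inclusion, $\klpa{2}^{R}\subseteq\klpa{2}$, and then obtain the reverse inclusion for free from the fact that language reversal is an involution. The two workhorses will be Lemma~\ref{reduction-from-k+1-to-k}, which expresses $\klpa{k}^{R}\circ\rtnfamv_{t}$ as $\klpa{(k+1)}$, and the collapse $\klpa{\omega}=\twolpa$ from Theorem~\ref{LPFA-and-PCFL}. I would deliberately avoid routing through $\pcfl^{R}=\pcfl$ (Lemma~\ref{PCFL-close-reversal}), because the excerpt only guarantees $\pcfl\subseteq\klpa{2}$ and not equality, so the blank-skipping characterization of $\pcfl$ cannot by itself settle closure of the (possibly larger) family $\klpa{2}$.

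First I would record the chain
\[
\klpa{2}^{R}\subseteq\klpa{2}^{R}\circ\rtnfamv_{t}=\klpa{3}\subseteq\klpa{\omega}=\twolpa=\klpa{2}.
\]
The middle equality is Lemma~\ref{reduction-from-k+1-to-k} with $k=2$; the last two equalities are Theorem~\ref{LPFA-and-PCFL}; and $\klpa{3}\subseteq\klpa{\omega}$ is immediate since $\klpa{\omega}$ is the union of all $\klpa{k}$. The one genuinely new point is the leftmost inclusion $\klpa{2}^{R}\subseteq\klpa{2}^{R}\circ\rtnfamv_{t}$, which I would justify by feeding in the identity function. Concretely, given $A\in\klpa{2}$ realized by a $2$-lpa $M$ with $L(M)=A$, take $f=\mathrm{id}\in\rtnfamv_{t}$ (computed by the trivial copying real-time $1$nfa); then $|f(x)|=1$ and $L_{f,M^{R}}=\{x\mid p_{M^{R},acc}(x)>\tfrac12\}=\{x\mid p_{M,acc}(x^{R})>\tfrac12\}=A^{R}$, so $A^{R}\in\klpa{2}^{R}\circ\rtnfamv_{t}$.

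Having established $\klpa{2}^{R}\subseteq\klpa{2}$, I would apply reversal to both sides: since $\FF\mapsto\FF^{R}$ is monotone and satisfies $(\FF^{R})^{R}=\FF$, this yields $\klpa{2}=(\klpa{2}^{R})^{R}\subseteq\klpa{2}^{R}$, hence $\klpa{2}^{R}=\klpa{2}$. The argument for $\klra{2}$ is word-for-word the same, using the one-sided clauses of Lemma~\ref{reduction-from-k+1-to-k} ($\klra{k}^{R}\circ\rtnfamv_{t}=\klra{(k+1)}$) and of Theorem~\ref{LPFA-and-PCFL} ($\klra{\omega}=\klra{2}$). The only place needing a moment's care is the identity-function step in the one-sided setting: with the threshold $\eta\in(0,\tfrac12)$ used to define $L_{f,M,\eta}$, one checks that a one-sided-error machine accepts every string outside its language with probability $0\le\eta$ and every string inside with probability $>\tfrac12>\eta$, so again $L_{\mathrm{id},M^{R},\eta}=L(M)^{R}$. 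I expect this bookkeeping with $\eta$, together with confirming that the identity inclusion really lands inside $\klpa{2}^{R}\circ\rtnfamv_{t}$ rather than merely $\klpa{2}\circ\rtnfamv_{t}$, to be the main (and only minor) obstacle; everything else is a direct application of the cited lemmas.
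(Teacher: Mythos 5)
Your proposal is correct and follows essentially the same route as the paper's own proof, which likewise derives $\klpa{2}^{R}\subseteq \klpa{3}=\klpa{2}$ from Lemma~\ref{reduction-from-k+1-to-k} and Theorem~\ref{LPFA-and-PCFL} (your extra care with the identity-function step and the involution upgrade to equality only makes explicit what the paper leaves implicit). No gaps.
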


\begin{proof}
It follows from  Theorem \ref{LPFA-and-PCFL} that $\klpa{3} = \klpa{2}$. Since $\klpa{2}^R\subseteq \klpa{3}$, we immediately obtain $\klpa{2}^R \subseteq \klpa{2}$. The same argument works for $\klra{2}$ as well.
\end{proof}

Recall that $\klda{2}\wedge \klda{2}$ is expressed as $\klda{2}(2)$.

\begin{lemma}\label{finite-union-LRFA}
The language family $\klra{2}$ is closed under finite union but not under finite intersection. For the latter property, more strongly, it follows that $\twolda(2)\nsubseteq \klra{2}$.
\end{lemma}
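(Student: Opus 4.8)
The plan is to treat the two assertions separately, establishing the union closure by a direct simulation and deriving the non-closure under intersection from the stronger claim $\twolda(2)\nsubseteq\klra{2}$.

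For closure under (binary, hence finite) union, suppose $L_1,L_2\in\klra{2}$ are witnessed by one-sided-error $2$-lra's $M_1,M_2$ with error bounds $\varepsilon_1,\varepsilon_2<1$. I would build a single $2$-lra $N$ that, upon reading the left endmarker $\cent$, branches with probability $1/2$ into a mode faithfully simulating $M_1$ and with probability $1/2$ into a mode faithfully simulating $M_2$, recording the chosen mode in its finite control and never switching thereafter. Taking the tape alphabets to be disjoint unions $\Gamma^{(1)}_i\sqcup\Gamma^{(2)}_i$ at each level $i\in[2]$ keeps $N$ within the $2$-limitedness requirement, since in a fixed mode the first and second rewrites of each cell land in that machine's level-$1$ and level-$2$ alphabets. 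Because each $M_j$ has one-sided error, $x\notin L_1\cup L_2$ forces $p_{acc,M_1}(x)=p_{acc,M_2}(x)=0$, so $N$ never accepts; and $x\in L_1\cup L_2$ gives $p_{acc,N}(x)\geq\tfrac12\min\{1-\varepsilon_1,1-\varepsilon_2\}>0$. Hence $N$ is a one-sided-error $2$-lra for $L_1\cup L_2$, so $L_1\cup L_2\in\klra{2}$.

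For the non-closure, I would first record that $\{a^nb^nc^n\mid n\geq0\}$ lies in $\twolda(2)$: writing $A=\{a^ib^jc^k\mid i=j\}$ and $B=\{a^ib^jc^k\mid j=k\}$ (both understood as subsets of $a^*b^*c^*$), each of $A,B$ is a $\dcfl$ and hence, by $\dcfl=\klda{2}$, a member of $\klda{2}=\twolda$, while $A\cap B=\{a^nb^nc^n\mid n\geq0\}$; thus this language belongs to $\klda{2}\wedge\klda{2}=\twolda(2)$. The crux is then the inclusion $\klra{2}\subseteq\cfl$. To prove it, take a one-sided-error $2$-lra $M$ recognizing $L$ and form its \emph{support} $2$-lna $N$, retaining exactly the transitions of $M$ that carry positive probability. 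Each such transition obeys the $2$-limitedness requirement, so $N$ is a legitimate $2$-lna, and a computation path of $M$ carries positive probability precisely when it is an admissible path of $N$. Since accepting paths are finite (and $M$ runs in expected polynomial time), $p_{acc,M}(x)>0$ iff $N$ has an accepting path on $x$. The one-sided-error condition ($x\in L\Rightarrow p_{acc,M}(x)>0$ and $x\notin L\Rightarrow p_{acc,M}(x)=0$) then gives $L(N)=L$, whence $L\in\klna{2}=\cfl$ by the known identity $\cfl=\klna{2}$.

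Combining these, $\{a^nb^nc^n\mid n\geq0\}\in\twolda(2)$ but, being non-context-free by the pumping lemma, it is not in $\cfl\supseteq\klra{2}$; therefore $\twolda(2)\nsubseteq\klra{2}$. Finally, since $\klda{2}\subseteq\klra{2}$, the deterministic factors $A,B$ already lie in $\klra{2}$ while $A\cap B\notin\klra{2}$, so $\klra{2}$ fails to be closed under finite intersection. The main obstacle I anticipate is the justification of $\klra{2}\subseteq\cfl$: one must argue carefully that passing to the support machine preserves both the $2$-limitedness structure and the equivalence between ``positive acceptance probability'' and ``existence of an accepting nondeterministic path'', which is exactly where the one-sided-error hypothesis and the expected-polynomial-time convention are used.
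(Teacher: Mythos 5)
Your proposal is correct and follows essentially the same route as the paper: the union closure is the same randomized branching into the component one-sided-error machines, and the non-closure uses the same witness $\{a^nb^nc^n\mid n\geq 0\}$ realized as an intersection of two languages in $\dcfl=\klda{2}$ together with the inclusion $\klra{2}\subseteq\cfl$. The only difference is that you justify $\klra{2}\subseteq\klna{2}$ explicitly via the support automaton, whereas the paper treats this inclusion as immediate from the definitions and invokes $\klra{2}\subseteq\klna{\omega}=\cfl$.
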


\begin{proof}
[finite union]
Let $h\in\nat^{+}$ with $h\geq2$. Take $\{L_i\}_{i\in[h]} \subseteq \klra{2}$ and a series $\{M_i\}_{i\in[h]}$ of $k$-lpa's, each $M_i$ of which recognizes $L_i$ with a constant one-side error bound $\varepsilon_i\in[0,1)$.
Consider the following $2$-lpa $N$. On input $x$, choose an index $i\in[h]$ with equiprobability $1/h$ and simulate $M_i$ on $x$. Along each computation path of $M_i$, if $M_i$ enters an accepting state, then $N$ enters the same accepting state; otherwise, $N$ enters a new rejecting state. It is not difficult to see that, if  $x\in \bigcup_{i\in[h]} L_i$, then $N$ accepts $x$ with probability at least $1/h$ multiplied by $1-\varepsilon_i$; otherwise, $N$ rejects $x$ with probability $1$. Hence, $N$ makes one-sided error. We thus conclude that $\bigcup_{i\in[h]}L_i$ belongs to $\klra{2}$.

[finite intersection]
Here, we want to show that $\twolda(2)\nsubseteq \klra{2}$ because this implies the non-closure of $\klra{k}$ under finite intersection.
Consider two languages $L'_1=\{a^nb^nc^m\mid n,m\geq0\}$ and $L'_2=\{a^nb^mc^m\mid n,m\geq0\}$ in $\dcfl$ ($=\klda{2}$).
It is well-known that $L'_1\cap L'_2 =\{a^nb^nc^n\mid n\geq0\}\notin \cfl$. Since $\klra{2}\subseteq \klna{\omega} = \cfl$,
we then obtain $L'_1\cap L'_2\notin\klra{2}$. Since $L'_1\cap L'_2\in\twolda(2)$, it follows that $\klda{k}(2)\nsubseteq \klra{2}$.
\end{proof}


It is not clear that $\klbpa{k}$ is closed under finite intersection and finite union. However, by sharp contrast to Lemma \ref{finite-union-LRFA}, $\klbpa{k}$ is large enough to include $\klda{k}(2)$; more strongly, $\klpa{k}_{\varepsilon}$(2) is included in $\klbpa{k}$ if $\varepsilon<1/6$.

\begin{lemma}\label{union-intersect-LBPA}
Let $k\geq2$ and $\varepsilon\in[0,1/6)$. For any operator $\diamond\in\{\wedge,\vee\}$, $\klpa{k}_{\varepsilon}\diamond \klpa{k}_{\varepsilon} \subseteq \klbpa{k}$ holds; in particular, $\klda{k}(2)\subseteq \klbpa{k}$.
\end{lemma}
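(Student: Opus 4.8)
The plan is to build a single $k$-lpa $N$ that recognizes $L_1\cap L_2$ (for $\diamond=\wedge$) or $L_1\cup L_2$ (for $\diamond=\vee$), where $L_1,L_2$ are recognized by $k$-lpa's $M_1,M_2$ with two-sided error at most $\varepsilon$. The essential constraint is that $N$ cannot simulate $M_1$ and then $M_2$ on the same tape: each $M_j$ may rewrite a cell up to $k$ times, so a sequential simulation would consume up to $2k$ visits per cell and violate the $k$-limitedness requirement. Hence $N$ must commit, probabilistically, to simulating exactly one of the two machines. Concretely, while scanning $\cent$ at the first step, $N$ uses a probabilistic transition to choose an index $i\in\{1,2\}$ with probability $1/2$ each, records $i$ in its finite control (keeping $\cent$ unchanged and moving right), and thereafter faithfully simulates $M_i$. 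Since only $M_i$ ever writes on the tape, each cell progresses through $M_i$'s levels $\Gamma^{(i)}_1,\dots,\Gamma^{(i)}_k$ and is visited at most $k$ times; thus $N$ is a legitimate $k$-lpa and inherits the expected polynomial runtime of $M_1$ and $M_2$.

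First I would treat $\diamond=\wedge$. The naive rule ``accept iff the chosen $M_i$ accepts'' fails, because in the case where exactly one of $x\in L_1$, $x\in L_2$ holds, the acceptance probability of $N$ is roughly $\frac{1}{2}(p_1+p_2)\approx\frac{1}{2}$, which is not bounded away from $1/2$. The fix is to tilt the final decision with an auxiliary biased coin: if the simulated $M_i$ rejects, $N$ rejects; if $M_i$ accepts, $N$ accepts with a fixed probability $\alpha\in(0,1)$ and rejects otherwise. Writing $p_j=p_{acc,M_j}(x)$, the acceptance probability of $N$ becomes $\frac{\alpha}{2}(p_1+p_2)$. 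When $x\in L_1\cap L_2$ we have $p_1+p_2\ge 2(1-\varepsilon)$, so $N$ accepts with probability at least $\alpha(1-\varepsilon)$; when $x$ fails at least one membership, say $x\notin L_1$, we have $p_1\le\varepsilon$ and $p_2\le 1$, so $N$ accepts with probability at most $\frac{\alpha}{2}(1+\varepsilon)$. Choosing $\alpha$ with $\frac{1}{2(1-\varepsilon)}<\alpha<\frac{1}{1+\varepsilon}$ separates the two cases strictly around $1/2$; such an $\alpha$ exists precisely when $1+\varepsilon<2(1-\varepsilon)$, and the hypothesis $\varepsilon<1/6$ comfortably guarantees a uniform gap $\delta>0$, witnessing bounded error. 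The case $\diamond=\vee$ is dual: $N$ accepts whenever $M_i$ accepts and, when $M_i$ rejects, rejects only with probability $\beta$ (accepting otherwise); the same estimate applied to the rejection probabilities $q_j=1-p_j$ yields the symmetric conditions and the same threshold on $\varepsilon$.

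Finally, the statement $\klda{k}(2)\subseteq\klbpa{k}$ follows at once by taking $M_1,M_2$ to be $k$-lda's, which are error-free, i.e.\ the case $\varepsilon=0<1/6$ of $\diamond=\wedge$. I expect the main obstacle to be conceptual rather than computational: recognizing that the $k$-limitedness budget forbids the obvious ``run both machines'' strategy and therefore forces a single randomized branch, together with the observation that a plain branch cannot achieve bounded error in the ``exactly one membership holds'' case. Once the biased-coin refinement is in place, the only remaining content is the elementary probability estimate above, where care must be taken that the gap $\delta$ is chosen uniformly in $x$ (using that expected-polynomial-time machines halt with probability $1$, so $p_j+q_j=1$ for every input).
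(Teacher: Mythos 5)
Your proposal is correct and follows essentially the same route as the paper: randomly select one of $M_1,M_2$, simulate it, and bias the final accept/reject decision with an auxiliary coin so that the acceptance probability is pushed strictly away from $1/2$ on both sides. The only cosmetic differences are that the paper treats $\diamond=\vee$ directly with the fixed tilt $1/3$ (so that a rejecting run of $M_i$ still accepts with probability $1/3$) and obtains $\wedge$ from the closure of $\klbpa{k}$ under complementation, whereas you treat $\wedge$ directly with a general parameter $\alpha$ and argue $\vee$ symmetrically; your probability estimates match the paper's.
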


\begin{yproof}
It suffices to consider the case of $\diamond=\vee$ because $\klbpa{k}$ is closed under complementation by Lemma \ref{LRFA-not-closed-complement}. Let $\varepsilon\in[0,1/6)$ be any error bound and let $M_1$ and $M_2$ denote two $k$-lpa's working over the same alphabet $\Sigma$ with error probability at most $\varepsilon$. Note that $L=L(M_1)\cup L(M_2)$ is in $\klpa{k}_{\varepsilon}\vee \klpa{k}_{\varepsilon}$.
Let us design a new $k$-lpa $N$ that works as follows. On input $x$, choose an index $i\in\{1,2\}$ uniformly at random, run $M_i$ on $x$.
Along each computation path, if $M_i$ enters an accepting state, $N$ accepts $x$ with probability $1$; otherwise, it accepts $x$ with probability $1/3$ and rejects $x$ with probability $2/3$.
It then follows that $p_{acc,N}(x)$ equals $\frac{1}{2}(p_{acc,M_1}(x)+p_{acc,M_2}(x)) + \frac{1}{6}(p_{rej,M_1}(x)+p_{rej,M_2}(x))$ and $p_{rej,N}(x)$ equals $\frac{1}{3}(p_{rej,M_1}(x)+p_{rej,M_2}(x))$.
Let $\varepsilon'=\varepsilon+\frac{1}{3}$. Note that $\frac{1}{3}\leq \varepsilon'<\frac{1}{2}$ follows from $\varepsilon\in[0,1/6)$.
If $x\in L$, then the error probability of $N$ is at most $2\varepsilon/3 \leq \varepsilon'$ since $p_{rej,M_1}(x)+p_{rej,M_2}(x)\leq 1+\varepsilon$.
In contrast, when $x\notin L$, the error probability of $N$ is at most  $\varepsilon+ \frac{1}{3} \leq \varepsilon'$ since $p_{acc,M_1}(x)+p_{acc,M_2}(x)\leq 2\varepsilon$. Therefore, $L$ is in $\klbpa{k}$.

The second part of the lemma follows immediately from the first part because $\klpa{k}_0 = \klda{k}$ holds.
\end{yproof}

It is, however, unknown that $\klda{k}(d)\subseteq \klbpa{k}$ holds for every index   $d\geq3$.


\begin{lemma}\label{LRFA-not-closed-complement}
For any $k\geq2$, $\klbpa{k}$ is closed under complementation but $\klra{2}$ is not.
\end{lemma}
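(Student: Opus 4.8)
The plan is to handle the two assertions by completely different means: the closure of $\klbpa{k}$ by relabelling halting states, and the failure of closure for $\klra{2}$ by producing an explicit witness whose complement is not even context-free.

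For the first assertion, I would take a $k$-lpa $M$ recognizing a language $L$ with error at most $\varepsilon\in[0,1/2)$ and define $N$ to be $M$ with the roles of $Q_{acc}$ and $Q_{rej}$ interchanged, leaving $\delta$ (and hence the $k$-limitedness requirement and the expected polynomial runtime) intact. Then $N$ is again a bounded-error $k$-lpa, and $p_{acc,N}(x)=p_{rej,M}(x)$, $p_{rej,N}(x)=p_{acc,M}(x)$ on every input $x$. The one point requiring attention is that these identities yield $L(N)=\overline{L}$ only after one observes $p_{acc,M}(x)+p_{rej,M}(x)=1$; this holds because a machine whose expected running time is finite halts with probability $1$ (a positive probability of an infinite path would force infinite expected time). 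Granting this, $x\notin L$ gives $p_{acc,N}(x)=p_{rej,M}(x)\geq 1-\varepsilon>\tfrac12$ and $x\in L$ gives $p_{rej,N}(x)=p_{acc,M}(x)\geq 1-\varepsilon>\tfrac12$, so $N$ recognizes $\overline{L}$ with the same error bound $\varepsilon$, whence $\overline{L}\in\klbpa{k}$.

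For the second assertion, I would put $L=\overline{\{a^nb^nc^n\mid n\geq0\}}$ over the alphabet $\{a,b,c\}$ and write it as the finite union $L=\overline{a^*b^*c^*}\cup\{a^ib^jc^k\mid i\neq j\}\cup\{a^ib^jc^k\mid j\neq k\}$, the last two sets understood as subsets of $a^*b^*c^*$. The first piece is regular, and each of the other two is deterministic context-free (a DPDA pushes on $a$'s, pops on $b$'s, and records whether the two blocks had equal length before scanning the $c$'s, and symmetrically for the $b$/$c$ comparison). Hence all three pieces lie in $\dcfl=\klda{2}\subseteq\klra{2}$, and by the closure of $\klra{2}$ under finite union (Lemma~\ref{finite-union-LRFA}) we obtain $L\in\klra{2}$. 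On the other hand $\overline{L}=\{a^nb^nc^n\mid n\geq0\}$ is not context-free by the usual pumping argument, and since the stated inclusions together with Hibbard's theorem give $\klra{2}\subseteq\klna{2}\subseteq\klna{\omega}=\cfl$, it follows that $\overline{L}\notin\klra{2}$. Therefore $\klra{2}$ is not closed under complementation.

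The main obstacle is really confined to the first part: the state-swap argument is valid only once non-halting computation paths are ruled out, so the step I would be most careful to spell out is the equality $p_{acc,M}(x)+p_{rej,M}(x)=1$ coming from the expected-polynomial-time convention. The second part is then a routine assembly of the already-proved union closure of $\klra{2}$ and the embedding $\klra{2}\subseteq\cfl$, the only creative choice being the selection of $\{a^nb^nc^n\mid n\geq0\}$ as the target complement.
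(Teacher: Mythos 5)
Your proof is correct. For the first assertion you use exactly the paper's argument (swapping $Q_{acc}$ and $Q_{rej}$), and your extra remark that $p_{acc,M}(x)+p_{rej,M}(x)=1$ must be justified via the expected-polynomial-time convention is a legitimate point of care that the paper glosses over with ``it is not difficult to show.'' For the second assertion you diverge from the paper: you exhibit an explicit witness, showing $\overline{\{a^nb^nc^n\mid n\geq0\}}\in\klra{2}$ by decomposing it into a regular piece and two $\dcfl$ pieces and invoking the union closure of $\klra{2}$, while $\{a^nb^nc^n\mid n\geq0\}\notin\cfl\supseteq\klra{2}$. The paper instead argues abstractly: if $\klra{2}=\co\klra{2}$, then union closure would force intersection closure by De Morgan, contradicting the second half of Lemma~\ref{finite-union-LRFA}. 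The two routes rest on the same two ingredients (union closure of $\klra{2}$ and the non-context-freeness of $\{a^nb^nc^n\mid n\geq0\}$, which also underlies the paper's $\twolda(2)\nsubseteq\klra{2}$), so neither is more general; yours has the minor virtue of producing a concrete language in $\klra{2}\setminus\co\klra{2}$, at the cost of having to verify the three-piece decomposition, whereas the paper's contradiction argument is shorter given that Lemma~\ref{finite-union-LRFA} is already in hand.
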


\begin{yproof}
It is not difficult to show that $\klbpa{k}=\co\klbpa{k}$  for all  indices  $k\geq2$ by swapping between accepting states and rejecting states.
By Lemma \ref{finite-union-LRFA}, $\klra{2}$ is closed under finite union.  If $\klra{2} =\co\klra{2}$, then $\klra{2}$ must be closed under finite intersection. This contradicts the second part of Lemma \ref{finite-union-LRFA}.
\end{yproof}


With the use of closure/non-closure properties of various limited automata discussed so far, we can obtain the following
class separations.

\begin{theorem}\label{separation-all}
For any $k\geq2$, $\klda{k}\subsetneqq \klra{2} \subsetneqq \klbpa{k}$.
\end{theorem}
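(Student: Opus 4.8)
The plan is to prove the two strict inclusions $\klda{k}\subsetneqq\klra{2}$ and $\klra{2}\subsetneqq\klbpa{k}$ separately, assembling results already established rather than building new automata. The key preliminary observation I would record is that the one-sided-error hierarchy collapses uniformly: Theorem~\ref{LPFA-and-PCFL} gives $\klra{\omega}=\klra{2}$, and since $\klra{2}\subseteq\klra{k}\subseteq\klra{\omega}$ for every $k\geq2$, this forces $\klra{k}=\klra{2}$ for all such $k$. This single identity underlies both halves of the proof.

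For the left inclusion, I would combine the definitional containment $\klda{k}\subseteq\klra{k}_{\varepsilon}\subseteq\klra{k}$ with $\klra{k}=\klra{2}$ to obtain $\klda{k}\subseteq\klra{2}$. Strictness is then immediate from Theorem~\ref{omega-lda-vs-2lrfa}, which supplies a language $L\in(\klra{k}\cap\klda{(k+1)})\setminus\klda{k}$; since $L\in\klra{k}=\klra{2}$ while $L\notin\klda{k}$, the witness $L$ separates $\klda{k}$ from $\klra{2}$.

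For the right inclusion, the containment follows by combining $\klra{2}=\klra{k}$ with Lemma~\ref{LRFA-to-LBPFA} (which gives $\klra{k}\subseteq\klbpa{k}$), so $\klra{2}\subseteq\klbpa{k}$. For strictness I would argue by contradiction through closure under complementation: assume $\klra{2}=\klbpa{k}$. By Lemma~\ref{LRFA-not-closed-complement} we have $\klbpa{k}=\co\klbpa{k}$, so the assumed equality would force $\klra{2}=\co\klra{2}$, i.e.\ $\klra{2}$ closed under complementation, contradicting the second assertion of Lemma~\ref{LRFA-not-closed-complement}. Hence $\klra{2}\neq\klbpa{k}$, and with the inclusion already in hand we conclude $\klra{2}\subsetneqq\klbpa{k}$. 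A concrete witness is also available if a direct separation is preferred: $\{a^nb^nc^n\mid n\geq0\}$ lies in $\twolda(2)\subseteq\klda{k}(2)\subseteq\klbpa{k}$ by Lemma~\ref{union-intersect-LBPA}, yet lies outside $\klra{2}$ by Lemma~\ref{finite-union-LRFA}.

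Because every ingredient is already in place, there is no deep technical hurdle; the point demanding the most care is establishing $\klra{k}=\klra{2}$ uniformly in $k$, since both inclusions silently depend on it. Without the collapse theorem one would only relate families of the same index, and the cross-index statements $\klda{k}\subseteq\klra{2}$ and $\klra{2}\subseteq\klbpa{k}$ would not follow.
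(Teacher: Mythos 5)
Your proof is correct, and your treatment of the second strict inclusion ($\klra{2}\subsetneqq\klbpa{k}$ via the complementation asymmetry of Lemma~\ref{LRFA-not-closed-complement}) coincides with the paper's. Where you diverge is the first separation: the paper derives $\klda{k}\neq\klra{2}$ by the same closure-property device, namely $\klda{k}=\co\klda{k}$ versus $\klra{2}\neq\co\klra{2}$, so a single lemma drives both halves; you instead invoke Theorem~\ref{omega-lda-vs-2lrfa} to produce an explicit language in $\klra{k}\cap\klda{(k+1)}$ outside $\klda{k}$ and push it into $\klra{2}$ via the collapse $\klra{k}=\klra{2}$. Both routes are sound and both rest on results already proved; yours has the advantage of exhibiting a concrete witness (and even locating it inside $\klda{(k+1)}$), while the paper's is more uniform and avoids any appeal to Theorem~\ref{omega-lda-vs-2lrfa}. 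You are also right to flag that the ``obvious'' inclusions $\klda{k}\subseteq\klra{2}$ and $\klra{2}\subseteq\klbpa{k}$, which the paper dismisses in one line, genuinely need the cross-index identity $\klra{k}=\klra{2}$ from Theorem~\ref{LPFA-and-PCFL} together with Lemma~\ref{LRFA-to-LBPFA}; making that dependence explicit is a small but real improvement in rigor over the paper's phrasing.
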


\begin{proof}
All simple inclusions obviously hold. We need to show the remaining two separations.
Note that $\klda{k}=\co\klda{k}$ for any $k\geq1$. Since $\klra{2}\neq \co\klra{2}$ by Lemma \ref{LRFA-not-closed-complement}, it follows that $\klda{k}\neq \klra{2}$. Similarly, from $\klbpa{k}=\co\klbpa{k}$ by Lemma \ref{LRFA-not-closed-complement}, we obtain $\klra{2}\neq \klbpa{k}$.
\end{proof}

\section{A Brief Discussion and Future Directions}\label{sec:discussion}

Since Hibbard \cite{Hib67} had introduced his model of \emph{scan limited automata} in 1967, the model seemed to have been lost to oblivion for a few decades until Pighizzini and Pisoni \cite{PP14} rediscovered it in 2014.
We have followed their reformulation as \emph{$k$-limited automata} and have examined the behaviors of various models of limited automata, in particular, one-sided error, bounded-error, and unbounded-error probabilistic $k$-limited automata.  We then have proven numerous properties, including closure/non-closure properties as well as collapses and separations of language families induced by these machine models.

For an expected further study on limited automata, we wish to list a few interesting and important open problems, whose solutions would lead to a significant promotion of the better understanding of the limited automata.

\begin{enumerate}\vs{-2}
  \setlength{\topsep}{-2mm}%
  \setlength{\itemsep}{1mm}%
  \setlength{\parskip}{0cm}%

\item We already know  that $\klna{\omega}=\klna{2}$ \cite{Hib67}, $\klua{\omega}=\klua{2}$, $\klra{\omega}=\klra{2}$, and $\klpa{\omega}=\klpa{2}$ (Theorem \ref{LPFA-and-PCFL}).
    This fact means that, e.g.,  $\{\klna{k}\mid k\in\nat^{+}\}$ and $\{\klpa{k}\mid k\in\nat^{+}\}$ do not form a true (infinite)  hierarchy. By contrast, as demonstrated by Hibbard, $\{\klda{k}\mid k\in\nat^{+}\}$ is indeed a true hierarchy within $\cfl$. Does $\{\klbpa{k}\mid k\in\nat^{+}\}$ form a true  hierarchy within $\klpa{2}$? In other words, does $\klbpa{k}\neq \klbpa{(k+1)}$ hold for any $k\geq2$?

\item We have demonstrated  in Lemma \ref{union-intersect-LBPA} that $\klda{k}(2)\subseteq \klbpa{k}$ for any $k\geq2$. Is it possible to expand this inclusion to $\klda{k}(d)\subseteq \klbpa{k}$ for any $d\geq2$? If this is not the case, we can instantly conclude that $\klbpa{k}$ is not closed under intersection.

\item Theorem \ref{LPFA-and-PCFL} implies that $\klbpa{\omega}\subseteq \klpa{\omega}=\klpa{2}$. We are wondering if this inclusion is indeed proper, namely, $\klbpa{\omega}\subsetneqq \klpa{2}$. If $\{\klbpa{k}\mid k\in\nat^{+}\}$ is proven to be a true hierarchy, then $\klpa{2}$ will turn out to have a richer structure than what we currently know.

\item The blank-skipping property has been used in Section \ref{sec:application-BSP} to make a bridge between pushdown automata and 2-limited automata. In particular, we have shown in Corollary \ref{RCFL-BPCFL-PCFL} that $\bpcfl=\mathrm{bs}\mbox{-}\klbpa{2}$ and $\pcfl = \mathrm{bs}\mbox{-}\klpa{2}$. Unfortunately, unlike $\klda{2}$ and $\klua{2}$ (as in Proposition \ref{LPFA-blank-skipping}),
    we do not know that $\klbpa{2}$ and $\klpa{2}$ both satisfy the blank skipping property.
    Is it true that $\klbpa{k}= \mathrm{bs}\mbox{-}\klbpa{k}$ as well as $\klpa{2} = \mathrm{bs}\mbox{-}\klpa{2}$? If this is the case, then we instantly obtain $\klbpa{2}=\bpcfl$ and $\klpa{2}= \pcfl$.

\item It is well-known that $\dcfl$ is not closed under intersection, that is, $\dcfl\neq \dcfl(2)$. More strongly, $\{\dcfl(d)\mid d\in\nat^{+}\}$ forms an infinite hierarchy over $\dcfl$ \cite{LW73} (reproven in \cite{Yam20}). Since $\klda{k}$ extends $\dcfl$, we wonder if,  for each fixed $k\geq2$,  $\{\klda{k}(d)\mid d\in\nat^{+}\}$ also forms an infinite hierarchy over $\klda{k}$. Prove or disprove this statement.

\item The definition of limited automata given in Section \ref{sec:formal-definition} uses two special endmarkers, $\cent$ and $\dollar$. As shown in \cite{Yam21}, these endmarkers can be removed from many variants of pushdown automata without compromising the computational power. It is possible to introduce the model of  ``no-endmarker'' limited automata by eliminating the endmarkers from the original definition of the limited automata. Are those no-endmarker limited automata equivalent in recognition power to the original limited automata with the endmarkers?

\item A \emph{Las Vegas pushdown automaton} is not allowed to err but is permitted to enter a special inner state of ``I don't know'' beside accepting and rejecting states. Such Las Vegas pushdown automata were extensively discussed in \cite{HS10}. A  similar motivation can drive us to explore the basic properties of Las Vegas  probabilistic models of limited automata.

\item In the past literature, much attention has been paid to \emph{unary languages}, which are languages over single-letter alphabets, because of their distinctly unique roles in automata theory. It is known that unary 1npda's are equivalent in power to unary 1dfa's \cite{GR62}. Ka\c{n}eps, Geidmanis, and Freivalds \cite{KGF97} argued that  unary 1ppda's with bounded-error probability are no more powerful than 1dfa's. Can we extend these results to unary $k$-lpa's with bounded-error probability?

\item Hromkovi\v{c} and Schnitger \cite{HS10} discussed, using our notation, two practical language families: $\bigcap_{\varepsilon\in(0,1/2)} \oneppda_{\varepsilon}$ and $\bigcap_{\varepsilon\in(0,1)} \onerpda_{\varepsilon}$. Similarly, we can define $\klbpa{k}^* = \bigcap_{\varepsilon\in(0,1/2)} \klpa{k}_{\varepsilon}$ and $\klra{k}^* = \bigcap_{\varepsilon\in(0,1)}\klra{k}_{\varepsilon}$. It is interesting to study the power and limitation of these language families in comparison to $\klbpa{k}$ and $\klra{2}$.
\end{enumerate}



\let\oldbibliography\thebibliography
\renewcommand{\thebibliography}[1]{%
  \oldbibliography{#1}%
  \setlength{\itemsep}{0pt}%
}
\bibliographystyle{plain}

\begin{thebibliography}{91}
{\small

\bibitem{DS92}
C. Dwork and L. Stockmeyer. Finite verifiers I: the power of interaction. Journal of the ACM 39 (1992) 800--828.

\bibitem{Fre81b}
R. Freivalds. Probabilsitic two-way machines.
In the Proceedings of the 10th Symposium on Mathematical Foundations of Computer Science (MFCS'81),  Lecture Notes in Computer Science, Springer, vol. 118, pp. 33--45, 1981.

\bibitem{GR62}
S. Ginsburg and H. G. Rice. Two families of languages related to ALGOL. Journal of the ACM 9 (1962) 350--371.

\bibitem{Her97}
C. Herzog. Pushdown automata with bounded nondeterminism and bounded ambiguity. Theoretical Computer Science 181 (1997) 141--157.

\bibitem{Hen65}
F. C. Hennie. One-tape, off-line Turing machine computations. Information and Control 8 (1965) 553--578.

\bibitem{Hib67}
T. N. Hibbard. A generalization of context-free determinism.
{Information and Control} 11 (1967) 196--238.

\bibitem{HU79}
J. E. Hopcroft and J. D. Ullman. Introduction to Automata Theory, Languages, and Computation. Addison-Wesley (1979)

\bibitem{HS10}
J. Hromkovi\v{c} and G. Schnitger. On probabilistic pushdown automata. Information and Computation 208 (2010) 982--995.

\bibitem{KF90}
J. Ka\c{n}eps and R. Freivalds. Minimal nontrivial space complexity of probabilistic one-way Turing machines. In the Proceedings of the 15th Symposium on Mathematical Foundations of Computer Science (MFCS'90),  Lecture Notes in Computer Science, Springer, vol. 452, pp. 355--361, 1990.

\bibitem{KGF97}
J. Ka\c{n}eps, D. Geidmanis, and R. Freivalds. Tally languages accepted by Monte Carlo pushdown automata. In the Proceedings of the 1st International Workshop on Randomization and Approximation Techniques in Computer Science (RANDOM'97), Lecture Notes in Computer Science, Springer, vol. 1269, pp. 187--195, 1997.

\bibitem{KM12}
M. Kutrib and A. Malcher. Reversible pushdown automata. Journal of  Computer System Science 78 (2012)  1814--1827.

\bibitem{KM??}
M. Kutrib and M. Wendlandt. Reversible k-limited automata.
Fundamenta Informaticae 155 (2017) 31--58.

\bibitem{LP81}
H. R. Lewis and C. H. Papadimitriou. Elements of the Theory of Computation.   Prentice-Hall, 1st edition (1981).

\bibitem{LW73}
L. Y. Liu and P. Weiner. An infinite hierarchy of intersections of context-free languages. Mathematical Systems Theory 7 (1973) 185--192.

\bibitem{MO98}
I. I. Macarie and M. Ogihara. Properties of probabilistic pushdown automata. Theoretical Computer Science 207 (1998) 117--130.

\bibitem{PP14}
G. Pighizzini and A. Pisoni. Limited automata and regular languages. International Journal of Foundations of Computer Science 25 (2014)  897--916.

\bibitem{PP15}
G. Pighizzini and A. Pisoni.  Limited automata and context-free languages. Fundamenta Informaticae 136 (2015) 157--176.

\bibitem{PP17}
G. Pighizzini and L. Prigioniero. Limited automata and unary languages. In the Proceedings of the 21st International Conference on Developments in Language Theory (DLT 2017), Lecture Notes in Computer Science, Springer, vol. 10396, pp. 308--319, 2017.

\bibitem{RA00}
K. Reinhardt and E. Allender. Making nondeterminism unambiguous. SIAM Journal on Computing 29 (2000) 1118--1131.

\bibitem{SH85}
R. E. Stearns and H. B. Hunt III. On the equivalence and containment problems for unambiguous regular expressions, regular grammars, and finite automata. SIAM Journal on Computing 14 (1985) 598--611.

\bibitem{TYL10}
K. Tadaki, T. Yamakami, and J. C. H. Lin. Theory of one-tape linear-time Turing machines. {Theoretical Computer Science} 411 (2010) 22--43.
An extended abstract appeared in the Proceedings of the 30th SOFSEM Conference on Current Trends in Theory and Practice of Computer Science (SOFSEM 2004), Lecture Notes in Computer Science, Springer, vol. 2932, pp. 335--348, 2004.

\bibitem{WW86}
K. W. Wagner and G. Wechsung. Computational Complexity. D. Reidel Publishing Company, Dordrecht (1986)

\bibitem{Wan92}
J. Wang. A note on two-way probabilistic automata. Information Processing Letters 43 (1992)  321--326.

\bibitem{Yam08}
T. Yamakami. Swapping lemmas for regular and context-free languages. Manuscript, arXiv:0808.4122, 2008.

\bibitem{Yam14a}
T. Yamakami.  Oracle pushdown automata, nondeterministic reducibilities, and the  hierarchy over the family of context-free languages. In the Proceedings of
the 40th International Conference on Current Trends in Theory and Practice of Computer Science (SOFSEM 2014)Lecture Notes in Computer Science, Springer, vol. 8327, pp. 514--525, 2014. A complete version appeared at arXiv:1303.1717 .

\bibitem{Yam14b}
T. Yamakami.  Structural complexity of multi-valued partial functions computed by nondeterministic pushdown automata. In the Proceedings of the 15th Italian Conference on Theoretical Computer Science (ICTCS 2014), CEUR Workshop Proceedings, vol. 1231, pp. 225--236, 2014.

\bibitem{Yam14c}
T. Yamakami. Not all multi-valued partial CFL functions are refined by single-valued functions (extended abstract). In the Proceedings of the 8th IFIP TC 1/WG 2.2 International Conference on Theoretical Computer Science (IFIP TCS 2014), Lecture Notes in Computer Science, Springer, vol. 8705, pp. 136--150, 2014.

\bibitem{Yam16}
T. Yamakami. Pseudorandom generators against advised context-free languages. Theoretical Computer Science 613 (2016) 1--27.

\bibitem{Yam17}
T. Yamakami.  One-way bounded-error probabilistic pushdown automata and Kolmogorov complexity (preliminary report). In the Proceedings of the 21st International Conference on Developments in Language Theory (DLT 2017), Lecture Notes in Computer Science, Springer, vol. 10396, pp. 353--364, 2017.

\bibitem{Yam20}
T. Yamakami. Intersection and union hierarchies of deterministic context-free languages and pumping lemmas. In the Proceedings of the 14th International Conference on Language and Automata Theory and Applications (LATA 2020), Lecture Notes in Computer Science, Springer, vol. 12038, pp. 341--353, 2020.

\bibitem{Yam21}
T. Yamakami. The no endmarker theorem for one-way probabilistic pushdown automata. Manuscript, arXiv:2111.02688, 2021.

\bibitem{Yu89}
S. Yu. A pumping lemma for deterministic context-free languages. Information Processing Letters 31 (1989) 47--51.

}
\end{thebibliography}


\end{document}